\newtheorem{theorem}{Theorem}[section]
\newtheorem{lemma}[theorem]{Lemma}
\theoremstyle{definition}
\newtheorem{definition}{Definition}[section]
\newtheorem{example}{Example}[section]
\theoremstyle{remark}
\newtheorem{remark}{Remark}
\journal{Logic Journal of the IGPL}
\begin{document}

\begin{frontmatter}

\title{A Complete Logic for Database Abstract State Machines}

\author[1]{Qing Wang}
\author[2]{Flavio Ferrarotti}
\author[2]{Klaus-Dieter Schewe}
\author[2]{Loredana Tec}
 
\address[1]{Research School of Computer Science, The Australian National University, Australia, \textrm{qing.wang@anu.edu.au}}
\address[2]{Software Competence Center Hagenberg, Austria, \textrm{[flavio.ferrarotti$\mid$loredana.tec]@scch.at},\textrm{kdschewe@acm.org}}



\begin{abstract}
In database theory, the term \emph{database
transformation} was used to refer to a unifying treatment for
computable queries and updates.
Recently, it was shown that non-deterministic
database transformations can be captured exactly by a variant of
ASMs, the so-called Database Abstract State Machines (DB-ASMs). In
this article we present a logic for DB-ASMs,  extending the logic of Nanchen and St\"ark for ASMs. In particular, we
develop a rigorous proof system for the logic for DB-ASMs, which is
proven to be sound and complete. The most
difficult challenge to be handled by the extension is a proper
formalisation capturing non-determinism of database transformations
and all its related features such as consistency,
update sets or multisets associated with DB-ASM rules. As the database part of a state of database transformations is a
finite structure and DB-ASMs are restricted by allowing quantifiers only over
the database part of a state, we resolve this problem by taking update sets explicitly into the logic, i.e. by using an additional modal operator $[X]$, where $X$ is interpreted as an update set $\Delta$
generated by a DB-ASM rule. The DB-ASM logic provides a powerful
verification tool to study properties of database transformations.\\[1ex]

\noindent
{\normalsize\textbf{Acknowledgements.}} 
The research reported in this paper results
    from the project \textit{Behavioural Theory and Logics for
      Distributed Adaptive Systems} supported by the \textbf{Austrian
      Science Fund (FWF): [P26452-N15]}. It was further supported by the Austrian Research Promotion Agency (FFG) through the COMET funding for the Software Competence Center Hagenberg. 

\end{abstract}
\end{frontmatter}

\section{Introduction}

Queries and updates are two basic types of computations in databases, which
capture the capability to retrieve and update data, respectively. In database theory, database transformations refer to transforming database instances into other database instances, including both queries and updates. More formally, a database transformation
was defined as a binary relation on database instances over an input
schema and database instances over an output schema which must satisfy the four criteria:
well-typedness, effective computability, genericity and
functionality \cite{AbiteboulIQL89,AbiteboulDatalogExtension}.

In the past 50 years, a main research topic of database transformations was to characterise different subclasses of database transformations in terms of their logical or algebraical properties. Although prior
studies have yielded fruitful results for queries, e.g., computable queries
\cite{ChandraCompleteness}, determinate transformations
\cite{AbiteboulIQL89}, semi-deterministic transformations
\cite{VandenBusscheSemideterminism}, constructive transformations
\cite{VandenBusscheThesis,VandenBusscheCompleteness}, etc., extending these results to updates is by no means
straightforward. In \cite{VandenBusscheNondeterministic} {Van den Bussche} and {Van Gucht}
conjectured that queries and updates may have some fundamental
distinctions, and raised the question of whether there exists a theoretical
framework that can unify both queries and updates.

In many real-life database applications, database queries and
updates often turn out to have intimate connections. For instance, a relation may be updated by using matched tuples from another
(possibly the same) relation on a join operation, tuples of a relation may be deleted based on
some selection criteria, etc. On one side, such connections between
queries and updates justify the fact that the embedding of queries
in updates is supported as a fundamental feature in all major
commercial relational database systems. On the other side, it brings
up considerable concerns on the theoretical foundations for database
updates. In a sharp contrast to the elegant and fruitful theory for
database queries, the theoretical foundations for database updates,
or more generally, for a unifying framework encompassing
both queries and updates are still lacking.

Recently, the rising trend of NoSQL database applications has further increased the importance of studying database transformations as a unifying framework that encompasses both queries and updates. This is because
these application areas are often complicated and
their data is schema-less, heterogeneous, redundant,
inconsistent and frequently modified. As a result, the query mechanisms used in NoSQL database applications are more similar to traditional programming using programming languages, such as C or Java, rather than traditional queries using SQL. The distinction between database queries and updates has been considerably lessened. Thus, in order to rigorously manage and use data in these database applications, a theoretical framework for database transformations is required.

Nevertheless, formalising a unifying framework
  for database transformations is challenging. As reported in
  \cite{VandenBusscheNondeterministic}, the non-determinism of
  database transformations is a difficult problem. Particularly, in the presence of unique identifiers, the representation of unique
identifiers is irrelevant and only the interrelationship between
objects represented by them matters \cite{AbiteboulIQL89}. Indeed, the degree of
non-determinism is one of critical factors which determine the upper bound of the
expressiveness of associated languages. Apart from the issue of non-determinism,
it has also been perceived that there is a mismatch between the
declarative semantics of query languages and the operational
semantics of update languages. Unlike many query languages which can
describe what a program should accomplish rather than specify how to
accomplish, update languages usually require explicit specifications
of operations and control flow. An obvious question is ``how can different semantics be integrated within one
database language for database transformations?".

We address these challenges by developing a theoretical framework for database transformations using Abstract State Machines and studying the logical foundations of database transformations in such a theoretical framework.
Abstract State Machine (ASM) is a universal model of computation introduced by Gurevich in his well-known attempts to formalise different notions of algorithms
\cite{gurevich:tocl2000,gurevich2004abstract}.
Gurevich's sequential ASM thesis has shown that sequential ASMs can exactly capture all the
sequential algorithms that are stipulated by three postulates \cite{gurevich:tocl2000}.
As ASMs are essentially state machines operating on states that are first-order structures, dynamics between states can be rigorously captured by the concepts of updates, update sets or update multisets, as widely acknowledged in the ASM research community \cite{boerger:2003,gurevich2004abstract}. The sequential ASM thesis also sheds light into the way of establishing
a unifying theoretical framework for database transformations that contain queries and updates. Intuitively, this is based on the observation that the class of computations
described by database transformations can be formalised as a
class of ASMs respecting database principles. This has led to the development of \emph{Database Abstract State
Machines} (DB-ASMs), a variant of ASMs, as a model of computation
for database transformations, and the DB-ASM thesis which has proven that DB-ASMs
satisfy the five postulates stipulated for database transformations, and all
computations stipulated by the postulates for database
transformations can also be simulated step-by-step by a \emph{behaviourally equivalent}
DB-ASM \cite{schewe:Axiomatization}. This in a sense establishes the database analogue of
Gurevich's sequential ASM thesis \cite{gurevich:tocl2000}.


\medskip\noindent \textbf{Contributions}\hspace{0.2cm} In this paper, we study the logical foundations of the
DB-ASM thesis. Our contributions are as follows.

\medskip
Firstly, we characterise states of DB-ASMs by the
logic of meta-finite structures \cite{graedel:infcomp1998} which is
then incorporated into a logic for DB-ASMs. In database theory,
states of a database transformation are predominantly regarded as finite structures. However, when applying algorithmic
operations to tackle database-related problems, the finiteness condition on
states often turns out to be too restrictive for several reasons: (1)
database transformations may deal with new elements from countably
infinite domains, e.g. counting queries produce natural numbers
even if no natural numbers occur in a finite structure; (2) finite structures may have invariant properties that possibly have
infinite elements implied in satisfying them, such as, numerical
invariants of geometric objects or database constraints; (3) each
database transformation either implicitly or explicitly lives in a
background that supplies all necessary information relating to
computations and usually exists in the form of infinite structures.
Thus, we consider a state of database transformation as a
meta-finite structure consisting of (i) a database part, which is a
finite structure, (ii) an algorithmic part, which may be an infinite
structure, and (iii) a finite number of bridge functions between these two parts.
Characterising states of database transformations using the logic
of meta-finite structures enables us to
reason about aggregate computations commonly existing in database
applications.

Our second contribution is the handling of bounded non-determinism
in the logic of DB-ASMs. This was also the most challenging problem
we faced in this work. It is worth to mention that
non-deterministic transitions manifest themselves as a very
difficult task in the logical formalisation for ASMs. Nanchen and St\"ark analysed potential problems to several
approaches they tried by taking non-determinism into consideration
and concluded \cite{RobertLogicASM}:

\begin{quote}
    ``Unfortunately, the formalisation of consistency cannot be applied directly to non-deterministic ASMs. The formula Con$(R)$ (as defined in Sect. 8.1.2 of \cite{boerger:2003}) expresses the property that the \emph{union of all possible} update sets of $R$ in a given state is consistent. This is clearly not what is meant by consistency. Therefore, in a logic for ASMs with \textbf{choose} one had to add Con$(R)$ as an atomic formula to the logic."
\end{quote}

However, we observe that this conclusion is not necessarily true, as
finite update sets can be made explicit in the formulae of a logic
to capture non-deterministic transitions. In doing so, the
formalisation of consistency defined in \cite{RobertLogicASM} can
still be applied to such an explicitly specified finite update set $U$
yielded by a rule $r$ in the form of the formula con$(r,X)$ where the second-order variable $X$ is interpreted by $U$, as will be discussed in Section \ref{sub:Consistency}. We can thus solve
this problem by adding the modal operator $[X]$ for an
update set generated by a DB-ASM rule. In doing so, DB-ASMs are restricted to have quantifiers only over the
database part of a state which is a finite structure, and consequently update sets
(or multisets) yielded by DB-ASM rules are restricted to be
finite.
Hence, the logic for DB-ASMs is empowered to capture
non-deterministic database transformations.

Our third contribution is the development of a proof system for the
logic for DB-ASMs, which extends the proof system for the logic for
ASMs \cite{RobertLogicASM} in several aspects:

 \begin{itemize}

   \item DB-ASMs can collect updates yielded in parallel computations under the
multiset semantics, i.e. update multisets, then aggregate
updates in an update multiset to an update set by applying so-called \emph{location
operators}. Our proof system can capture this by incorporating the
axioms for both the predicate of update multisets and the predicate
of update sets. The axioms also specify the interaction between
update multisets and update sets in relating to DB-ASM rules.

   \item A DB-ASM rule may be associated
with different update sets. Applying different update sets
may lead to different successor states to the current
state. As the logic for DB-ASMs includes formulae denoting explicit
update sets and update multisets, and second-order variables that are
bound to update sets or update multisets, our proof system allows us to
reason about the interpretation of a formula over all successor
states or over some successor state after applying a DB-ASM rule
over the current state.

 \item In addition to capturing the consistency of an update set yielded by a DB-ASM rule, our proof system also develops two notions of consistency for a DB-ASM rule (i.e. weak version and strong version). When a DB-ASM rule is deterministic, these two notions coincide.

 \end{itemize}

Our last contribution is a proof of the completeness of the logic for
DB-ASMs. Due to the importance of non-determinism for enhancing
   the expressive power of database transformations and for specifying database transformations at flexible levels of abstraction, DB-ASMs take into account choice rules. Consequently, the logic for DB-ASMs has to handle
   all the issues related to non-determinism which have been identified as the source of problems in the completeness proof of the logic for ASMs \cite{RobertLogicASM, [BS03]}. Nevertheless, we prove that we can use a Henkin semantics for the required (in our approach) second-order quantification, and thus despite of the inclusion of second-order formulae in the logic for DB-ASMs, we can establish a sound and complete proof system for the logic of DB-ASMs. Note that, the logic for DB-ASMs preserves the restriction of the logic of Nanchen and St\"ark for ASMs \cite{RobertLogicASM} dealing only with properties of a \emph{single step} of an ASM, \emph{not} with properties of whole ASM runs. This restriction to a one-step logic allows us to define a Hilbert-style proof theory and to show its completeness, whereas for a logic dealing with properties of whole ASM runs (and even more so, whole DB-ASMs runs) can hardly be expected to be complete.


\medskip\noindent \textbf{Outline}\hspace{0.2cm} The remainder of the article is structured as follows. Section~\ref{sec:relatedwork} discusses related work on logical
characterisations of database transformations. Then we provide a motivating example in Section \ref{sec:motivation}. In Section \ref{sec:states} we discuss meta-finite structures and states of DB-ASMs. In Section~\ref{sec:adtm} we present the definitions of DB-ASM.
As states of a database transformation are meta-finite structures, in Section~\ref{sec:dtc}
we define a logic for DB-ASMs that is built upon the logic of
meta-finite structures. Subsequently, a
detailed discussion of basic properties of the logic for DB-ASMs, such as
consistency, update sets and multisets, along with
the formalisation of a proof system is presented in Section~\ref{sec:proof_system}.
In Section~\ref{sec:soundness}, we present some interesting properties of the logic for DB-ASMs which are implied by the axioms and rules of the proof system introduced in Section~\ref{sec:proof_system}.  We prove in Section~\ref{sec:completeness} that the logic for DB-ASMs is complete. We conclude the article with a brief summary in Section \ref{sec:conclusion}.


\section{Related Work}\label{sec:relatedwork}
It is widely acknowledged that a logic-based perspective for
database queries can provide a yardstick for measuring the
expressiveness and complexity of query languages. To extend the
application of mathematical logics from database queries to database
updates, a number of logical formalisms have been developed
providing the reasoning for both states and state changes in a
computation model \cite{Survey98,LogicSurvey01}. A popular approach
was to take dynamic logic as a starting point and then to define the
declarative semantics of logical formulae based on Kripke
structures. It led to the development of the \emph{database dynamic
logic} (DDL) and \emph{propositional} \emph{database dynamic logic}
(PDDL) \cite{SpruitDDL92,SpruitPhDThesis94,SpruitPDDL95}. DDL has
atomic updates for inserting, deleting and updating tuples in
predicates and for functions, whereas PDDL has two kinds of atomic
updates: passive and active updates. Passive updates change the
truth value of an atom while active updates compute derived updates
using a logic program. In \cite{SpruitFUL01} Spruit, Wieringa and Meijer proposed \emph{regular
first-order update logic} (FUL), which generalises
dynamic logic towards specification of database updates. A state of
FUL is viewed as a set of non-modal formulae. Unlike standard
dynamic logic, predicate and function symbols rather than variables
are updatable in FUL. There are two instantiations of FUL. One is
called \emph{relational algebra update logic} (RAUL) that is an
extension of relational algebra with assignments as atomic updates.
Another one is DDL that parameterizes FUL by two kinds of atomic
updates: bulk updates to predicates and assignment updates to
functions. It was shown that DDL is also ``update complete'' in
relational databases with respect to the update completeness
criterion proposed by Abiteboul and Vianu in \cite{AbiteboulUpdate87}.

As we explained before ASMs turn out to be a promising approach for
specifying database transformations. The logical foundations for
ASMs have been well studied from several perspectives.
Groenboom and Renardel de Lavalette presented in \cite{GroenboomMLCM94} a logic called \emph{modal logic of
creation and modication} (MLCM) that is a multimodal predicate logic
intended to capture the ideas behind ASMs. On the basis of MLCM they developed a language called
\emph{formal language for evolving algebras} (FLEA) \cite{GroenboomFLEA95}. Instead of
values of variables, states of an MLCM are represented by
mathematical structures expressed in terms of dynamic functions. The
work in \cite{LavalettelogicMCL01} generalises MLCM and other
variations from \cite{FenselMLPM96} to \emph{modification and
creation logic} (MCL) for which there exists a sound and complete
axiomatisation. In \cite{Schoenegge95} Sch\"onegge presented an extension of dynamic
logic with update of functions, extension of universes and
simultaneous execution (called EDL), which allows
statements about ASMs to be directly represented. In addition to
these, a logic complete for \emph{hierarchical ASMs} (i.e., ASMs
that do not contain recursive rule definitions) was developed by Nanchen and St\"ark in
\cite{RobertLogicASM}. This logic for ASMs differs from other logics
in two respects: (1) the consistency of updates has been accounted
for; (2) modal operators are allowed to be eliminated in certain
cases. As already remarked, the ASM logic of Nanchen and St\"ark permits reasoning about ASM rules, but not about ASM runs, which is the price to be paid for obtaining completeness.
In this article we will extend this logic for ASMs towards
database transformations, in which states are regarded as
meta-finite structures and a bounded form of non-determinism is
captured.

It was Chandra and Harel who first observed limitations of finite
structures in database theory \cite{ChandraCompleteness}. They
proposed a notion of an \emph{extended database} that extends finite
structures by adding another countable, enumerable domain containing
interpreted features such as numbers, strings and so forth. The
intention of their study was to provide a more general framework
that can capture queries with interpreted elements.
Another extension of finite structures was driven by the efforts
to solve the problem of expressing cardinality properties
\cite{Yuri02onpolynomial,CaiIFP+C89,GrCounting,Immerman87,Otto95b,Otto96,Torres01,Torres02}.
For example, Gr\"adel and Otto developed a two-sorted structure that
adjoins a one-sorted finite structure with an additional finite
numerical domain and added the terms expressing cardinality
properties~\cite{GrCounting}. They aimed at studying the expressive power
of logical languages that involve induction with counting on such
structures. A promising line of work is \emph{meta-finite model
theory}. Gr\"adel and Gurevich in \cite{graedel:infcomp1998} defined
\emph{meta-finite structures}.
Based
on the work presented in \cite{graedel:infcomp1998}, Hella et al. in
\cite{HellaAggregateLogics} studied the logical grounds of query
languages with aggregation, which is closely related to our work presented in
this article. However, the logic for DB-ASMs covers not only
database queries with aggregation but also database updates. Put it in
another way, it is a logical characterisation for database
transformations including aggregate computing and sequential algorithms.

\section{Motivating Example}\label{sec:motivation}
To motivate our work, we use an example to illustrate how database
transformations can be captured by DB-ASMs and how a logic for
DB-ASMs can be used for verifying database transformations, i.e., one of the potential applications of the logic for DB-ASMs.

\begin{example}\label{exa:DBASM-db}
Consider a relational database schema: \textsc{City}=$\{$Cid, Name$\}$ and 
\textsc{Route}= \\
$\{$FromCid, ToCid, Distance$\}$, which store route information of cities and their distance.
Assume that we have $\forall c_1, c_2,d ((c_1, c_2, d) \in \textsc{Route}\rightarrow (c_2, c_1, d) \in \textsc{Route})$. Then a relation of $\textsc{Route}$ corresponds to an undirected graph in which the nodes represent cities and the edges represent direct routes, for example, the undirected graph in Fig. \ref{fig:graph} corresponds to the relation of $\textsc{Route}$ in Fig. \ref{fig:initialstate}. Assume that such graphs are always connected. Let $Q_1(c)$ be the query ``find a shortest path tree rooted at city $c$''. To answer this query, we would need to find a spanning tree $T$ with the root node $c$ such that the path distance from $c$ to any other node $c'$ in $T$ is the shortest path distance from city $c$ to $c'$ in the graph induced by $\textsc{Route}$.
\begin{figure}[!ht]
\begin{minipage}{2.5cm}
\hspace*{1cm}
\end{minipage}
\begin{minipage}{2.5cm}
{\normalsize\begin{tabular}{|c|c|}\hline
 \multicolumn{2}{|c|}{\textsc{City}}\\
  \hline
  Cid & Name \\\hline
  $c_1$ & A \\
  $c_2$ & B \\
  $c_3$ & C \\
  $c_4$ & D \\
  $c_5$ & E \\
  \hline
\end{tabular}}
\end{minipage}
\begin{minipage}{3.6cm}
\vspace*{1cm}
{\centering
\begin{tabular}{|c|c|c|}
  \hline
   \multicolumn{3}{|c|}{\textsc{Route}}\\\hline
  FromCid & ToCid & Distance \\\hline
  $c_1$ & $c_2$ & $d_1$ \\
  $c_1$ & $c_3$ & $d_3$ \\
  $c_2$ & $c_4$ & $d_2$ \\
  $c_3$ & $c_4$ & $d_1$ \\
  $c_3$ & $c_5$ & $d_2$ \\
  $c_5$ & $c_4$ & $d_4$ \\
  \dots & \dots  & \dots\\
  \hline
\end{tabular}}
\medskip
\end{minipage}

\begin{minipage}{2.5cm}
\hspace*{1cm}
\end{minipage}\begin{minipage}{2cm}
{\normalsize\begin{tabular}{|c|}\hline
 \multicolumn{1}{|c|}{\textsc{Visited}}\\
  \hline
  Cid \\\hline
  \\
  \hline
\end{tabular}}
\end{minipage}
\begin{minipage}{2.5cm}
\hspace*{0.2cm}{\normalsize\begin{tabular}{|c|c|c|}\hline
 \multicolumn{3}{|c|}{\textsc{Result}}\\
  \hline
  Cid & TotalCost & LastStop \\\hline
  & &\\
  \hline
\end{tabular}}
\end{minipage}\caption{An initial state}\label{fig:initialstate}
\end{figure}

\begin{figure}[!htbp]
\begin{minipage}{5cm}
\begin{center}
\includegraphics[scale=0.65]{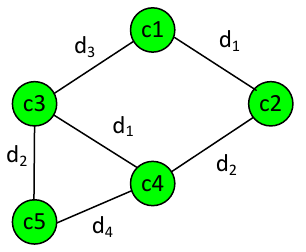}
\end{center}\caption{An undirected graph}\label{fig:graph}
\end{minipage}
\begin{minipage}{7cm}
{\centering\begin{tabular}{|c|}\hline
 \multicolumn{1}{|c|}{$\textsc{Dist}:\mathrm{Cid} \rightarrow \mathbb{N}$}\\
  \hline
    $\textsc{Dist}(c_1)=1$ \\
  $\textsc{Dist}(c_2)=2$ \\
   $\textsc{Dist}(c_3)=3$ \\
   $\textsc{Dist}(c_4)=4$ \\
   $\textsc{Dist}(c_5)=5$ \\
\hline
 \end{tabular}}
{\centering \begin{tabular}{|c|}\hline
 \multicolumn{1}{|c|}{$\textsc{Val}:\mathrm{Distance} \rightarrow \mathbb{N}$}\\
  \hline
    $\textsc{Val}(d_1)=500.50$ \\
  $\textsc{Val}(d_2)=100.00$ \\
   $\textsc{Val}(d_3)=808.20$ \\
   $\textsc{Val}(d_4)=203.20$ \\
\hline
 \end{tabular}}
 \caption{Two bridge functions}
\end{minipage}

\end{figure}

The DB-ASM rule in Fig. \ref{exa:DBASM-code} (of the signature $\Upsilon_G$ described next), which corresponds to the famous Dijkstra's algorithm, expresses the query $Q_1(c)$. Let $\Upsilon_G$ be the signature in Fig.~\ref{exa:DBASM-signature}.
Apart from \textsc{City} and \textsc{Route}, $\Upsilon_G$ includes \textsc{Visited}=$\{$Cid$\}$ to store the cities that have been visited during the computation and \textsc{Result}=$\{$ChildCid, ParentCid$\}$ to store the shortest path tree as a child-parent node relationship. We assume that in every initial state the relations \textsc{Visited} and \textsc{Result} are empty (as shown in Fig. \ref{fig:initialstate}) and that \textsc{Initial}=\textsc{True}. We also assume that the constant symbol \textsc{Infinity} is interpreted by a natural number which is strictly greater than the sum of all the distances in $\textsc{Route}$, that $\textsc{Zero}$ is interpreted by the value $0$,
and that the values interpreting the constant symbols
\textsc{True} and \textsc{False} are
different. A state in which every city has been visited, i.e., a state in which \textsc{Visited} contains every city id in the database, is considered as a final state. Notice that entries in the database part of the states which correspond to the distances between adjacent nodes in \textsc{Route} are surrogates for the actual distances which are natural numbers in the algorithmic part of the state. Thus, there are two bridge functions:
 \begin{itemize}
 \item $\textsc{Dist}:\mathrm{Cid} \rightarrow \mathbb{N}$ to keep track of cities visited during a computation and their corresponding shortest distances to $c$, respectively;
 \item $\textsc{Val}:\mathrm{Distance} \rightarrow \mathbb{N}$ to map the surrogates for the actual distances to the natural numbers in the algorithmic part.
 \end{itemize}

\begin{figure}[!ht]
\hspace*{1cm}\begin{minipage}{6cm}
{\centering\fbox{\parbox{11.5cm}{$\Upsilon_{G} = (\Upsilon_{db}, \Upsilon_a, \mathcal{F}_b)$, where\\[0.2cm]
- \textsc{City}, \textsc{Route}, \textsc{Visited}, \textsc{Result}, c, \textsc{True}, \textsc{False}, \textsc{Initial} $\in\Upsilon_{db}$;\\[0.2cm]
- \textsc{Infinity}, \textsc{Zero}, \textsc{MDist} $\in\Upsilon_a$;\\[0.2cm]
- \textsc{Val}, \textsc{Dist} $\in\mathcal{F}_b$; and \\[0.2cm]
- \textsc{Min} is a location operator.
}} \caption{A signature } \label{exa:DBASM-signature}}
\end{minipage}
\end{figure}

In general, the DB-ASM in Fig. \ref{exa:DBASM-code} proceeds in two stages:

\begin{itemize}

\item The first
stage is described by Lines 2-12. The DB-ASM starts with an initial state in which
\textsc{Initial} = \textsc{True}, then assigns (in parallel) to every city a tentative distance
value (i.e., 0 for the city $c$ and \textsc{Infinity} for all other
cities), and ends with \textsc{Initial}=\textsc{False}.

\item The second stage is described by Lines 13-36. The
shortest paths to reach other cities from the city $c$ are repeatedly
calculated and stored in \textsc{Result} until a final state in which every city has been visited is reached.

\end{itemize}

At Line 15 of the DB-ASM rule in Fig.~\ref{exa:DBASM-code} the location operator
\textsc{Min} is assigned to the location $(\textsc{MDist},())$, and thus
\textsc{MDist} is updated to the shortest distance among the collection of distances
between the city in consideration and all its unvisited neighbor
cities. At Line 20, we can see that the DB-ASM is non-deterministic
because a city is arbitrarily chosen from the non-visited cities
 whose shortest paths are equally minimum at each step
of the computation process. This indeed exemplifies the importance
of non-determinism for specifying database transformations at a
high-level of abstraction.

Now suppose that we want to know whether the properties~P1 and~P2 described next, are satisfied by the DB-ASM corresponding to the DB-ASM rule in Fig.~\ref{exa:DBASM-code} over certain states of signature $\Upsilon_G$. Clearly, the use of a logic to specify such properties of DB-ASMs can contribute significantly to the verification of the correctness of database transformations expressed by means of DB-ASMs. Although the logic proposed for DB-ASMs in this paper can only reason about such properties within one-step of computation, it nevertheless provides a useful tool which is a first step towards developing a logic that can reason about properties of whole DB-ASMs runs.

\begin{enumerate}

\item[(P1)] In every non-initial state of a run, each city in the child/parent node relationship encoded in \textsc{Result} has exactly one parent city, except for $c$ which has none. In other words, \textsc{Result} encodes a tree with root node $c$.\\[0.2cm]
$\neg \textsc{Initial} \rightarrow$\\*
\hspace*{0.8cm}$\forall x y (\textsc{Result}(x,y) \rightarrow  x \neq c \wedge \forall z (z \neq y \rightarrow \neg \textsc{Result}(x,z))) \wedge \exists x (\textsc{Result}(x,c))$\\[0.2cm]

\item[(P2)] In every state of a run, if a city not yet visited (by the algorithm) is a neighbour city of one already visited, then the calculated (shortest so far) distance from $c$ to that city is strictly less than \textsc{Infinity} already.\\[0.2cm]
$\forall x y (\textsc{Visited}(x) \wedge \neg\textsc{Visited}(y) \wedge \exists z (\textsc{Route}(x,y,z)) \rightarrow \textsc{Dist}(y) < \textsc{Infinity})$

\end{enumerate}
\end{example}

\begin{figure}[!ht]
\small
1\hspace{0.2cm}\textbf{par}

2\hspace{0.7cm}\textbf{if} \textsc{Initial} \textbf{then}

3\hspace{1.2cm}\textbf{par}

4\hspace{1.7cm}\textbf{forall} $x$ \textbf{with} $\exists y ( \textsc{City}(x,y) ) \, \textbf{do}$

5\hspace{2.2cm}\textbf{par}


6\hspace{2.7cm}\textbf{if} $x = c$ \textbf{then} $\textsc{Dist}(x)$ := \textsc{Zero} \textbf{endif}

7\hspace{2.7cm}\textbf{if} $x \neq c$ \textbf{then} $\textsc{Dist}(x)$ := \textsc{Infinity} \textbf{endif}

8\hspace{2.2cm}\textbf{endpar}

9\hspace{1.70cm}\textbf{enddo}

10\hspace{1.55cm}$\textsc{Initial} := \textsc{false}$

11\hspace{1.05cm}\textbf{endpar}

12\hspace{0.55cm}\textbf{endif}

13\hspace{0.55cm}\textbf{if} $\neg \textsc{Initial}$ \textbf{then}

14\hspace{1.05cm}\textbf{seq}

15\hspace{1.55cm}\textbf{let} $(\textsc{MDist},()) \rightharpoonup \textsc{Min}$ \textbf{in}

16\hspace{2.05cm}\textbf{forall} $x$ \textbf{with} $\exists y ( \textsc{City}(x,y) \wedge \neg \textsc{Visited}(x))$ \textbf{do}

17\hspace{2.55cm}$\textsc{MDist} := \textsc{Dist}(x)$

18\hspace{2.05cm}\textbf{enddo}

19\hspace{1.55cm}\textbf{endlet}

20\hspace{1.55cm}\textbf{choose} $x$ \textbf{with} $\textsc{Dist}(x)=\textsc{MDist} \wedge \neg \textsc{Visited}(x)$ \textbf{do}

21\hspace{2.05cm}\textbf{par}

22\hspace{2.55cm}$\textsc{Visited}(x) := \textsc{True}$

23\hspace{2.55cm}\textbf{forall} $y, z$ \textbf{with} $\textsc{Route}(x,y,z) \wedge \neg \textsc{Visited}(y) \wedge$

24\hspace{5.1cm}$\textsc{MDist}+\textsc{Val}(z) < \textsc{Dist}(y)$ \textbf{do}

25\hspace{3.05cm}\textbf{par}

26\hspace{3.55cm}$\textsc{Dist}(y) := \textsc{MDist}+\textsc{Val}(z)$

27\hspace{3.55cm}$\textsc{Result}(y,x) := \textsc{True}$

28\hspace{3.55cm}\textbf{forall} $x'$ \textbf{with} $x' \neq x \wedge \textsc{Result}(y,x')$ \textbf{do}

29\hspace{4.05cm}$\textsc{Result}(y,x') := \textsc{False}$

30\hspace{3.55cm}\textbf{enddo}

31\hspace{3.05cm}\textbf{endpar}

32\hspace{2.55cm}\textbf{enddo}

33\hspace{2.05cm}\textbf{endpar}

34\hspace{1.55cm}\textbf{enddo}

35\hspace{1.05cm}\textbf{endseq}

36\hspace{0.55cm}\textbf{endif}

37\hspace{0.1cm}\textbf{endpar}
\caption{A DB-ASM}
\label{exa:DBASM-code}
\end{figure}



\section{Meta-finite Structures as States}\label{sec:states}

Meta-finite structures were originally studied by Gr{\"a}del and Gurevich in order to extend the methods of finite model theory beyond finite structures \cite{graedel:infcomp1998}. In a nutshell, a meta-finite structure consists of (a) a
primary part, which is a finite structure, (b) a secondary part, which is a (usually
infinite) structure, and
(c) a set of functions mapping from the primary part into the second part. Typical examples of meta-finite structures are finite objects arising in many areas of computer science, which usually consist of both structures and
numbers. For example, graphs with weights on the edges, where a graph may be representable by a finite structure but its weights on the edges may be reals from an infinite domain, and arithmetical operations performed on these weights may not be any a \emph{priori} fixed finite subdomain \cite{graedel:infcomp1998}. Another example is relational databases in which each relation contains only a finite number of tuples. Although theoretically a relational database is viewed as a finite structure, attribute domains of a relation are often assumed to be countably infinite. In particular, such domains may be infinite mathematical structures, e.g., the natural numbers with arithmetic, rather than merely plain sets, and such infinite mathematical structures are widely used by aggregate queries in many real-life database applications.

In \cite{Gurevich-New-Thesis} Gurevich argued that ASMs provide a model of computation that is more powerful and more universal than other standard models of computation such as Turing machines, in the sense that any algorithm, however abstract, can be simulated step-for-step by an ASM. This is because a state of an ASM is abstract, which may include any real world objects and functions at a chosen level of abstraction. Let us consider for example algorithms that work with graphs. The conventional computation models require a string representation of the given graph or similar,  even in those cases when the algorithm is independent of the graph representation. In particular, a same database might have different representations, but the meaning of the data should not change. Database query languages are supposed to reflect only representation-independent properties. 

In ASMs, states are viewed as first-order structures, whereas in DB-ASMs we consider states as meta-finite structures
\cite{schewe:Axiomatization}. Conceptually, each state of a DB-ASM has a finite
\emph{database part} and a possibly infinite \emph{algorithmic part}, which are
linked via \emph{bridge functions} such that actual database entries
are treated merely as surrogates for the real values. This permits
a database to remain finite while allowing database entries to be
interpreted in possibly infinite domains such as the natural numbers
with arithmetic operations. A signature $\Upsilon$ of states
comprises (i) a sub-signature $\Upsilon_{db}$ for the database part, (ii) a
sub-signature $\Upsilon_a$ for the algorithmic part and (iii) a finite set
$\mathcal{F}_b$ of bridge function names. The \emph{base set} of a
state $S$ is a nonempty set of values $B=B_{db}\cup B_{a}$, where
$B_{db}$ is finite, and $B_a$ contains natural numbers, i.e., $\mathbb{N}\subseteq B_a$. Function
symbols $f$ in $\Upsilon_{db}$ and $\Upsilon_{a}$, respectively, are
interpreted as functions $f^S$ over $B_{db}$ and $B_{a}$, and the
 interpretation of a k-ary function symbol $f\in\mathcal{F}_b$
defines a function $f^S$ from $B^{k}_{db}$ to $B_a$. For every state
over $\Upsilon$, the restriction to $\Upsilon_{db}$ results in a
finite structure.

Since the states of DB-ASMs are defined as meta-finite structures, we now need to define a matching logic so that it can be used in the conditional statements of DB-ASMs. That is, we need a  logic of meta-finite structures as introduced in~\cite{graedel:infcomp1998}. Logics of meta-finite structures distinguish among two types of terms. The first type, which we call \emph{database terms}, denote elements of the primary (finite) part of the meta-finite structure. The second type, which we call \emph{algorithmic terms}, denote elements of the secondary (possibly infinite) part of the meta-finite structure. 

\begin{definition}\label{termsOfLFO}
Let $\Upsilon = \Upsilon_{db} \cup \Upsilon_a \cup \mathcal{F}_b$ be a signature of meta-finite states. Fix a countable set $\mathcal{X}_{db}$ 
of first-order variables, denoted with standard lowercase letters $x, y, z , \ldots$, that range over the primary database part of the meta-finite states (i.e., the finite set $B_{db}$). The set of \emph{database terms} $\mathcal{T}_{db}$ is defined as the closure of the set ${\cal X}_{db}$ of variables under the application of function symbols in $\Upsilon_{db}$. We assume that $\Upsilon_{db}$ always include a function symbol for equality. In turn, the set of \emph{algorithmic terms} $\mathcal{T}_{a}$ is defined inductively as follows:
\begin{itemize}
\item If $t_1, \ldots, t_n$ are database terms in $\mathcal{T}_{db}$ and $f$ is an $n$-ary bridge function symbol in $\mathcal{F}_b$, then $f(t_1, \ldots, t_n)$ is an algorithmic term in $\mathcal{T}_{a}$. 
\item If $t_1, \ldots, t_n$ are algorithmic terms in $\mathcal{T}_{a}$ and $f$ is an $n$-ary function symbol in $\Upsilon_a$, then $f(t_1, \ldots, t_n)$ is an algorithmic term in $\mathcal{T}_{a}$.
\item Nothing else is an algorithmic term in $\mathcal{T}_{a}$. 
\end{itemize}
We set ${\cal T}_{\Upsilon, {\cal X}_{db}}=\mathcal{T}_{db}\cup\mathcal{T}_{a}$.
\end{definition}

In this context, a \emph{variable assignment} (or \emph{valuation}) $\zeta$ is a function which assigns to every variable in ${\cal X}_{db}$ a value in the base set of the database part $B_{db}$ of the meta-finite state $S$. The  value of a term $t \in {\cal T}_{\Upsilon, {\cal X}_{db}}$ in a state $S$ under a valuation $\zeta$, denoted $\mathit{val}_{S, \zeta}(t)$, is defined as usual in first-order logic, i.e., using the classical Tarski's semantics.

The \emph{logic of meta-finite states} ${\cal L}^{FO}$ which we use in the formalization DB-ASMs is defined as the first-order logic with equality which is built up from equations between terms in ${\cal T}_{\Upsilon, {\cal X}_{db}}$ by using the standard connectives and first-order quantifiers. Its semantics is defined in the standard way. The truth value of a formula of meta-finite states $\varphi$ in $S$ under the valuation $\zeta$ is denoted as $[\![\varphi]\!]_{S,\zeta}$.

\section{Database Abstract State Machines}\label{sec:adtm}

Our work in this paper concerns the model of \emph{Database Abstract State Machine} (DB-ASM) that captures the class of database transformations defined by the postulates in the DB-ASM thesis \cite{schewe:Axiomatization,DBLP:wangphdbook}. Accordingly, we assume that states of DB-ASMs are meta-finite structures which include a minimum background of computation as required by the background postulate in the axiomatization of database transformations in~\cite{schewe:Axiomatization,DBLP:wangphdbook} (note that this is essentially the same background that is required in the parallel ASM thesis \cite{blass:tocl2003, GurevichParallelCorrection08, FerrarottiSTW16}). That is, every state of a DB-ASM includes:
\begin{itemize} 
\item An infinite reserve of values not used in a current state, but available to be added to the active domain in any state transition.
\item Boolean values ($\mathit{true}$ and $\mathit{false}$), Bolean operations ($\neg$, $\wedge$, $\vee$ and $\rightarrow$), and the undefinedness value ($\textit{undef}$). 
\item A pairing constructor and a multiset constructor together with necessary operators on tuples and multisets. 
\end{itemize}

The bounded exploration postulate in the DB-ASM thesis~\cite{schewe:Axiomatization,DBLP:wangphdbook}, as well as the bounded exploration postulates in Gurevich's sequential ASM thesis~\cite{gurevich:tocl2000} and in the new parallel ASM thesis~\cite{FerrarottiSTW16} (which simplifies the parallel ASM thesis of Blass and Gurevich~\cite{blass:tocl2003, GurevichParallelCorrection08}), are motivated by the \emph{accessibility principle}, which can be defined as the prerequisite that each location of a state must be uniquely identifiable. In fact, unique identifiability also applies to databases as emphasised by Beeri and Thalheim in~\cite{beeri:fomlado1998}, and has to be claimed for the basic updatable units in a database, for example, objects in~\cite{schewe:actacyb1993}. The accessibility principle is also a fundamental assumption used in the characterization proofs of the DB-ASM thesis as well as of the sequential and parallel ASM thesis. We therefore assume that it holds for every state of the DB-ASMs. 

As explained in~\cite{gurevich:tocl2000}, an algorithm $A$ can access an element $a$ of a state by using formulae $\varphi$ and $\psi(x)$ such that $\varphi$ is a sentence and $x$ is the only free variable in $\psi(x)$, and the equation $\psi(x) = \textit{true}$ has a unique solution in every state $S$ satisfying $\varphi$. If this information is available, then $A$ can evaluate $\varphi$ at a given state $S$, and provided that $\varphi$ holds in $S$, point to the unique solution $a$ of the equation $\psi(x) = \textit{true}$. To bridge the gap between the formula $\psi(x)$ and the element $a$, a new nullary function symbol $c$ is introduced, where $c$ is interpreted as the unique solution of the equation $\psi(x) = \textit{true}$ if $\varphi$ holds and as $\mathit{undef}$ otherwise. Using this approach, any given algorithm $A$ (database transformation or DB-ASM for that matter) can be formalized so that it can access any location of its states, and therefore satisfies the accessibility principle. We avoid the formal details here as this is a well known fact in the ASM community. For the remainder of this paper, we simply assume that every element $a$ of a state $S$ can be accessed by producing an appropriate nullary function symbol $c_a$.

\subsection{Syntax of Rules}
For simplicity, we consider function arguments as tuples. That is, if $f$ is an $n$-ary function and $t_1, \ldots, t_n$ are arguments for $f$, we write $f(t)$ where $t$ is a term which evaluates to the tuple $(t_1, \ldots, t_n)$.
Let $t$ and $s$ denote terms in ${\cal T}_{\Upsilon, {\cal X}_{db}}$, $f$ a dynamic function symbol in $\Upsilon$ and let $\varphi$ denote an ${\cal L}^{FO}$-formula of vocabulary $\Upsilon$. The set of {\em DB-ASM rules} over $\Upsilon$ is inductively defined as follows:

\begin{itemize}

\item {\em assignment rule}: update the content of $f$ at the argument $t$ to
$s$;

\hspace{3cm}$f(t) := s$

\item {\em conditional rule}: execute the rule $r$ if $\varphi$ is true; otherwise, do
nothing;

\hspace{3cm}\textbf{if} $\varphi$ \textbf{then} $r$ \textbf{endif}

\item {\em forall rule}: execute the rule $r$ in parallel for each $x$ satisfying
$\varphi$;

\hspace{3cm}\textbf{forall} $x$ \textbf{with} $\varphi$
\textbf{do} $r$ \textbf{enddo}

\item {\em choice rule}: choose a value of $x$ that satisfies $\varphi$ and then execute the rule $r$;

\hspace{3cm}\textbf{choose} $x$ \textbf{with} $\varphi$
\textbf{do} $r$ \textbf{enddo}

\item {\em parallel rule}: execute the rules $r_1$ and $r_2$ in
parallel;

\hspace{3cm}\textbf{par} $r_1$ $r_2$ \textbf{endpar}

\item {\em sequence rule}: first execute the rule $r_1$ and then execute the rule
$r_2$;

\hspace{3cm}\textbf{seq} $r_1$ $r_2$ \textbf{endseq}

\item {\em let rule}: aggregates, using the location operator $\rho$, all updates to the location $(f,t)$ yielded by $r$ (see definition of location and location operator in Section~\ref{usetsandmultisets} next);

\hspace{3cm}\textbf{let} $(f,t) \!\rightharpoonup\! \rho$ \textbf{in}
$r$ \textbf{endlet}

\end{itemize}
Notice that all variables appearing in a DB-ASM rule are \emph{database
variables} that must be interpreted by values in $B_{db}$. A rule
$r$ is {\em closed} if all variables of $r$ are bounded by forall and
choice rules.



\subsection{Update Sets and Multisets}\label{usetsandmultisets}
In the ASM literature \cite{boerger:2003}, locations, updates, update sets and update multisets are the key concepts used to formalise the dynamics of computations. Thus, similar to ASMs \cite{boerger:2003}, DB-ASMs can be understood as an extension of finite state machines which proceed by transitions from states to successor states through updates. In a state of a DB-ASM, updatable dynamic functions are called locations, which are distinguished from static functions that cannot be updated. Informally, such locations represent the abstract concept of basic object containers, such as memory units. During each transition step, a DB-ASM produces a set or multiset of updates that are used to change location contents in a state, and location contents in a DB-ASM can only be changed by such updates.

Let $S$ be a state over $\Upsilon$, $f\in\Upsilon$ be a dynamic function symbol of
arity $n$ and $a_1,...,a_n$ be elements in $B_{db}$ or $B_a$ depending on whether $f \in \Upsilon_{db} \cup {\cal F}_b$ or $f \in \Upsilon_{a}$, respectively.
Then $(f,(a_1,...,a_n))$ is called a {\em location} of $S$. An
\emph{update} of $S$ is a pair $(\ell,b)$, where $\ell$ is a
location and $b \in B_{db}$ or $b \in B_a$, depending on whether $f \in \Upsilon_{db}$ or $f \in \Upsilon_{a} \cup {\cal F}_b$, respectively, is the \emph{update value} of $\ell$. To simplify notations, we write $(f
,(a_1,\dots,a_n),b)$ for the update $(\ell,b)$ with the location $\ell = (f,(a_1,\dots,a_n))$. The
interpretation of $\ell$ in $S$ is called the \emph{content} of
$\ell$ in $S$, denoted by $val_{S}(\ell)$. An \emph{update set}
$U$ is a set of updates; an \emph{update multiset}
$\ddot{U}$ is a multiset of updates. A \emph{location operator}
$\rho$ is a multiset function that returns a
single value from a multiset of values, e.g. \textsc{Average},
\textsc{Count}, \textsc{Sum}, \textsc{Max} and \textsc{Min} used in
SQL. An update set $U$ is called \emph{consistent} if it does
not contain conflicting updates, i.e., for all $(\ell,b),
(\ell,b^\prime) \in U$ we have $b = b^\prime$. Likewise, we say that an update multiset $\ddot{U}$ is \emph{consistent} if its corresponding update set $U$, obtained by setting the multiplicity of each element (update) in $\ddot{U}$ to $1$, is a consistent update set. Otherwise, we say that $\ddot{U}$ is \emph{inconsistent}.
If $U$ is a consistent update set, then
there exists a unique state $S + U$ resulting from
updating $S$ with $U$. We have
\[ val_{S + U}(\ell) \quad = \quad \begin{cases}
b &\text{if } (\ell,b) \in U \\
val_S(\ell) &\text{otherwise} \end{cases} \]
If $U$ is not consistent, then $S+U$ is undefined.

To illustrate the concepts of location operator, update sets and update multisets, we provide the following example in which parallel computations are synchronised by using a let rule.

\begin{example}\label{exa:DBASM-let}

Consider the relation \textsc{Route} in Fig.~\ref{fig:DBASM-Route} and the two DB-ASMs presented in Fig.~\ref{fig:tworules}.

\begin{figure}[!htbp]
{\centering
\begin{tabular}{|c|c|c|}
  \hline
  FromCid & ToCid & Distance \\\hline\hline
  $c_1$ & $c_2$ & $d_1$ \\
  $c_1$ & $c_5$ & $d_4$ \\
  $c_2$ & $c_3$ & $d_2$ \\
  $c_1$ & $c_4$ & $d_1$ \\
  \hline
\end{tabular}\caption{A relation
\textsc{Route}}\label{fig:DBASM-Route}}
\end{figure}

\begin{figure}[!htbp]
 \hspace{3cm}\textbf{let} $(\ell_{num},())
\!\rightharpoonup\! \textsc{sum}$ \textbf{in}

\hspace{3.5cm}\textbf{forall} $x_1,x_2$ \textbf{with}
$\exists x_3 (\textsc{Route}(x_1,x_2,x_3))$ \textbf{do}

\hspace{4cm} $\ell_{num}:= 1$

\hspace{3.5cm}\textbf{enddo}

\hspace{3cm}\textbf{endlet}

\smallskip
\hspace{4cm}(a) First DB-ASM

\bigskip
\hspace{3cm}\textbf{forall} $x_1,x_2$ \textbf{with}
$\exists x_3 (\textsc{Route}(x_1,x_2,x_3))$ \textbf{do}

\hspace{3.5cm} $\ell_{num}:= 1$

\hspace{3cm}\textbf{enddo}

\smallskip
\hspace{4cm}(b) Second DB-ASM

\caption{Two DB-ASMs}
\label{fig:tworules}
\end{figure}

The first DB-ASM in Fig. \ref{fig:tworules}.(a) computes the total number of
routes in the relation \textsc{Route}. Here \textsc{sum} is a
location operator assigned to the location $(\ell_{num},())$. In a state containing the
relation \textsc{Route} in Fig.\ref{fig:DBASM-Route}, 
the forall sub-rule yields the update multiset $\{\!\!\{(\ell_{num},(),1), (\ell_{num},(),1),
(\ell_{num},(),1),(\ell_{num},(),1)\}\!\!\}$ and the update set $\{(\ell_{num},(),1)\}$. In turn the let rule (and thus the DB-ASM) yields the corresponding update set $\{(\ell_{num},(),4)\}$, which results from the aggregation produced by the location operator \textsc{sum} of the four updates to the location $(\ell_{num},())$ that appear in the multiset produced by the forall rule. Since the second DB-ASM in Fig. \ref{fig:tworules}.(b) has no location operator associated with the location $(\ell_{num}, ())$ and  the forall rule yields the same update multiset and update set as before, this second DB-ASM produces the update set $\{(\ell_{num},(),1)\}$ instead.
\end{example}

\subsection{Semantics of Rules}
The semantics of DB-ASM rules is defined in terms of update
multisets and update sets. More specifically, each DB-ASM rule is
associated with a set of update multisets, which then ``collapses'' to a set of
update sets. Thus, if $r$ is a DB-ASM rule of
signature $\Upsilon$ and $S$ is a state of $\Upsilon$, then we
associate a set $\Delta(r,S,\zeta)$ of update sets and a set
$\ddot{\Delta}(r,S,\zeta)$ of update multisets with $r$ and $S$,
respectively, where $\zeta$ is a variable assignment.

Let $\zeta[x \mapsto a]$ denote the variable assignment which coincides with $\zeta$ except that it assigns the value $a$ to $x$.
We formally define the sets of update sets and sets of update
multisets yielded by DB-ASM rules in Fig.~\ref{fig:set} and Fig.~\ref{fig:multiset}, respectively. Assignment rules create
updates in update sets and multisets.  Choice rules introduce
non-determinism. Each choice rule generates a \emph{set of update sets} and a corresponding set of update multisets which contain all the different update sets and multisets, respectively, corresponding to all possible choices. Let rules aggregate updates to the same location
into a single update by means of location operators. All other rules
only rearrange updates into different update sets and multisets.

\begin{figure}[!htb]
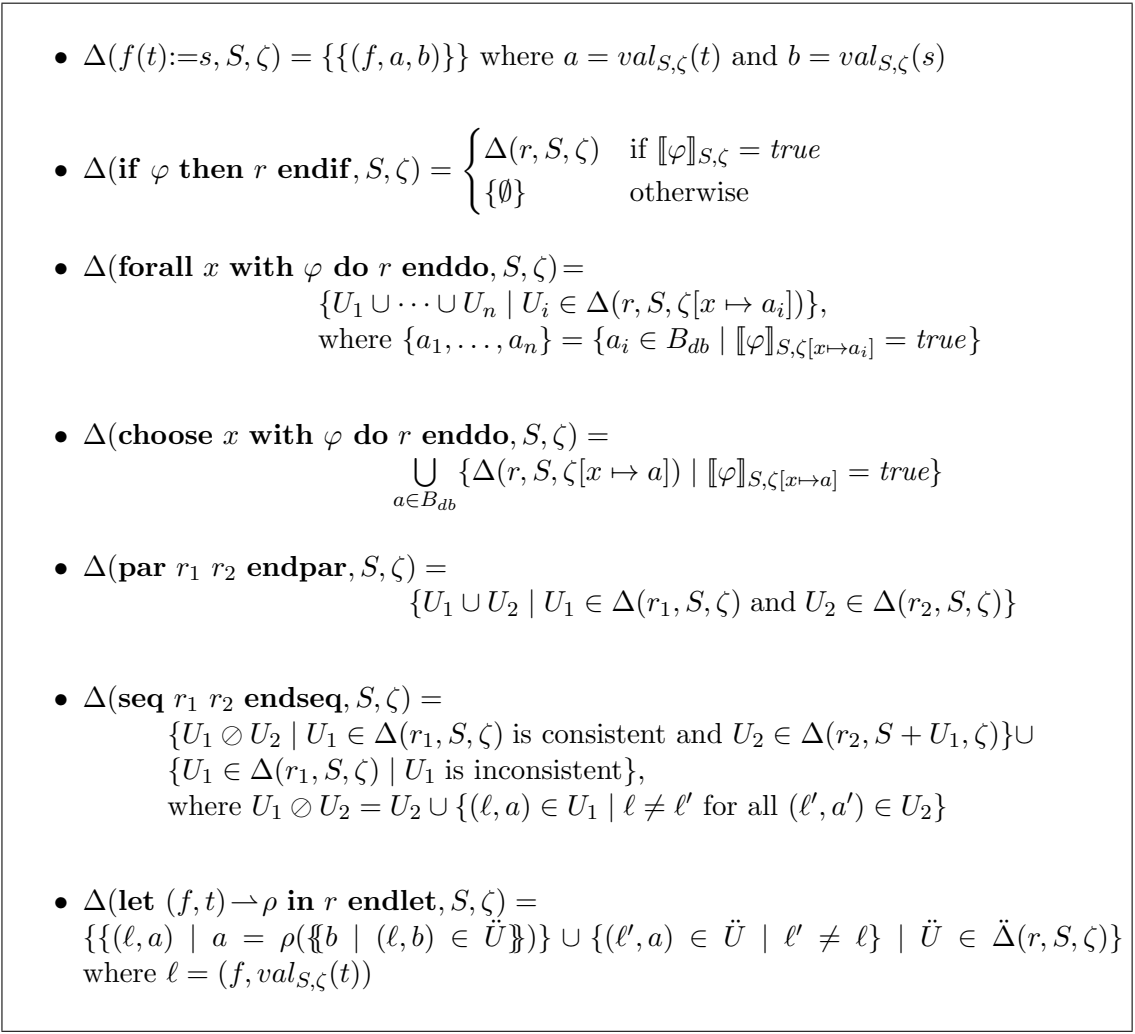

\fbox{\parbox{14.8cm}{
\begin{itemize}

\item  $\Delta(f(t) \text{:=} s,S,\zeta) = \{ \{  (f, a,b) \} \}$ where $a = val_{S,\zeta}(t)$ and $b = val_{S,\zeta}(s)$\\

\item
$\Delta(\text{\textbf{if} }\varphi\text{ \textbf{then} } r\text{
\textbf{endif}},S,\zeta) = \begin{cases} \Delta(r,S,\zeta) &\text{if
}  [\![\varphi]\!]_{S,\zeta} = \mathit{true}
\\ \{\emptyset\} &\text{otherwise}
\end{cases}$\\

\item
$\Delta(\text{\textbf{forall} } x \text{ \textbf{with} }\varphi \text{\textbf{ do} }r\text{ \textbf{enddo}},S,\zeta) \!=$ \\
\hspace*{3cm} $\{ U_1 \cup \dots \cup U_n \mid U_i \in \Delta(r,S,\zeta[x \mapsto a_i]) \}$, \\
\hspace*{3cm} where $\{a_1 ,\dots, a_n \} = \{a_i \in B_{db} \mid [\![\varphi]\!]_{S,\zeta[x \mapsto a_i]} = \mathit{true} \}$\\

\item
$\Delta(\text{\textbf{choose} } x \text{ \textbf{with} }\varphi \text{\textbf{ do} }r\text{ \textbf{enddo}},S,\zeta) =$\\
\hspace*{4cm} $\bigcup\limits_{a \in B_{db}}\{ \Delta(r,S,\zeta[x \mapsto a]) \mid [\![\varphi]\!]_{S,\zeta[x \mapsto a]} = \textit{true} \}$ \\

\item
$\Delta(\text{\textbf{par} }r_1 \text{ } r_2\text{ \textbf{endpar}},S,\zeta) =$\\
\hspace*{4.2cm} $\{ U_1 \cup U_2 \mid U_1 \in \Delta(r_1,S,\zeta) \; \text{and} \; U_2 \in \Delta(r_2,S,\zeta) \}$ \\

\item
$\Delta(\text{\textbf{seq} }r_1 \text{ } r_2 \text{
\textbf{endseq}},S,\zeta) =$ \\
\hspace*{1cm} $\{ U_1 \oslash U_2 \mid
U_1 \in \Delta(r_1,S,\zeta) \;\text{is consistent and }U_2 \in
\Delta(r_2,S+U_1,\zeta)\} \cup$ \\
\hspace*{1cm} $\{ U_1 \in \Delta(r_1,S,\zeta) \mid U_1 \;\text{is inconsistent} \}$,\\
\hspace*{1cm} where $U_1 \oslash U_2 = U_2 \cup \{
(\ell,a) \in U_1 \mid  \ell \neq \ell^\prime \text{ for all }(\ell',a^\prime) \in U_2\}$\\

\item
$\Delta(\text{\textbf{let} } (f,t) \!\rightharpoonup\!\rho \text{
\textbf{in} }r\text{ \textbf{endlet}},S,\zeta) =$\\
$\{\{ (\ell,a)
\mid a = \rho( \{\!\!\{ b \mid (\ell,b) \in
\ddot{U}\}\!\!\} ) \} \cup \{ (\ell^\prime,a) \in \ddot{U} \mid \ell^\prime \neq \ell \} \mid \ddot{U}\in \ddot{\Delta}(r,S,\zeta)\}$
\hspace{6.3cm}where $\ell = (f, val_{S,\zeta}(t))$

\end{itemize}}}\caption{Update sets of DB-ASM rules}\label{fig:set}
\end{figure}

\begin{figure}[h!]
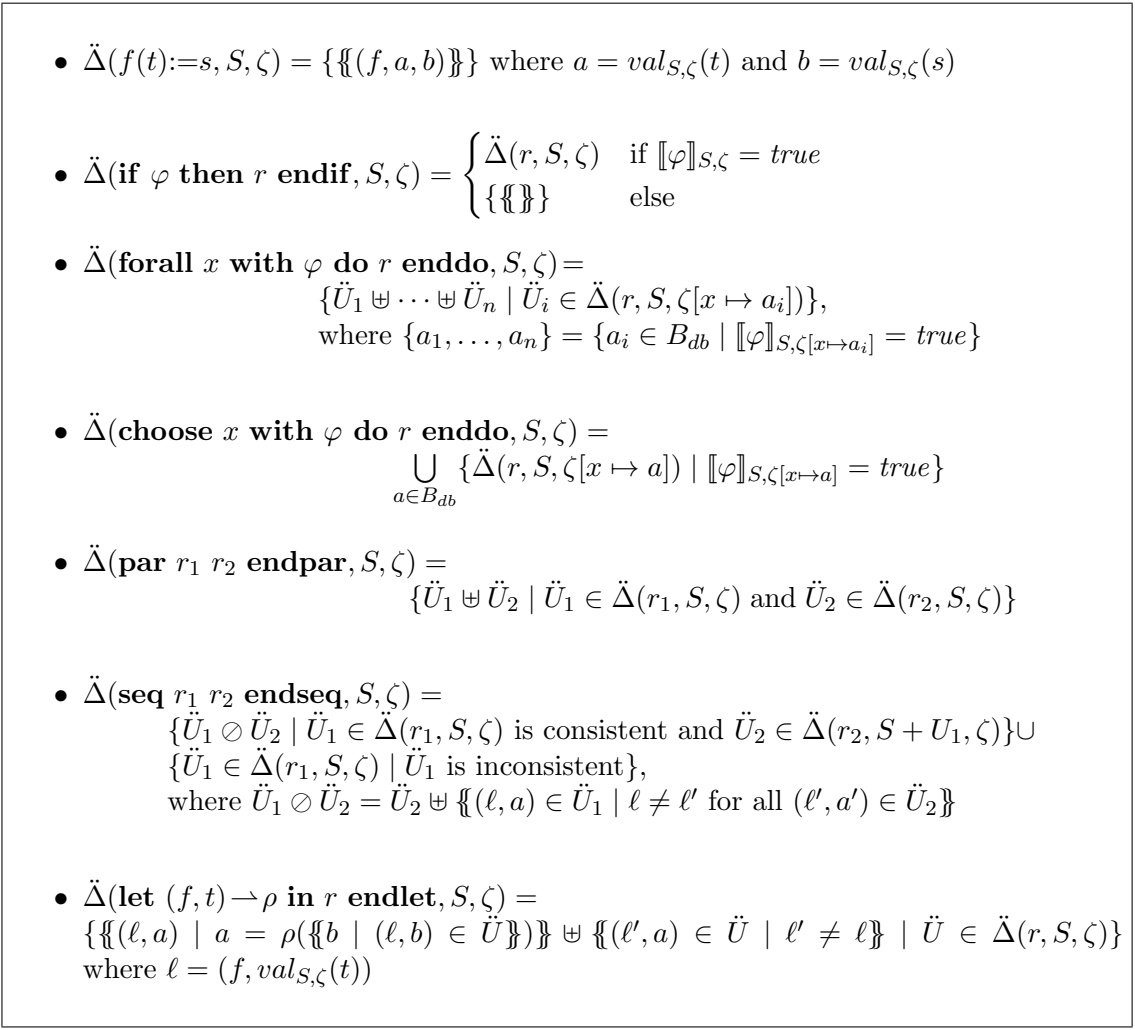

\fbox{\parbox{14.8cm}{
\begin{itemize}

\item  $\ddot{\Delta}(f(t) \text{:=} s,S,\zeta) = \{ \{\!\!\{ (f,a,b)
\}\!\!\} \}$ where $a = val_{S,\zeta}(t)$ and $b = val_{S,\zeta}(s)$\\

\item $\ddot{\Delta}(\text{\textbf{if} }\varphi\text{ \textbf{then} } r\text{ \textbf{endif}},S,\zeta) = \begin{cases}
\ddot{\Delta}(r,S,\zeta) &\text{if } [\![\varphi]\!]_{S,\zeta} = \mathit{true} \\
\{ \{\!\!\{ \}\!\!\} \}&\text{else}
\end{cases}$

\item
$\ddot{\Delta}(\text{\textbf{forall} }x \text{ \textbf{with} }\varphi \text{\textbf{ do} }r \text{ \textbf{enddo}},S,\zeta)\! =$\\
\hspace*{3cm} $\{ \ddot{U}_1 \uplus \dots \uplus \ddot{U}_n \mid \ddot{U}_i \in \ddot{\Delta}(r,S,\zeta[x \mapsto a_i]) \}$, \\
\hspace*{3cm} where $\{a_1, \ldots, a_n\} = \{ a_i \in B_{db} \mid [\![\varphi]\!]_{S,\zeta[x \mapsto a_i]} = \mathit{true} \}$\\

\item
$\ddot{\Delta}(\text{\textbf{choose} }x \text{ \textbf{with} }\varphi \text{\textbf{ do} }r\text{ \textbf{enddo}},S,\zeta) =$\\
\hspace*{4cm} $\bigcup\limits_{a \in B_{db}} \{\ddot{\Delta}(r,S,\zeta[x \mapsto a]) \mid [\![\varphi]\!]_{S,\zeta[x\mapsto a]} = \mathit{true}  \} $\\

\item
$\ddot{\Delta}(\text{\textbf{par} }r_1 \text{ } r_2\text{
\textbf{endpar}},S,\zeta) =$\\
\hspace*{4.2cm} $\{ \ddot{U}_1 \uplus\ddot{U}_2
\mid \ddot{U}_1 \in \ddot{\Delta}(r_1,S,\zeta) \; \text{and} \; \ddot{U}_2 \in
\ddot{\Delta}(r_2,S,\zeta)\}$\\

\item
$\ddot{\Delta}(\text{\textbf{seq} }r_1 \text{ } r_2 \text{
\textbf{endseq}},S,\zeta) =$\\
\hspace*{1cm} $\{ \ddot{U}_1 \oslash
\ddot{U}_2 \mid \ddot{U}_1 \in \ddot{\Delta}(r_1,S,\zeta)
\text{ is consistent and }\ddot{U}_2 \in
\ddot{\Delta}(r_2,S+U_1,\zeta)\} \cup$ \\
\hspace*{1cm} $\{\ddot{U}_1 \in \ddot{\Delta}(r_1,S,\zeta) \mid \ddot{U}_1 \text{ is inconsistent} \}$, \\
\hspace*{1cm} where $\ddot{U}_1 \oslash \ddot{U}_2 =
\ddot{U}_2 \uplus \{\!\!\{ (\ell,a) \in \ddot{U}_1 \mid \ell \neq \ell^\prime \text{ for all } (\ell^\prime,a^\prime) \in \ddot{U}_2 \}\!\!\}$\\

\item
$\ddot{\Delta}(\text{\textbf{let} }(f,t)\!\rightharpoonup\!\rho \text{
\textbf{in} }r\text{ \textbf{endlet}},S,\zeta) =$\\
$\{ \{\!\!\{ (\ell,a)
\mid a = \rho( \{\!\!\{ b \mid (\ell,b) \in
\ddot{U}\}\!\!\} ) \}\!\!\} \uplus
\{\!\!\{ (\ell^\prime,a) \in \ddot{U} \mid \ell^\prime \neq \ell \}\!\!\} \mid \ddot{U}\in \ddot{\Delta}(r,S,\zeta)\}$
\hspace{6.3cm}where $\ell = (f, val_{S,\zeta}(t))$

\end{itemize}}}\caption{Update multisets of DB-ASM rules}\label{fig:multiset}
\end{figure}


\begin{lemma}\label{lem-finiteness}

For each state $S$, each DB-ASM rule $r$ and each variable assignment $\zeta$ from ${\cal X}_{db}$ to the base set $B_{db}$ of the database part of $S$, the following holds:

\begin{enumerate}

\item $\Delta(r,S,\zeta)$ and $\ddot{\Delta}(r,S,\zeta)$ are finite sets.

\item Each $U \in \Delta(r,S,\zeta)$ is a finite update set.

\item Each $\ddot{U} \in \ddot{\Delta}(r,S,\zeta)$ is a finite update multiset.

\end{enumerate}

\end{lemma}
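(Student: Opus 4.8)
The plan is to prove all three statements simultaneously by structural induction on the DB-ASM rule $r$, since the finiteness of the sets and the finiteness of their members are mutually dependent (for instance, the \textbf{seq} case needs finiteness of the member update sets of $r_1$ in order to form the successor state $S + \Delta_1$, and the \textbf{let} case needs finiteness of the update multisets to know the location operator $\rho$ is applied to a finite multiset). So I would state the induction hypothesis as: for every sub-rule $r'$ of $r$, every state $S$, and every valuation $\zeta$, the three claims hold for $r'$.

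First I would treat the base case, the \emph{assignment rule} $f(t) := s$. Here $\Delta(f(t):=s,S,\zeta) = \{\{(f,a,b)\}\}$ and $\ddot{\Delta}(f(t):=s,S,\zeta) = \{\{\!\!\{(f,a,b)\}\!\!\}\}$, which are singletons containing a singleton update set and a singleton update multiset, respectively, so all three claims are immediate. For the inductive step I would go through the remaining rule constructors one at a time. The \emph{conditional rule} either returns $\Delta(r,S,\zeta)$ (resp. $\ddot{\Delta}(r,S,\zeta)$), to which the induction hypothesis applies directly, or returns $\{\emptyset\}$ (resp. $\{\{\!\!\{\}\!\!\}\}$), which is trivially fine. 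The \emph{forall} and \emph{choice} rules are the crucial places where the meta-finiteness of states is used: the index set $\{a_1,\dots,a_n\} = \{a_i \in B_{db} \mid [\![\varphi]\!]_{S,\zeta[x\mapsto a_i]} = \mathit{true}\}$ is a subset of the \emph{finite} base set $B_{db}$ of the database part, hence finite. For \textbf{forall} this means $\Delta(\textbf{forall}\dots) = \{\Delta_1 \cup \dots \cup \Delta_n \mid \Delta_i \in \Delta(r,S,\zeta[x\mapsto a_i])\}$ is a set of unions of finitely many members, each drawn from a finite set (by IH), so it is a finite set of finite update sets; the multiset case is the same with $\uplus$. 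For \textbf{choice}, the set is a finite union $\bigcup_{a_i} \Delta(r,S,\zeta[x\mapsto a_i])$ of finite sets, each member of which is finite by IH. The \emph{par} rule is the pairwise-union of two finite sets of finite update (multi)sets, hence finite with finite members. For the \emph{let} rule, by IH $\ddot{\Delta}(r,S,\zeta)$ is finite and each $\ddot{\Delta}$ in it is a finite update multiset; the result replaces the finitely many updates at location $\ell$ by a single aggregated update — noting that $\rho$ is applied to the finite multiset $\{\!\!\{a_i \mid (\ell,a_i)\in\ddot{\Delta}\}\!\!\}$ — so the output is again a finite set of finite update (multi)sets, and similarly for the multiset version.

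The main obstacle is the \emph{seq} rule, for two reasons. First, it is the only constructor whose semantics refers to a different state, namely $S + \Delta_1$; to invoke the induction hypothesis on $r_2$ at that state one must know that $S + \Delta_1$ is a well-defined meta-finite state, which holds precisely because $\Delta_1$ is taken to be consistent and, crucially, finite (by IH applied to $r_1$), so $S + \Delta_1$ differs from $S$ in finitely many locations and the database part stays finite. Second, one must check that $\Delta(\textbf{seq}\ r_1\ r_2,S,\zeta)$ is finite: it is the union of $\{\Delta_1 \oslash \Delta_2 \mid \Delta_1 \in \Delta(r_1,S,\zeta)\text{ consistent},\ \Delta_2 \in \Delta(r_2, S+\Delta_1,\zeta)\}$ and $\{\Delta_1 \in \Delta(r_1,S,\zeta) \mid \Delta_1 \text{ inconsistent}\}$; the first piece is indexed by pairs $(\Delta_1,\Delta_2)$ drawn from $\Delta(r_1,S,\zeta)$ (finite by IH) and, for each fixed consistent $\Delta_1$, from $\Delta(r_2, S+\Delta_1,\zeta)$ (finite by IH applied to $r_2$ at the legitimate state $S+\Delta_1$), so it is a finite union of finite sets; the second piece is a subset of the finite set $\Delta(r_1,S,\zeta)$. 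Each resulting member $\Delta_1 \oslash \Delta_2 = \Delta_2 \cup \{(\ell,a)\in\Delta_1 \mid \ell \neq \ell' \text{ for all }(\ell',a')\in\Delta_2\}$ is a subset of $\Delta_1 \cup \Delta_2$, hence finite. The update-multiset version of \textbf{seq} is handled identically, using the side condition $\Delta_1 = \{(\ell,a) \mid (\ell,a)\in\ddot{\Delta}_1\}$ to form $S + \Delta_1$. This closes the induction.
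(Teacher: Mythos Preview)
Your proposal is correct and follows essentially the same approach as the paper: structural induction on $r$, with the key observation that the bound variable in \textbf{forall} and \textbf{choose} ranges over the finite set $B_{db}$. Your treatment is in fact more detailed than the paper's sketch, which lumps \textbf{par}, \textbf{choose}, \textbf{seq}, and \textbf{let} together under ``all other rules''; in particular, your explicit discussion of why $S+\Delta_1$ is a well-defined state in the \textbf{seq} case is a point the paper glosses over.
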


\begin{proof}(Sketch). We use structural induction on $r$. The case of the assignment rule is obvious, as a single update will be created.

The conditional rule either produces exactly the same update sets and multisets as before or a single empty update set and multiset, respectively. For the forall rule the set $V = \{ a \in B_{db}\mid [\![\varphi]\!]_{S,\zeta[x \mapsto a]} = \mathit{true} \}$ is finite, because $x$ ranges over the finite set $B_{db}$. The stated finiteness then follows by induction, as $\Delta(r,S,\zeta)$ and $\ddot{\Delta}(r,S,\zeta)$ are finite sets, all $U \in \Delta(r,S,\zeta)$ and all $\ddot{U} \in \ddot{\Delta}(r,S,\zeta)$ are finite, and the new update sets and update multisets are built by set and multiset unions, respectively, that range over the finite set $V$.

For all other rules, the individual update sets and multisets are built by $\cup$, $\uplus$, $\oslash$ and aggregation with location operators applied to finite update sets and multisets, which gives the statements 2 and 3. Furthermore, the sets of update sets and update multisets, respectively, are built by comprehensions that range over finite sets. Hence they are finite as well, which gives statement 1 and completes the proof.

\end{proof}

A {\em Database Abstract State Machine} (DB-ASM) $M$ over signature
$\Upsilon$ consists of

\begin{itemize}

\item a set $\mathcal{S}$ of states over $\Upsilon$, non-empty subsets $\mathcal{S}_I \subseteq \mathcal{S}$ of initial states and $\mathcal{S}_F \subseteq \mathcal{S}$ of final states,

\item a closed DB-ASM rule $r$ over $\Upsilon$, and

\item a binary successor \emph{relation} $\delta$ over $\mathcal{S}$ determined by $r$, i.e.
\[ \delta=\{(S,S+U)|U\in \Delta(r,S) \text{ consistent} \} , \]

\end{itemize}
where the set $\Delta(r,S)$ ($\zeta$ is omitted from
$\Delta(r,S,\zeta)$ since $r$ is closed) of update sets yielded by
rule $r$ over the state $S$ defines the successor relation $\delta$ of $M$. A {\em
run} of $M$ is a finite sequence $S_0 ,\dots, S_n$ of states with
$S_0 \in \mathcal{S}_I$, $S_n \in \mathcal{S}_F$, $S_i \notin
\mathcal{S}_F$ for $0 < i < n$, and $(S_i,S_{i+1}) \in \delta$ for
all $i=0,\dots,n-1$.

\section{A Logic for DB-ASMs}\label{sec:dtc}

In this section we introduce a logic for DB-ASMs. This logic for DB-ASMs, which we denote as ${\cal L}^{db}$, is built as an extension of the logic ${\cal L}^{FO}$ of meta-finite structures used in the formalization of DB-ASMs (see Section~\ref{sec:states}). 

We start with an informal introduction which highlights the main characteristics of ${\cal L}^{db}$ and provides some illustrative examples. Then we proceed to introduce its formal syntax and semantics. 

Same as in the logic of meta-finite structures ${\cal L}^{FO}$, in ${\cal L}^{db}$ we distinguish between \emph{database terms} which are interpreted in the finite primary part of the states of DB-ASMs, and \emph{algorithmic terms} which are interpreted in the possible infinite secondary part. The set ${\cal T}_{a}$ of algorithmic terms needs however to be extended with first-order variables which range over the secondary part of the state and with a new kind of term (the $\rho$-terms). 

\emph{$\rho$-terms} are terms of the form $\rho_{v}(t|\varphi)$ where $\rho$ is a multiset operator, $t$ is a term in ${\cal T}_a$, $v$ is a variable which ranges either over the primary or secondary part of the state, and $\varphi$ is a ${\cal L}^{db}$ formula.  They are interpreted by the value resulting of applying the multiset operator $\rho$ to the multiset resulting of collecting the values of $t$ under all valuations that satisfy $\varphi$. The need for $\rho$-terms arises from the fact that DB-ASMs are able to collect updates yielded in parallel computations under the multiset semantics, i.e., update multisets, and then aggregate updates in an update multiset to an update set by using location operators.  

\begin{example}\label{exa:logic-syntax-term}
Consider the relation \textsc{Route} in
Fig.~\ref{fig:initialstate}. The following aggregate
queries are expressible by means of $\rho$-terms.

\begin{itemize}
  \item Q$_1$: Calculate the total number of direct routes.

\begin{center}
    $\textsc{Count}_{x}(1 \mid \exists y z (\textsc{Route}(x,y,z)))$
\end{center}

In an SQL database, Q$_1$ can be expressed by the following SQL
statement:\smallskip

    \begin{center}
    SELECT $count$(*) FROM \textsc{Route}
    \end{center}

  \item Q$_2$: Find the maximum number of direct connections of any city in the database.

\begin{center}
    $\textsc{Max}_{x} (\textsc{Count}_{y'}(1 \mid \exists z (\textsc{Route}(x,y',z))) \mid \exists y z (\textsc{Route}(x,y,z)))$
\end{center}

In a similar way, Q$_2$ can be expressed by the following SQL
statement:

\bigskip
          \hspace{1.5cm}SELECT $max$(NumofConnections)

          \hspace{1.5cm}FROM (SELECT Cid, $count$(*) as NumofConnections

                \hspace{2.85cm}FROM \textsc{Route}

                \hspace{2.85cm}GROUP BY Cid)

\end{itemize}

\end{example}

Due to the importance of non-determinism for enhancing the expressive power of database transformations, DB-ASMs include a non-deterministic choice rule. Consequently, it is no longer enough to consider the individual updates associated to the \emph{unique} update set produced by a rule of a deterministic ASM, as it is the case in the logic for ASMs~\cite{RobertLogicASM} of Nanchen and St\"ark. Instead, the logic ${\cal L}^{db}$ needs to be able to describe properties of the different update sets (and multisets) which can be associated to a given DB-ASM rule. That is, the logic ${\cal L}^{db}$ should allow us to handle multiple update sets, since a non-deterministic DB-ASM rule can produce a possible different update set for each of the possible choices.

 A natural and concise way of handling update sets (and multisets) is by means of second-order variables and second-order quantification. We therefore include both in the language of ${\cal L}^{db}$, albeit with a Henkin's semantics instead of the standard Tarski's semantics, so that we can avoid the well known incompleteness result of second-order logic.  
For the same reason we additionally include the multi-modal operator $[X]$, where $X$ is a second-order variable of arity $3$. The intended meaning of a formula $[X]\varphi$ is that $\varphi$ is true in the state obtained by applying the updates in $X$ to the current state. In turn, to express that $X$ is the update set produced by a rule $r$, we also include atomic formulae of the form $\textrm{upd}(r, X)$ to the language of ${\cal L}^{db}$.  

In order to encode update sets into second-order variables, we need to make some assumptions more precise.

\begin{definition}\label{extendedState}
Given a DB-ASM of some schema $\Upsilon$, we extend the sub-schema $\Upsilon_a$ of the algorithmic part with a new nullary and static function symbol (constant) $c_{f_i}$ for each dynamic function $f_i \in \Upsilon$. We assume that in every state $S$, these new constant symbols are interpreted by arbitrary, but pairwise different values. That is, if $c_{f_i}$ and $c_{f_j}$ are among the new constant symbols, then $c^S_{f_i} \neq c^S_{f_j}$.    

\smallskip

Let $S$ be a state of this extended signature $\Upsilon$, let $\zeta$ be a variable assignment into $S$, let $X$ be a second-order variable of arity $3$ and let $U = \{(f_i, a_1, a_2) \mid (a_0, a_1, a_2) \in \zeta(X) \, \text{and} \, a_0 = c_{f_i}^S\}$. We say $\zeta(X)$ \emph{represents} $U$ if $U$ constitutes an update set for $S$ and $(a_0, a_1, a_2) \in \zeta(X)$ iff $a_0 = c^S_f$ for some dynamic function $f \in \Upsilon$ and $(f, a_1, a_2) \in U$. 
\end{definition}

As noted earlier, the multiset semantics allows DB-ASMs to collect updates yielded in parallel computations, i.e., update multisets. This multiset semantics is handled via the inclusion of atomic formulae of the form $\mathrm{upm}(r,X)$. In this case, the intended meaning is that $\mathrm{upm}(r,X)$ is true if $X$ is a second-order variable of arity $4$ which represents an update multiset yielded by the rule $r$. We say that $X$ represents an update multiset $\ddot{U}$ iff for every update $(f, a_0, a_1) \in \ddot{U}$ with multiplicity $n > 0$ there are \emph{exactly} $n$ distinct $b_1, \ldots, b_n$ such that $(f,a_0,a_1,b_i) \in X$ and vice versa. 

\begin{example}\label{exa:syntaxformulae}
Consider Example~\ref{exa:DBASM-db} and the corresponding DB-ASM depicted in Fig.~\ref{exa:DBASM-code}. Let $r$ denote the main rule of the DB-ASM and $S$ denote one of its states.
The following formulae illustrate how the logic ${\cal L}^{db}$ can be used to express desirable properties of this DB-ASM. 

\begin{description}

\item [1.] If the rule $r$ over $S$ yields an update set $U$ containing an update $(\textsc{Visited}, x, \textsc{True})$, then for every neighbour city $y$ of $x$, the (current) shortest distance in state $S+U$ (calculated by the algorithm) between $y$ and $c$ is no longer \textsc{Infinity}. Representing $U$ by the second-order variable $X$, we obtain:

$\exists X x \Big(\mathrm{upd}(r,X)\wedge X(c_{\textsc{Visited}}, x, \textsc{True}) \rightarrow$ \\
\hspace*{5cm}$[X] \forall y z (\textsc{Route}(x,y,z) \rightarrow \textsc{Dist}(y) \neq \textsc{Infinity})\Big)$\\[0.1cm]

\item [2.] If in the current state $S$, the distance between a non-visited (by the algorithm) city $x$ and $c$ is minimal among the non-visited cities, then there is an update set $U$ yielded by the rule $r$ in state $S$ which updates the status of $x$ to visited. Representing $U$ by the second-order variable $X$, we obtain:

$\exists x \Big(\neg \textsc{Visited}(x) \wedge \forall y \big(\neg \textsc{Visited}(y) \rightarrow \textsc{Dist}(x) \leq \textsc{Dist}(y)\big) \rightarrow$ \\
\hspace*{7.5cm}$\exists X (\mathrm{upd}(r, X) \wedge [X] \textsc{Visited}(x))\Big)$\\[0.1cm]

\item [3.] If the current state $S$ is not an initial nor a final state and $U$ is an update set yielded by the rule $r$ in $S$, then the value of $\textsc{MDist}$ in the successor state $S+U$ equals the distance between $c$ and the closest unvisited (by the algorithm in state $S$) city. Representing $U$ by the second-order variable $X$ and using a $\rho$-term with location operator $\textsc{Min}$, we obtain:

$\forall X \Big(\neg \textsc{Initial} \wedge \exists  x y (\textsc{City}(x, y) \wedge \neg \textsc{Visited}(x)) \wedge \mathrm{upd}(r,X) \rightarrow$\\
\hspace*{2.8cm} $[X] \textsc{MDist} = \textsc{Min}_x\big(\textsc{Dist}(x) \mid \exists y (\textsc{City}(x, y) \wedge \neg \textsc{Visited}(x))\big)\Big)$\\[0.1cm]

\item [4.] If the current state $S$ is not an initial nor a final state and $U$ is an update set yielded by the rule $r$ in $S$, then every update multiset $\ddot{U}$ yielded by $r$ in $S$ contains at least one update $(\textsc{MDist}, (), a_i)$ such that $a_i$ coincides with the value stored in the location $(\textsc{MDist},())$ in the successor state $S+U$ and $a_i \leq a_j$ for every update $(\textsc{MDist},(), a_j)$ in $\ddot{U}$. Representing $U$ and $\ddot{U}$ by the second-order variables $X$ and $Y$, respectively, we obtain:

$\forall X \Big(\neg \textsc{Initial} \wedge \exists  x y (\textsc{City}(x, y) \wedge \neg \textsc{Visited}(x)) \wedge \mathrm{upd}(r,X) \rightarrow$\\
\hspace*{2.8cm} $\forall Y \big(\mathrm{upm}(r,Y) \rightarrow \exists \mathtt{x} \mathtt{y} (Y(c_{\textsc{MDist}},(),\mathtt{x},\mathtt{y}) \wedge [X]\textsc{MDist} = \mathtt{x} \, \wedge$\\
\hspace*{6.3cm} $\forall \mathtt{x}'\mathtt{y}' (Y(c_{\textsc{MDist}},(),\mathtt{x}',\mathtt{y}') \rightarrow \mathtt{x} \leq \mathtt{x}' )) \big)\Big)$

\end{description}

\end{example}

\subsection{Syntax}\label{subsec-adtmlogic-syntax}

The set of database and algorithmic terms of the logic ${\cal L}^{db}$ is defined as follows. 

\begin{definition}\label{def-meta-finitelogic-term-formula}
Let $\Upsilon=\Upsilon_{db}\cup\Upsilon_a\cup\mathcal{F}_b$ be a
signature of meta-finite states and let $\Lambda$ denote a set of location operators. Fix a countable set $\mathcal{X} = \mathcal{X}_{db} \cup \mathcal{X}_a$ of first-order variables. Variables in $\mathcal{X}_{db}$, denoted with standard lowercase letters $x, y, z , \ldots$, range over the database part of meta-finite states (i.e., the finite base set $B_{db}$), whereas variables in $\mathcal{X}_a$, denoted with typewriter-style lowercase letters $\texttt{x}, \texttt{y}, \texttt{z}, \ldots$, range over the algorithmic part of meta-finite states (i.e. the possible infinite base set $B_a$). 
The set of terms of ${\cal L}^{db}$ is formed by the set $\mathcal{T}_{db}$ of \emph{database terms} and the set
$\mathcal{T}_{a}$ of \emph{algorithmic terms} as defined by the following rules:
\begin{itemize}
    \item If $x\in \mathcal{X}_{db}$, then $x$ is a database term in $\mathcal{T}_{db}$.
    \item If $\texttt{x}\in \mathcal{X}_{a}$, then $\texttt{x}$ is an algorithmic term in $\mathcal{T}_{a}$. 
    \item If $f \in \Upsilon_{db}$ and $t \in \mathcal{T}_{db}$, then $f(t)$ is a database term in $\mathcal{T}_{db}$.
    \item If $f\in \mathcal{F}_b$ and $t \in \mathcal{T}_{db}$, then $f(t)$ is an algorithmic term in $\mathcal{T}_{a}$.
    \item If $f\in \Upsilon_{a}$ and $t \in \mathcal{T}_{a}$, then $f(t)$ is an algorithmic term in $\mathcal{T}_{a}$.
    \item If $\rho$ is a location operator in $\Lambda$, $\varphi$ is ${\cal L}^{db}$-formula (as in Definition~\ref{def:logicsyntaxformulae} below), $t\in \mathcal{T}_{a}$ and $v \in \mathcal{X}_{db} \cup \mathcal{X}_{a}$, then $\rho_{v}(t \mid \varphi)$ is an algorithmic term in $\mathcal{T}_{a}$.
    \item Nothing else is a term in $\mathcal{T}_{db}$ or $\mathcal{T}_{a}$.  
    \end{itemize}
\end{definition}

A \emph{pure term} is defined as a term that is \emph{not} a $\rho$-term and does \emph{not} contain any sub-term which is a $\rho$-term. 

Next, we formally introduce the set of well formed formulae of ${\cal L}^{db}$. 

\begin{definition}\label{def:logicsyntaxformulae}
Let $\Upsilon=\Upsilon_{db}\cup\Upsilon_a\cup\mathcal{F}_b$ be a
signature of meta-finite states and let $\Lambda$ denote a set of location operators. Let $\mathcal{T}_{db}$ and $\mathcal{T}_{a}$ be the corresponding set of database and algorithmic terms over $\Upsilon$ and $\Lambda$ (as per Definition~\ref{def-meta-finitelogic-term-formula}). Extend the set $\cal X$ of first-order variables with a countable set of second-order (relation) variables of arity $r$ for each $r \geq 1$. 
The following rules define the set of well formed formulae (wff) of ${\cal L}^{db}$.
\begin{enumerate}
\item If $s$ and $t$ are terms in $\mathcal{T}_{db}$, then $s=t$ is a wff.
\item If $s$ and $t$ are terms in $\mathcal{T}_{a}$, then $s=t$ is a wff.
\item If $t_1, \ldots, t_r$ are terms in $\mathcal{T}_{db} \cup \mathcal{T}_{a}$ and $X$ is a second-order variable of arity $r$, then $X(t_1, \ldots, t_r)$ is a wff.
\item If $r$ is a DB-ASM rule and $X$ is a second-order variable of arity $3$, then $\textrm{upd}(r, X)$ is a wff.
\item If $r$ is a DB-ASM rule and $X$ is a second-order variable of arity $4$, then $\textrm{upm}(r,X)$ is a wff.
\item If $\varphi$ is a wff, then $(\neg \varphi)$ is a wff.
\item If $\varphi$ and $\psi$ are wff's, then $(\varphi \vee \psi)$ is a wff.
\item If $\varphi$ is a wff and $x \in {\cal X}_{db}$, then $\forall x (\varphi)$ is a wff.
\item If $\varphi$ is a wff and $\mathtt{x} \in {\cal X}_{a}$, then $\forall \mathtt{x} (\varphi)$ is a wff.
\item If $\varphi$ is a wff and $X$ is a second-order variable, then $\forall X (\varphi)$ is a wff.
\item If $\varphi$ is a wff and $X$ is a second-order variable of arity $3$, then $([X]\varphi)$ is a wff.
\item Nothing else is a wff.  
\end{enumerate}
\end{definition}

Formulae of the form $\varphi \wedge \psi$, $\varphi \rightarrow \psi$, $\exists x (\varphi)$, $\exists \mathtt{x} (\varphi)$ and $\exists X (\varphi)$ are considered as abbreviations of $\neg (\neg \varphi \vee \neg \psi)$, $\neg \varphi \vee \psi$, $\neg \forall x (\neg \varphi)$, $\neg \forall \mathtt{x} (\neg \varphi)$ and $\neg \forall X (\neg \varphi)$, respectively. 

We say that a formula of ${\cal L}^{db}$ is a \emph{pure formula} if it can be defined using only the rules 1--2 and 6--9 in Definition~\ref{def:logicsyntaxformulae} and does \emph{not} contains any $\rho$-term. Notice that the formula occurring in the \textbf{if}, \textbf{forall} and \textbf{choose} rules of DB-ASMs satisfy this definition, i.e., they are pure formulae of ${\cal L}^{db}$. We also say that a term or formula of ${\cal L}^{db}$ is \emph{static} if it does \emph{not} contain any dynamic function symbol. Since static functions cannot be updated,  it is clear that the value of a static term as well as the truth value of a static formula cannot change during the run of a DB-ASM. 

The atomic formulae~$\textrm{upd}(r, X)$ and~$\textrm{upm}(r,X)$ are not strictly necessary. As shown latter (see Lemmas~\ref{lem-upd} and~\ref{lem-upm}), they can be eliminated from the language of ${\cal L}^{db}$ without affecting its expressive power.

\subsection{Semantics}\label{subsec-adtmlogic-semantics}

We use a semantics due to Henkin~\cite{henkin1950}, in which the interpretation of second-order quantifiers is part of the specification of a structure (state) rather than an invariant through all models as in the case of the standard Tarski's semantics. 

\begin{definition}\label{HenkinPreStructure}
Let $\Upsilon=\Upsilon_{db}\cup\Upsilon_a\cup\mathcal{F}_b$ be a signature of meta-finite states. A \emph{Henkin meta-finite $\Upsilon$-prestructure} $S$ is a meta-finite state of signature $\Upsilon$ and nonempty base set $B = B_{db} \cup B_a$, which is extended with a new universe $D_n$ of $n$-ary relations for each $n \geq 1$, where $D_n \subseteq {\cal P}(\underbrace{B \times \cdots \times B}_{n})$.  
\end{definition} 

Variable assignments into a Henkin meta-finite prestructure $S$ are defined as usual, except that we require that every assignment $\zeta$ satisfies the following conditions:
\begin{itemize}
\item If $x$ is a first-order variable in ${\cal X}_{db}$, then $\zeta(x) \in B_{db}$.
\item If $\mathtt{x}$ is a first-order variable in ${\cal X}_a$, then $\zeta(\mathtt{x}) \in B_a$.
\item If $X$ is a second-order variable of arity $n$, then $\zeta(X) \in D_n$.
\end{itemize}  

Given a variable assignment, terms of the logic ${\cal L}^{db}$ can be interpreted into a Henkin meta-finite prestructure. 

\begin{definition}\label{termsInterpretation}
Let $S$ be a Henkin meta-finite prestructure of signature $\Upsilon=\Upsilon_{db}\cup\Upsilon_a\cup\mathcal{F}_b$ and let $\zeta$ be a variable assignment into $S$. If $t$ is a term (either a database term or an algorithmic term), then the value (interpretation) of $t$ in $S$ under $\zeta$ (denoted $val_{S,\zeta}(t)$) is defined by the following rules:
\begin{itemize}
    \item If $t$ is a variable $x \in \mathcal{X}_{db}$ or $\texttt{x} \in \mathcal{X}_{a}$, then $val_{s,\zeta}(t) = \zeta(t)$.
    \item If $t$ is of the form $f(t')$ where $f \in \Upsilon$ and $t'$ is a term, then $val_{s,\zeta}(t) = f^S(val_{S,\zeta}(t'))$.
    \item If $t$ is of the form  $\rho_{v}(t' \mid \varphi)$, then \[val_{S,\zeta}(t)= \rho(\{\!\!\{val_{S,\zeta[v\mapsto a_i]}(t') \mid a_i \in D \text{ and } [\![\varphi]\!]_{S,\zeta[v \mapsto a_i]}=true \}\!\!\}),\] where $D = B_{db}$ or $D = B_{a}$ depending on whether $v \in {\cal X}_{db}$ or $v \in {\cal X}_{a}$, respectively.
\end{itemize}
\end{definition}

The interpretation of ${\cal L}^{db}$-formulae into Henkin meta-finite prestructures is defined as follows.  

\begin{definition}
Let $S$ be a Henkin meta-finite prestructure of signature $\Upsilon=\Upsilon_{db}\cup\Upsilon_a\cup\mathcal{F}_b$, extended as per Definition~\ref{extendedState} with a new and different constant symbol $c_{f_i}$ for each dynamic function symbol $f_i \in \Upsilon$. Let $\zeta$ be a variable assignment into $S$. For $X$ a second-order variable of arity $3$, we abuse the notation by writing $val_{S,\zeta}(X) \in \Delta(r,S,\zeta)$, meaning that there is a set $U \in \Delta(r,S,\zeta)$ such that $(f, a_0, a_1) \in U$ iff $(c_f^S, a_0, a_1) \in val_{S,\zeta}(X)$. Likewise, for $X$ a second-order variable of arity $4$, we write $val_{S,\zeta}(X) \in \ddot{\Delta}(r,S,\zeta)$, meaning that there is a multiset $\ddot{U} \in \Delta(r,S,\zeta)$ such that $(f, a_0, a_1) \in \ddot{U}$ with multiplicity $n$ iff there are exactly $b_1, \ldots, b_n$ pairwise different values such that $(c_f^S, a_0, a_1, b_i) \in val_{S,\zeta}(X)$ for every $1 \leq i \leq n$.

If $\varphi$ is an ${\cal L}^{db}$-formula, then the truth value of $\varphi$ on $S$ under $\zeta$ (denoted as $[\![\varphi]\!]_{S,\zeta}$) is either $\mathit{true}$ or $\mathit{false}$ and it is determined by the following rules:
\begin{itemize}
\item If $\varphi$ is of the form $s=t$, then $[\![\varphi]\!]_{S,\zeta} = \begin{cases} \mathit{true} &\text{if } val_{S,\zeta}(s)=val_{S,\zeta}(t);\\ \textit{false} & \text{otherwise.}\end{cases}$
\item If $\varphi$ is of the form $X(t_1, \ldots, t_r)$, then \\\hspace*{3.5cm}$[\![\varphi]\!]_{S,\zeta} = \begin{cases} \mathit{true} &\text{if } (val_{S,\zeta}(t_1), \ldots, val_{S,\zeta}(t_n)) \in val_{S, \zeta}(X);\\ \textit{false} & \text{otherwise.} \end{cases}$
\item If $\varphi$ is of the form $\textrm{upd}(r,X)$, then $[\![\varphi]\!]_{S,\zeta} = \begin{cases} \mathit{true} &\text{if } val_{S,\zeta}(X) \in \Delta(r,S,\zeta);\\ \textit{false} & \text{otherwise.} \end{cases}$
\item If $\varphi$ is of the form $\textrm{upm}(r, X)$, then $[\![\varphi]\!]_{S,\zeta} = \begin{cases} \mathit{true} &\text{if } val_{S,\zeta}(X) \in \ddot{\Delta}(r,S,\zeta);\\ \textit{false} & \text{otherwise.} \end{cases}$
\item If $\varphi$ is of the form $(\neg \psi)$, then $[\![\varphi]\!]_{S,\zeta} = \begin{cases} \mathit{true} &\text{if } [\![\psi]\!]_{S,\zeta}=\mathit{false};\\ \textit{false} & \text{otherwise.}\end{cases}$
\item If $\varphi$ is of the form $(\alpha \vee \psi)$, then $[\![\varphi]\!]_{S,\zeta} = \begin{cases} \mathit{true} &\text{if } [\![\alpha]\!]_{S,\zeta}=\mathit{true} \, \text{or} \, [\![\psi]\!]_{S,\zeta}=\mathit{true};\\ \textit{false} & \text{otherwise.}\end{cases}$
\item If $\varphi$ is of the form $\forall x (\psi)$, then $[\![\varphi]\!]_{S,\zeta} = \begin{cases} \mathit{true} &\text{if } [\![\psi]\!]_{S,\zeta[x\mapsto a]}=\mathit{true} \, \text{for all} \, a\in B_{db};\\ \textit{false} & \text{otherwise.}\end{cases}$  
\item If $\varphi$ is of the form $\forall \mathtt{x} (\psi)$, then $[\![\varphi]\!]_{S,\zeta} = \begin{cases} \mathit{true} &\text{if } [\![\psi]\!]_{S,\zeta[\mathtt{x}\mapsto a]}=\mathit{true} \, \text{for all} \, a\in B_{a};\\ \textit{false} & \text{otherwise.}\end{cases}$  
\item If $\varphi$ is of the form $\forall X (\psi)$, where $X$ is a second-order variable of arity $n$, then  $[\![\varphi]\!]_{S,\zeta} = \begin{cases} \mathit{true} &\text{if } [\![\psi]\!]_{S,\zeta[X \mapsto R]}=\mathit{true} \, \text{for all} \, R \in D_n;\\ \textit{false} & \text{otherwise.}\end{cases}$  
\item If $\varphi$ is of the form $([X]\psi)$, then \\ \hspace*{0.7cm}$[\![\varphi]\!]_{S,\zeta} = \begin{cases} \mathit{false} & \text{if} \, \zeta(X) \, \text{represents (as per Definition~\ref{extendedState}) an update set} \, U
\\ & \text{such that} \, U \, \text{is consistent} \, \text{and} \, [\![\psi]\!]_{S+U,\zeta}=\mathit{false};\\ \textit{true} & \text{otherwise.}\end{cases}$  
\end{itemize}
\end{definition}

\begin{remark}
Note that if $\varphi$ is of the from $([X]\psi)$, then $\varphi$ is interpreted as $\textit{true}$ in any of the following cases: 
\begin{itemize}
\item $\zeta(X)$ represents an update set $U$ which is inconsistent. 
\item $\zeta(X)$ does \emph{not} represents an update set.
\item $\zeta(X)$ represents a consistent update set $U$ and $\psi$ is interpreted as $\mathit{true}$ in $S+U$. 
\end{itemize}
\end{remark}

For a sentence $\varphi$ of ${\cal L}^{db}$ to be valid in the given Henkin's semantics, it must be true in all Henkin meta-finite prestructures. This is a stronger requirement than saying that $\varphi$ is valid in the standard Tarski's semantics. A sentence that is valid in the standard Tarski's semantics is true in those Henkin meta-finite prestructures for which each universe $D_n$ is interpreted as the set of all relations of arity $n$. But such a sentence $\varphi$ might turn out to be false in some Henkin meta-finite prestructure (i.e., $\neg \varphi$ might evaluate to true in some Henkin meta-finite prestructure).

Clearly, we do \emph{not} want the universes $D_n$ of the Henkin meta-finite prestructures to be any arbitrary collections of $n$-ary relations. It is then reasonable to restrict our attention to some collections of n-ary relations that we know about, because we can define them.  

\begin{definition}\label{HenkinStructure}
A \emph{Henkin meta-finite structure} for a second-order language is a Henkin meta-finite prestructure $S$ that is closed under definability, i.e., for every formula $\varphi$, variable assignement $\zeta$ and arity $n \geq 1$, we have that \[\{\bar{a} \in A^n \mid [\![\varphi]\!]_{S,\zeta[a_1 \mapsto x_1, \ldots, a_n \mapsto x_n]} = \textit{true}\} \in D_n.\]  
\end{definition}

In the following, we restrict our attention to Henkin meta-finite structures. Notice that, if $M$ is a DB-ASM of some vocabulary $\Upsilon$ of meta-finite structures, we can use ${\cal L}^{db}$ formulae of the vocabulary $\Upsilon$ (extended with constant symbols $c_{f_i}$ for each dynamic function symbol $f_i \in \Upsilon$ as per Definition~\ref{extendedState}) to express properties of $M$. We can then verify these properties by evaluating the formulae over appropriate Henkin meta-finite structures of the (extended) vocabulary $\Upsilon$, and use the complete proof system which we introduce next to derive logical consequences. 

\section{A Proof System} \label{sec:proof_system}

In this section we develop a proof system for the logic ${\cal L}^{db}$ for DB-ASMs.

\begin{definition}\label{def-implied-formula}
We say that a Henkin meta-finite structure $S$ is a \emph{model} of a formula $\varphi$ (denoted as $S \models \varphi$) iff $[\![\varphi]\!]_{S,\zeta}= \textit{true}$ holds for every variable assignment $\zeta$.  If $\Psi$ is a set of formulae, we say that $S$ \emph{models} $\Psi$ (denoted as $S \models \Psi$) iff $S \models \varphi$ for each $\varphi \in \Psi$.
A formula $\varphi$ is said to be a \emph{logical consequence} of a set $\Psi$ of formulae (denoted as $\Psi\models\varphi$) if for every Henkin meta-finite structure $S$, if $S \models \Psi$, then $S \models \varphi$.
A formula $\varphi$ is said to be \emph{valid} (denoted as $\models \varphi$) if $[\![\varphi]\!]_{S,\zeta}=true$ in every Henkin meta-finite structure $S$ with every variable assignment $\zeta$.
A formula $\varphi$ is said to be \emph{derivable} from a set $\Psi$ of formulae (denoted as $\Psi\vdash_{\mathfrak{R}}\varphi$) if there is a deduction from formulae in $\Psi$
to $\varphi$ by using a set $\mathfrak{R}$ of axioms and inference rules.

\end{definition}

We will define such a set $\mathfrak{R}$ of axioms and rules in Subsection \ref{sub:AxiomsRules}. Then we simply write $\vdash$ instead of $\vdash_{\mathfrak{R}}$. We also define equivalence between two DB-ASM rules. 

\begin{definition} \label{def-equivalent-rules}Let $r_1$ and $r_2$ be two DB-ASM rules. Then
$r_1$ and $r_2$ are \emph{equivalent} (denoted as $r_1\equiv r_2$)
if for every Henkin meta-finite structure $S$ it holds that \[S \models \forall X (\mathrm{upd}(r_1,X)\leftrightarrow \mathrm{upd}(r_2,X)).\]
\end{definition}

The substitution of a term $t$ for a variable $x$ in a formula
$\varphi$ (denoted as $\varphi[t/x]$) is defined by the rule of
substitution. That is, $\varphi[t/x]$ is the result of replacing all
free instances of $x$ by $t$ in $\varphi$ provided that no free
variable of $t$ becomes bound after substitution.

\subsection{Consistency}\label{sub:Consistency}

In \cite{RobertLogicASM} Nanchen and St\"ark use a predicate $\mathrm{Con}(r)$ as an
abbreviation for the statement that the rule $r$ is consistent. As a rule $r$ in their work is considered to be
deterministic, there is no ambiguity with the reference to the
update set associated with $r$, i.e., each deterministic rule $r$ generates exactly one (possibly empty) update set. Thus a deterministic rule $r$ is consistent iff the update set generated by $r$ is consistent.
However, in the logic for DB-ASMs, the presence of non-determinism
makes the situation less straightforward.

Instead, given a rule $r$ of a signature $\Upsilon = \Upsilon_{db} \cup \Upsilon_a \cup {\cal F}_b$ of a DB-ASM, we can use $\mathrm{con}(r,X)$ to expresses that $X$ represents one of the possible update sets generated by the rule $r$ (which in our setting can be non-deterministic) and that $X$ is consistent. This can be expressed in the logic ${\cal L}^{db}$ with the following formula:
\begin{equation}\label{conr}
\text{con}(r,X)\equiv\mathrm{upd}(r,X)\wedge\mathrm{conUSet}(X)
\end{equation}
where 
\begin{eqnarray}\label{con}
\mathrm{conUSet}(X)\equiv\bigwedge\limits_{c_f\in \mathcal{F}_{\mathit{dyn}} \wedge f \in \Upsilon_{db}} \forall x y z ((X(c_f,x,y) \wedge X(c_f, x, z)) \rightarrow y=z) \wedge \\
\bigwedge\limits_{c_f\in \mathcal{F}_{\mathit{dyn}} \wedge f \in \Upsilon_{a}} \forall \mathtt{x} \mathtt{y} \mathtt{z} ((X(c_f,\mathtt{x},\mathtt{y}) \wedge X(c_f, \mathtt{x}, \mathtt{z})) \rightarrow \mathtt{y}=\mathtt{z}) \wedge \nonumber\\
\bigwedge\limits_{c_f\in \mathcal{F}_{\mathit{dyn}} \wedge f \in {\cal F}_b} \forall x \mathtt{y} \mathtt{z} ((X(c_f,x,\mathtt{y}) \wedge X(c_f, x, \mathtt{z})) \rightarrow \mathtt{y}=\mathtt{z}) \nonumber
\end{eqnarray}
for $\mathcal{F}_{\mathit{dyn}}$ the set of constants representing the dynamic function symbols in $\Upsilon$ (see Definition~\ref{extendedState}).   

As the rule $r$ may be non-deterministic, it is possible that $r$
yields several update sets. Thus, we develop the consistency of DB-ASM rules in
two versions:

\begin{itemize}

\item A rule $r$ is \emph{weakly consistent} (denoted as
$\mathrm{wcon}(r)$) if at least one update set generated
by $r$ is consistent. This can be expressed as follows:
\begin{equation}\label{wcon}
  \text{wcon}(r)\equiv \exists X (\mathrm{con}(r,X))
\end{equation}

\item A rule $r$ is \emph{strongly consistent} (denoted as
scon$(r)$) if every update set generated by $r$
is consistent. This can be expressed as follows:
\begin{equation}\label{scon}
  \text{scon}(r)\equiv \forall X (\text{upd}(r,X)\rightarrow\mathrm{conUSet}(X))
\end{equation}

\end{itemize}

In the case that a rule $r$ is deterministic, the weak notion of
consistency coincides with the strong notion of consistency, i.e.,
$\mathrm{wcon}(r) \equiv \mathrm{scon}(r)$.

\subsection{Update Sets}\label{sub:UpdateSets}

We present the axioms for the predicate $\mathrm{upd}(r,X)$ of the logic ${\cal L}^{db}$. Since a DB-ASM rules may be non-deterministic, a straightforward extension of the formalisation of $\mathrm{upd}$ for the forall and parallel rules used in the logic for ASMs~\cite{RobertLogicASM} is not sufficient in our case (cf. Axioms~\textbf{U3} and~\textbf{U4} below with the corresponding axions in~\cite{RobertLogicASM}). 

As before, we assume that if $f$ is a dynamic function symbol in the given signature of meta-finite states $\Upsilon = \Upsilon_{db} \cup \Upsilon_a \cup {\cal F}_b$, then there is a corresponding constant (static nullary function) symbol $c_f \in \Upsilon_a$ as per Definition~\ref{extendedState}. We use ${\cal F}_{\mathit{dyn}}$ to denote the set of all $c_f$ such that $f$ is a dynamic function symbol in $\Upsilon$. In the following, $S$ denotes an arbitrary Henkin structure of signature $\Upsilon$ and $B=B_{db}\cup B_a$ denotes the base set (domain) of the database and algorithmic parts of $S$. W.l.o.g. we further assume that $B_{db} \cap B_a = \emptyset$. 

Let $\zeta$ be a valuation into $S$. In the formulation of the axioms we use the predicate $\mathrm{isUSet}(X)$ to denote that $\zeta(X)$ represents an update set for $S$ (see Definition~\ref{extendedState}).  That is, for every triple $(a_1, a_2, a_3) \in \zeta(X)$, we have that $a_1 = c^S_f$ for some dynamic function $f \in \Upsilon$ and $a_2, a_3$ are values of the appropriate database or algorithmic base sets depending on whether $f$ is a database, algorithmic or bridge function symbol. 
\begin{align*}
\mathrm{isUSet}(X) \equiv& \forall \mathtt{x}_1 x_2 x_3 \Big(X(\mathtt{x}_1, x_2, x_3) \rightarrow \bigvee_{c_f \in {\cal F}_\mathit{dyn} \wedge f \in \Upsilon_{db}} \mathtt{x}_1 = c_f\Big) \wedge&\\
&\forall \mathtt{x}_1 \mathtt{x}_2 \mathtt{x}_3 \Big(X(\mathtt{x}_1, \mathtt{x}_2, \mathtt{x}_3) \rightarrow \bigvee_{c_f \in {\cal F}_\mathit{dyn} \wedge f \in \Upsilon_{a}} \mathtt{x}_1 = c_f\Big) \wedge&\\
&\forall \mathtt{x}_1 x_2 \mathtt{x}_3 \Big(X(\mathtt{x}_1, x_2, \mathtt{x}_3) \rightarrow \bigvee_{c_f \in {\cal F}_\mathit{dyn} \wedge f \in {\cal F}_{b}} \mathtt{x}_1 = c_f\Big) \wedge&\\
&\forall \mathtt{x}_1 \mathtt{x}_2 x_3( \neg X(\mathtt{x}_1, \mathtt{x}_2, x_3)) \wedge&\\
&\forall x_1 x_2 x_3( \neg X(x_1, x_2, x_3)) \wedge \forall x_1 \mathtt{x}_2 x_3( \neg X(x_1, \mathtt{x}_2, x_3)) \wedge&\\
&\forall x_1 x_2 \mathtt{x}_3( \neg X(x_1, x_2, \mathtt{x}_3)) \wedge \forall x_1 \mathtt{x}_2 \mathtt{x}_3( \neg X(x_1, \mathtt{x}_2, \mathtt{x}_3))&
\end{align*}

The axioms for $\mathrm{upd}(r,X)$ are as follows (cf. the definition of update sets in Fig.~\ref{fig:set}):
\begin{itemize}
\item Our first three axioms express that $X$ represents an update set yielded by the assignment rule $f(t) := s$ iff it contains exactly one update which is $(f,t,s)$. 
\begin{align*}
\textbf{U1.1: } & \text{If} \; f \; \text{is a database function symbol in} \; \Upsilon_{db} \; \text{then}&\\  
&\mathrm{upd}(f(t) := s, X) \leftrightarrow \mathrm{isUSet}(X) \wedge X(c_f, t, s) \wedge&\\
&\qquad \qquad \qquad \qquad \quad \forall \mathtt{x}_1 x_2 x_3 (X(\mathtt{x}_1,x_2,x_3) \rightarrow \mathtt{x}_1 = c_f \wedge x_2 = t \wedge x_3 = s) \wedge&\\
&\qquad \qquad \qquad \qquad \quad \forall \mathtt{x}_1 \mathtt{x}_2 \mathtt{x}_3 (\neg X(\mathtt{x}_1,\mathtt{x}_2,\mathtt{x}_3)) \wedge \forall \mathtt{x}_1 x_2 \mathtt{x}_3 (\neg X(\mathtt{x}_1,x_2,\mathtt{x}_3)) 
\end{align*}
\begin{align*}
\textbf{U1.2: } & \text{If} \; f \; \text{is an algorithmic function symbol in} \; \Upsilon_{a} \; \text{then}&\\  
&\mathrm{upd}(f(t) := s, X) \leftrightarrow \mathrm{isUSet}(X) \wedge X(c_f, t, s) \wedge &\\
&\qquad \qquad \qquad \qquad \quad \forall \mathtt{x}_1 \mathtt{x}_2 \mathtt{x}_3 (X(\mathtt{x}_1,\mathtt{x}_2,\mathtt{x}_3) \rightarrow \mathtt{x}_1 = c_f \wedge \mathtt{x}_2 = t \wedge \mathtt{x}_3 = s) \wedge&\\
&\qquad \qquad \qquad \qquad \quad \forall \mathtt{x}_1 x_2 x_3 (\neg X(\mathtt{x}_1,x_2,x_3)) \wedge \forall \mathtt{x}_1 x_2 \mathtt{x}_3 (\neg X(\mathtt{x}_1,x_2,\mathtt{x}_3)) 
\end{align*}
\begin{align*}
\textbf{U1.3: } & \text{If} \; f \; \text{is a bridge function symbol in} \; {\cal F}_b \; \text{then}&\\  
&\mathrm{upd}(f(t) := s, X) \leftrightarrow \mathrm{isUSet}(X) \wedge X(c_f, t, s) \wedge &\\
&\qquad \qquad \qquad \qquad \quad \forall \mathtt{x}_1 x_2 \mathtt{x}_3 (X(\mathtt{x}_1,x_2,\mathtt{x}_3) \rightarrow \mathtt{x}_1 = c_f \wedge x_2 = t \wedge \mathtt{x}_3 = s) \wedge&\\
&\qquad \qquad \qquad \qquad \quad \forall \mathtt{x}_1 x_2 x_3 (\neg X(\mathtt{x}_1,x_2,x_3)) \wedge \forall \mathtt{x}_1 \mathtt{x}_2 \mathtt{x}_3 (\neg X(\mathtt{x}_1,\mathtt{x}_2,\mathtt{x}_3))& 
\end{align*}
\item Axiom \textbf{U2} asserts that, if the formula $\varphi$ evaluates to $\mathit{true}$, then
$X$ is an update set yielded by the conditional rule \textbf{if} $\varphi$
\textbf{then} $r$ \textbf{endif} iff $X$ is an update set yielded by the rule
$r$. Otherwise, the conditional rule yields only an empty update
set.
\begin{flalign*}
\textbf{U2: } \mathrm{upd}(\textbf{if} \, \varphi \, \textbf{then}\, r \, \textbf{endif}, &X) \leftrightarrow (\varphi \wedge \mathrm{upd}(r,X)) \vee &\\
& \big(\neg \varphi \wedge \mathrm{isUSet}(X) \wedge \forall \mathtt{x}_1 x_2 x_3 (\neg X(\mathtt{x}_1,x_2,x_3)) \wedge &\\
&\quad \forall \mathtt{x}_1 \mathtt{x}_2 \mathtt{x}_3 (\neg X(\mathtt{x}_1,\mathtt{x}_2,\mathtt{x}_3)) \wedge \forall \mathtt{x}_1 x_2 \mathtt{x}_3 (\neg X(\mathtt{x}_1,x_2,\mathtt{x}_3))\big) &
\end{flalign*}
\item Axiom \textbf{U3} states that $X$ is an update set yielded by the rule \textbf{forall} $x$ \textbf{with} $\varphi$ \textbf{do} $r$ \textbf{enddo} iff
$X$ coincides with $U_{a_1} \cup \cdots \cup U_{a_n}$, where $\{a_1, \ldots, a_n\} = \{ a_i \in B_{db} \mid val_{S,\zeta[x \mapsto a_i]}(\varphi) = \mathit{true}\}$ and $U_{a_i}$ (for $1 \leq i \leq n$) is an update set yielded by the rule $r$ under the variable assignment $\zeta[x \mapsto a_i]$. Note that the update sets $U_{a_1}, \ldots, U_{a_n}$ are encoded into the second-order variable $Y$ of arity four.
\begin{flalign*}
\textbf{U3: } &\mathrm{upd}(\textbf{forall} \, x \, \textbf{with} \, \varphi \, \textbf{do} \, r \, \textbf{enddo},X) \leftrightarrow \mathrm{isUSet}(X)\wedge&\\
&\exists Y \big(\forall \mathtt{z} y_1 y_2 (X(\mathtt{z},y_1,y_2) \leftrightarrow \exists x (Y(\mathtt{z},y_1,y_2,x))) \wedge&\\
& \hspace*{0.65cm} \forall \mathtt{z} \mathtt{y}_1 \mathtt{y}_2 (X(\mathtt{z},\mathtt{y}_1,\mathtt{y}_2) \leftrightarrow \exists x(Y(\mathtt{z},\mathtt{y}_1,\mathtt{y}_2,x))) \wedge&\\
& \hspace*{0.65cm} \forall \mathtt{z} y_1 \mathtt{y}_2 (X(\mathtt{z},y_1,\mathtt{y}_2) \leftrightarrow \exists x (Y(\mathtt{z},y_1,\mathtt{y}_2,x))) \wedge&\\
& \hspace*{0.65cm} \forall x \big( (\varphi \rightarrow \exists Z (\mathrm{upd}(r,Z) \wedge&\\
&\qquad\qquad\qquad\qquad \forall \mathtt{z} y_1 y_2 (Z(\mathtt{z},y_1,y_2) \leftrightarrow Y(\mathtt{z},y_1,y_2,x)) \wedge&\\
&\qquad\qquad\qquad\qquad \forall \mathtt{z} \mathtt{y}_1 \mathtt{y}_2 (Z(\mathtt{z},\mathtt{y}_1,\mathtt{y}_2) \leftrightarrow Y(\mathtt{z},\mathtt{y}_1,\mathtt{y}_2,x)) \wedge&\\
&\qquad\qquad\qquad\qquad \forall \mathtt{z} y_1 \mathtt{y}_2 (Z(\mathtt{z},y_1,\mathtt{y}_2) \leftrightarrow Y(\mathtt{z},y_1,\mathtt{y}_2,x)))) \wedge&\\
& \hspace*{1.3cm} (\neg \varphi \rightarrow \forall \mathtt{z} y_1 y_2 (\neg Y(\mathtt{z},y_1,y_2,x)) \wedge&\\
&\quad\qquad\qquad\qquad \forall \mathtt{z} \mathtt{y}_1 \mathtt{y}_2 (\neg Y(\mathtt{z},\mathtt{y}_1,\mathtt{y}_2,x)) \wedge&\\
&\quad\qquad\qquad\qquad \forall \mathtt{z} y_1 \mathtt{y}_2 (\neg Y(\mathtt{z},y_1,\mathtt{y}_2,x)))\big)\big)
\end{flalign*}
\item Axiom \textbf{U4} states that $X$ is an update set yielded by the parallel rule \textbf{par} $r_1\hspace{0.2cm} r_2$ \textbf{endpar} iff it corresponds to the
union of an update set yielded by $r_1$ and an update set yielded by $r_2$.
\begin{flalign*}
\textbf{U4: } \mathrm{upd}(\textbf{par} \, r_1 \; r_2 \,& \textbf{endpar}, X) \leftrightarrow \mathrm{isUSet}(X) \wedge&\\
&\exists Y_1 Y_2 \big(\mathrm{upd}(r_1,Y_1) \wedge \mathrm{upd}(r_2,Y_2) \wedge&\\
&\qquad \quad \forall \mathtt{z} y_1 y_2 (X(\mathtt{z},y_1,y_2) \leftrightarrow (Y_1(\mathtt{z},y_1,y_2) \vee Y_2(\mathtt{z},y_1,y_2))) \wedge&\\
&\qquad \quad \forall \mathtt{z} \mathtt{y}_1 \mathtt{y}_2 (X(\mathtt{z},\mathtt{y}_1,\mathtt{y}_2) \leftrightarrow (Y_1(\mathtt{z},\mathtt{y}_1,\mathtt{y}_2) \vee Y_2(\mathtt{z},\mathtt{y}_1,\mathtt{y}_2))) \wedge&\\
&\qquad \quad \forall \mathtt{z} y_1 \mathtt{y}_2 (X(\mathtt{z},y_1,\mathtt{y}_2) \leftrightarrow (Y_1(\mathtt{z},y_1,\mathtt{y}_2) \vee Y_2(\mathtt{z},y_1,\mathtt{y}_2))) \big)&
\end{flalign*}
\item Axiom \textbf{U5} asserts that $X$ is an update set yielded by the rule \textbf{choose} $x$ \textbf{with} $\varphi$
\textbf{do} $r$  \textbf{enddo} iff it is an update set yielded by the rule $r$ under a variable assignment $\zeta[x \mapsto a]$ which satisfies $\varphi$.
\begin{flalign*}
\textbf{U5: }&\mathrm{upd}(\textbf{choose}\, x \, \textbf{with} \, \varphi \, \textbf{do} \, r \, \textbf{enddo}, X) \leftrightarrow \exists x (\varphi \wedge \mathrm{upd}(r,X))&
\end{flalign*}
\item Axiom \textbf{U6} asserts that $X$ is an update set yielded by a sequence rule \textbf{seq} $r_1\hspace{0.2cm} r_2$ \textbf{endseq}
iff it corresponds either to an inconsistent update set yielded by rule $r_1$, or to an update set formed by the updates in an update set $Y_2$ yielded by rule $r_2$ in a successor state $S+Y_1$, where $Y_1$ encodes a consistent set of updates produced by rule $r_1$, plus the updates in $Y_1$ that correspond to locations other than the locations updated by $Y_2$.
\begin{flalign*}
\textbf{U6: }&\mathrm{upd}(\textbf{seq} \, r_1 \; r_2 \, \textbf{endseq}, X) \leftrightarrow \big(\text{upd}(r_1,X) \wedge \neg\mathrm{conUSet}(X)\big) \vee&\\
& \big( \mathit{isUSet}(X) \wedge &\\
& \exists Y_1 Y_2 (\mathrm{upd}(r_1,Y_1) \wedge \mathrm{conUSet}(Y_1) \wedge [Y_1]\mathrm{upd}(r_2,Y_2) \wedge&\\
& \hspace*{0.3cm} \forall \mathtt{z} y_1 y_2 (X(\mathtt{z},y_1,y_2) \leftrightarrow ( (Y_1(\mathtt{z},y_1,y_2) \wedge \forall x (\neg Y_2(\mathtt{z}, y_1, x))) \vee Y_2(\mathtt{z},y_1,y_2))) \wedge&\\
& \hspace*{0.3cm} \forall \mathtt{z} \mathtt{y}_1 \mathtt{y}_2 (X(\mathtt{z},\mathtt{y}_1,\mathtt{y}_2) \leftrightarrow ( (Y_1(\mathtt{z},\mathtt{y}_1,\mathtt{y}_2) \wedge \forall \mathtt{x} (\neg Y_2(\mathtt{z}, \mathtt{y}_1, \mathtt{x}))) \vee Y_2(\mathtt{z},\mathtt{y}_1,\mathtt{y}_2))) \wedge&\\
& \hspace*{0.3cm} \forall \mathtt{z} y_1 \mathtt{y}_2 (X(\mathtt{z},y_1,\mathtt{y}_2) \leftrightarrow ( (Y_1(\mathtt{z},y_1,\mathtt{y}_2) \wedge \forall \mathtt{x} (\neg Y_2(\mathtt{z}, y_1, \mathtt{x}))) \vee Y_2(\mathtt{z},y_1,\mathtt{y}_2))) )\big)
\end{flalign*}
\item Our next axioms assert that $X$ is an update set yielded by the rule \textbf{let} $(f,t) \rightharpoonup\!\rho$ \textbf{in} $r$ \textbf{endlet} iff there is an update \emph{multiset} $Y$ yielded by the rule $r$ that collapses into $X$, when the update values to the
location $(f,t)$ which appear in $Y$ are aggregated using the location operator $\rho$, and
the multiplicity of identical updates to a same location other than $(f,t)$ is ignored. Since $\rho$-terms are algorithmic terms, $f$ can either be a bridge or an algorithmic function symbol. Thus we have two possible cases. 
\begin{align*}
\textbf{U7.1: } & \text{If} \; f \; \text{is an algorithmic function symbol in} \; \Upsilon_a \; \text{then}&\\  
& \mathrm{upd}(\textbf{let} \, (f,t)\!\rightharpoonup\!\rho \, \textbf{in} \, r \,\textbf{endlet},X) \leftrightarrow \mathrm{isUSet}(X) \wedge \exists Y \big(\mathrm{upm}(r,Y) \wedge&\\
&\forall \mathtt{x}_1 \mathtt{x}_2 \mathtt{x}_3 \big(X(\mathtt{x}_1,\mathtt{x}_2,\mathtt{x}_3) \leftrightarrow \big(((\mathtt{x}_1 \neq c_f \vee t \neq \mathtt{x}_2) \wedge \exists \mathtt{z} (Y(\mathtt{x}_1,\mathtt{x}_2,\mathtt{x}_3,\mathtt{z}))) \vee &\\
&\hspace*{3.1cm}(\mathtt{x}_1 = c_f \wedge \mathtt{x}_2 = t \wedge \mathtt{x}_3=\rho_{\mathtt{y}}(\mathtt{y}|\exists \mathtt{z} (Y(\mathtt{x}_1,\mathtt{x}_2,\mathtt{y},\mathtt{z}))))\big)\big) \wedge&\\
&\forall \mathtt{x}_1 x_2 \mathtt{x}_3 \big(X(\mathtt{x}_1,x_2,\mathtt{x}_3) \leftrightarrow \exists \mathtt{z} (Y(\mathtt{x}_1,x_2,\mathtt{x}_3,\mathtt{z}))\big) \wedge &\\
&\forall \mathtt{x}_1 x_2 x_3 \big(X(\mathtt{x}_1,x_2,x_3) \leftrightarrow \exists \mathtt{z} (Y(\mathtt{x}_1,x_2,x_3,\mathtt{z}))\big)\big) &
\end{align*}
\begin{align*}
\textbf{U7.2: } & \text{If} \; f \; \text{is a bridge function symbol in} \; {\cal F}_b \; \text{then}&\\  
& \mathrm{upd}(\textbf{let} \, (f,t)\!\rightharpoonup\!\rho \, \textbf{in} \, r \,\textbf{endlet},X) \leftrightarrow \mathrm{isUSet}(X) \wedge \exists Y \big(\mathrm{upm}(r,Y) \wedge&\\
&\forall \mathtt{x}_1 x_2 \mathtt{x}_3 \big(X(\mathtt{x}_1,x_2,\mathtt{x}_3) \leftrightarrow \big(((\mathtt{x}_1 \neq c_f \vee t \neq x_2) \wedge \exists \mathtt{z} (Y(\mathtt{x}_1,x_2,\mathtt{x}_3,\mathtt{z}))) \vee &\\
&\hspace*{3.1cm}(\mathtt{x}_1 = c_f \wedge x_2 = t \wedge \mathtt{x}_3=\rho_{\mathtt{y}}(\mathtt{y}|\exists \mathtt{z} (Y(\mathtt{x}_1,x_2,\mathtt{y},\mathtt{z}))))\big)\big) \wedge&\\
&\forall \mathtt{x}_1 \mathtt{x}_2 \mathtt{x}_3 \big(X(\mathtt{x}_1,\mathtt{x}_2,\mathtt{x}_3) \leftrightarrow \exists \mathtt{z} (Y(\mathtt{x}_1,\mathtt{x}_2,\mathtt{x}_3,\mathtt{z}))\big) \wedge &\\
&\forall \mathtt{x}_1 x_2 x_3 \big(X(\mathtt{x}_1,x_2,x_3) \leftrightarrow \exists \mathtt{z} (Y(\mathtt{x}_1,x_2,x_3,\mathtt{z}))\big)\big) &
\end{align*}
\end{itemize}

The following lemma is a direct consequence of Axioms~\textbf{U1}--\textbf{U7}. 

\begin{lemma}\label{lem-upd}

Each formula in the DB-ASM logic ${\cal L}^{db}$ can be replaced by an equivalent formula not containing any subformulae of the form $\mathrm{upd}(r,X)$.

\end{lemma}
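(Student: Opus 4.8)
The plan is to prove Lemma~\ref{lem-upd} by induction on the structure of the DB-ASM rule $r$, using the axioms \textbf{U1}--\textbf{U7} of Figure~\ref{Fig-AxiomsUpdateSets} as rewrite rules that successively eliminate every occurrence of a subformula $\mathrm{upd}(r,X)$. Since the logic is closed under the propositional connectives and quantifiers, and since a formula is a finite syntactic object, it suffices to show that any single innermost occurrence of $\mathrm{upd}(r,X)$ can be replaced by a provably equivalent subformula in which the rule argument is structurally strictly smaller (or absent); iterating this replacement terminates and yields the claim. Here ``provably equivalent'' means equivalent modulo the proof system $\mathfrak{R}$, and it is exactly the biconditionals \textbf{U1}--\textbf{U7} that licence each rewrite step; soundness of these axioms with respect to the semantics of $\mathrm{upd}$ in Subsection~\ref{subsec-adtmlogic-semantics} (which mirrors Figure~\ref{fig:set} clause by clause) guarantees the replacement preserves truth in every state.

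First I would set up the induction measure: to each DB-ASM rule assign its syntactic size (number of rule constructors), noting that in \textbf{U2}--\textbf{U6} the rule arguments on the right-hand side ($r$, $r_1$, $r_2$) are proper subrules of the left-hand side, and in \textbf{U1} and \textbf{U5}--\textbf{U7} any residual $\mathrm{upd}$ or $\mathrm{upm}$ again refers only to proper subrules. The base case is the assignment rule: by \textbf{U1}, $\mathrm{upd}(f(t):=s,X)$ is equivalent to a formula built solely from the atoms $X(f,\cdot,\cdot)$, equality, and the boolean connectives, containing no $\mathrm{upd}$ at all. For the inductive step I would go through the remaining constructors one at a time, in each case applying the corresponding axiom \textbf{U2}--\textbf{U7} to expose $\mathrm{upd}$-subformulae (and, in the \textbf{let} case, a $\mathrm{upm}$-subformula plus a $\rho$-term) whose rule arguments are strictly smaller, then invoking the induction hypothesis to eliminate those in turn. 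The conditional, parallel, forall and choice cases are immediate. The sequence case \textbf{U6} introduces the modal operator $[Y_1]$ applied to $\mathrm{upd}(r_2,Y_2)$; here I would first push the inductive elimination \emph{inside} the modality, using that $[Y_1]$ distributes over the biconditional supplied by the induction hypothesis for $r_2$, so that $[Y_1]\mathrm{upd}(r_2,Y_2)$ becomes $[Y_1]\psi$ for some $\psi$ free of $\mathrm{upd}$.

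The main obstacle I expect is the \textbf{let} rule, Axiom \textbf{U7}: its right-hand side mentions not only $\mathrm{upm}(r,\ddot{X})$ but also the $\rho$-term $\rho_{y'}(y' \mid \exists \mathtt{z}\,(\ddot{X}(f,x,y',\mathtt{z})))$, which \emph{itself contains a formula} as a subexpression. The statement of the lemma only promises elimination of $\mathrm{upd}$-subformulae, so strictly I need two things here: that $\mathrm{upm}(r,\ddot{X})$ can be eliminated (this is a parallel lemma for $\mathrm{upm}$, which one should either assume available alongside, as the paper's Figure for multiset axioms is set up analogously, or prove by a simultaneous induction), and that no \emph{new} $\mathrm{upd}$-subformula is smuggled in through the $\rho$-term — which is clear, since the $\rho$-term's inner formula is just $\exists \mathtt{z}\,(\ddot{X}(f,x,y',\mathtt{z}))$, already $\mathrm{upd}$-free. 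Thus the cleanest route is to prove Lemma~\ref{lem-upd} and its $\mathrm{upm}$-analogue by a single simultaneous structural induction on $r$, so that the \textbf{let} case can appeal to the $\mathrm{upm}$-elimination for the same (smaller) subrule $r$; the remaining bookkeeping — tracking second-order variable renamings so that the witnessing $\exists$-quantifiers introduced by \textbf{U3}, \textbf{U4}, \textbf{U6}, \textbf{U7} do not clash — is routine.
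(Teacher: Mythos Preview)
Your approach is correct and matches the paper's, which simply states the lemma is ``an easy consequence of the axioms in Figure~\ref{Fig-AxiomsUpdateSets}'' without giving further detail; structural induction on $r$ using \textbf{U1}--\textbf{U7} as rewrite rules is exactly what is intended, and your handling of the modality in the \textbf{seq} case via necessitation and distribution is fine.

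One small overcomplication worth flagging: in the \textbf{let} case you do \emph{not} need to eliminate $\mathrm{upm}(r,\ddot{X})$. The lemma only promises a result free of $\mathrm{upd}$-subformulae, and the right-hand side of \textbf{U7} already contains no occurrence of $\mathrm{upd}$ whatsoever --- only $\mathrm{upm}$, second-order atoms, and the fixed $\rho$-term whose inner formula is $\mathrm{upd}$-free. So that case is in fact immediate, and the simultaneous induction with the $\mathrm{upm}$-analogue, while harmless, is only required for Lemma~\ref{lem-upm}, not for the present statement.
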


\subsection{Update Multisets}\label{sub:UpdateMultisets}

Each DB-ASM rule is associated with a set of update multisets as defined in Fig.~\ref{fig:multiset}. The axioms presented in this section assert how an update multiset is yielded by a DB-ASM rule, i.e., they define the predicate $\mathrm{upm}(r,X)$.  

Same as in the axioms for update sets, we assume that if $f$ is a dynamic function symbol in the given signature of meta-finite states $\Upsilon = \Upsilon_{db} \cup \Upsilon_a \cup {\cal F}_b$, then there is a corresponding constant (static nullary function) symbol $c_f \in \Upsilon_a$ as per Definition~\ref{extendedState}. We use ${\cal F}_{\mathit{dyn}}$ to denote the set of all $c_f$ such that $f$ is a dynamic function symbol in $\Upsilon$. Again, $S$ denotes an arbitrary Henkin structure of signature $\Upsilon$, $B=B_{db}\cup B_a$ denotes the base set (domain) of the database and algorithmic parts of $S$, and w.l.o.g. we assume $B_{db} \cap B_a = \emptyset$. 

In the formulation of the axioms we use the predicate $\mathrm{is\ddot{U}Set}(X)$ which is analogous to the predicate $\mathrm{isUSet}(X)$ defined in the case of update sets. Let $\zeta$ be a valuation into $S$, $\mathrm{is\ddot{U}Set}(X)$ expresses that $\zeta(X)$ represents an update multiset set for $S$.  That is, for every tuple $(a_1, a_2, a_3, a_4) \in \zeta(X)$, we have that $a_1 = c^S_f$ for some dynamic function $f \in \Upsilon$, $a_4$ is an arbitrary value of $B_a$, and $a_2, a_3$ are values of the appropriate database or algorithmic base sets depending on whether $f$ is a database, algorithmic or bridge function symbol.
\begin{align*}
\mathrm{is\ddot{U}Set}(X) \equiv& \forall \mathtt{x}_1 x_2 x_3 \mathtt{x}_4 \Big(X(\mathtt{x}_1, x_2, x_3, \mathtt{x}_4) \rightarrow \bigvee_{c_f \in {\cal F}_\mathit{dyn} \wedge f \in \Upsilon_{db}} \mathtt{x}_1 = c_f\Big) \wedge&\\
&\forall \mathtt{x}_1 \mathtt{x}_2 \mathtt{x}_3 \mathtt{x}_4 \Big(X(\mathtt{x}_1, \mathtt{x}_2, \mathtt{x}_3, \mathtt{x}_4) \rightarrow \bigvee_{c_f \in {\cal F}_\mathit{dyn} \wedge f \in \Upsilon_{a}} \mathtt{x}_1 = c_f\Big) \wedge&\\
&\forall \mathtt{x}_1 x_2 \mathtt{x}_3 \mathtt{x}_4 \Big(X(\mathtt{x}_1, x_2, \mathtt{x}_3, \mathtt{x}_4) \rightarrow \bigvee_{c_f \in {\cal F}_\mathit{dyn} \wedge f \in {\cal F}_{b}} \mathtt{x}_1 = c_f\Big) \wedge&\\
&\forall \mathtt{x}_1 \mathtt{x}_2 x_3 \mathtt{x}_4(\neg X(\mathtt{x}_1, \mathtt{x}_2, x_3, \mathtt{x}_4)) \wedge \forall \mathtt{x}_1 \mathtt{x}_2 x_3 x_4(\neg X(\mathtt{x}_1, \mathtt{x}_2, x_3, x_4)) \wedge&\\
&\forall x_1 x_2 x_3 \mathtt{x}_4 ( \neg X(x_1, x_2, x_3, \mathtt{x}_4)) \wedge \forall x_1 \mathtt{x}_2 x_3 \mathtt{x}_4( \neg X(x_1, \mathtt{x}_2, x_3, \mathtt{x}_4)) \wedge&\\
&\forall x_1 x_2 x_3 x_4 ( \neg X(x_1, x_2, x_3, x_4)) \wedge \forall x_1 \mathtt{x}_2 x_3 x_4( \neg X(x_1, \mathtt{x}_2, x_3, x_4)) \wedge&\\
&\forall x_1 x_2 \mathtt{x}_3 \mathtt{x}_4 ( \neg X(x_1, x_2, \mathtt{x}_3, \mathtt{x}_4)) \wedge \forall x_1 \mathtt{x}_2 \mathtt{x}_3 \mathtt{x}_4 ( \neg X(x_1, \mathtt{x}_2, \mathtt{x}_3, \mathtt{x}_4))\wedge&\\
&\forall x_1 x_2 \mathtt{x}_3 x_4 ( \neg X(x_1, x_2, \mathtt{x}_3, x_4)) \wedge \forall x_1 \mathtt{x}_2 \mathtt{x}_3 x_4 ( \neg X(x_1, \mathtt{x}_2, \mathtt{x}_3, x_4))&
\end{align*}

The axioms for the predicate upm$(r,X)$ are analogous to the axioms for the predicate upd$(r,X)$, except for the fact that we need to deal with multisets represented as relations.
\begin{itemize}
\item Axioms~$\mathbf{\ddot{U}1.1}$--$\mathbf{\ddot{U}1.3}$ express that $X$ represents an update multiset yielded by the assignment rule $f(t) := s$ iff it contains exactly one update with multiplicity $1$, and that update is $(f,t,s)$. 
\begin{align*}
\mathbf{\ddot{U}1.1}\textbf{: } & \text{If} \; f \; \text{is a database function symbol in} \; \Upsilon_{db} \; \text{then}&\\  
&\mathrm{upm}(f(t) := s, X) \leftrightarrow \mathrm{is\ddot{U}Set}(X) \wedge \exists \mathtt{z} \big(X(c_f, t, s, \mathtt{z}) \wedge &\\
&\qquad \forall \mathtt{x}_1 x_2 x_3 \mathtt{x}_4 (X(\mathtt{x}_1,x_2,x_3, \mathtt{x}_4) \rightarrow \mathtt{x}_1 = c_f \wedge x_2 = t \wedge x_3 = s \wedge \mathtt{x}_4 = \mathtt{z})\big) \wedge&\\
&\qquad \forall \mathtt{x}_1 \mathtt{x}_2 \mathtt{x}_3 \mathtt{x}_4 (\neg X(\mathtt{x}_1,\mathtt{x}_2,\mathtt{x}_3,\mathtt{x}_4)) \wedge \forall \mathtt{x}_1 x_2 \mathtt{x}_3 \mathtt{x}_4(\neg X(\mathtt{x}_1,x_2,\mathtt{x}_3,\mathtt{x}_4)) 
\end{align*}
\begin{align*}
\mathbf{\ddot{U}1.2}\textbf{: } & \text{If} \; f \; \text{is an algorithmic function symbol in} \; \Upsilon_{a} \; \text{then}&\\  
&\mathrm{upm}(f(t) := s, X) \leftrightarrow \mathrm{is\ddot{U}Set}(X) \wedge \exists \mathtt{z} \big(X(c_f, t, s, \mathtt{z}) \wedge &\\
&\qquad \forall \mathtt{x}_1 \mathtt{x}_2 \mathtt{x}_3 \mathtt{x}_4 (X(\mathtt{x}_1,\mathtt{x}_2,\mathtt{x}_3,\mathtt{x}_4) \rightarrow \mathtt{x}_1 = c_f \wedge \mathtt{x}_2 = t \wedge \mathtt{x}_3 = s \wedge \mathtt{x}_4 = \mathtt{z})\big) \wedge&\\
&\qquad \forall \mathtt{x}_1 x_2 x_3 \mathtt{x}_4 (\neg X(\mathtt{x}_1,x_2,x_3,\mathtt{x}_4)) \wedge \forall \mathtt{x}_1 x_2 \mathtt{x}_3 \mathtt{x}_4 (\neg X(\mathtt{x}_1,x_2,\mathtt{x}_3,\mathtt{x}_4)) 
\end{align*}
\begin{align*}
\mathbf{\ddot{U}1.3}\textbf{: }  & \text{If} \; f \; \text{is a bridge function symbol in} \; {\cal F}_b \; \text{then}&\\  
&\mathrm{upm}(f(t) := s, X) \leftrightarrow \mathrm{is\ddot{U}Set}(X) \wedge \exists \mathtt{z} \big(X(c_f, t, s, \mathtt{z}) \wedge&\\
&\qquad \forall \mathtt{x}_1 x_2 \mathtt{x}_3 \mathtt{x}_4 (X(\mathtt{x}_1,x_2,\mathtt{x}_3,\mathtt{x}_4) \rightarrow \mathtt{x}_1 = c_f \wedge x_2 = t \wedge \mathtt{x}_3 = s \wedge \mathtt{x}_4 = \mathtt{z})\big) \wedge&\\
&\qquad \forall \mathtt{x}_1 x_2 x_3 \mathtt{x}_4 (\neg X(\mathtt{x}_1,x_2,x_3,\mathtt{x}_4)) \wedge \forall \mathtt{x}_1 \mathtt{x}_2 \mathtt{x}_3 \mathtt{x}_4 (\neg X(\mathtt{x}_1,\mathtt{x}_2,\mathtt{x}_3,\mathtt{x}_4))& 
\end{align*}
\item Axiom $\mathbf{\ddot{U}2}$ asserts that, if the formula $\varphi$ evaluates to $\mathit{true}$, then
$X$ is an update multiset yielded by the conditional rule \textbf{if} $\varphi$
\textbf{then} $r$ \textbf{endif} iff $X$ is an update multiset yielded by the rule
$r$. Otherwise, the conditional rule yields only an empty update multiset.
\begin{flalign*}
\mathbf{\ddot{U}2}\textbf{: } \mathrm{upm}(\textbf{if} \, \varphi \,& \textbf{then} \, r \, \textbf{endif}, X) \leftrightarrow (\varphi \wedge \mathrm{upm}(r,X)) \vee &\\
& \big(\neg \varphi \wedge \mathrm{is\ddot{U}Set}(X) \wedge \forall \mathtt{x}_1 x_2 x_3 \mathtt{x}_4 (\neg X(\mathtt{x}_1,x_2,x_3, \mathtt{x}_4)) \wedge &\\
&\quad \forall \mathtt{x}_1 \mathtt{x}_2 \mathtt{x}_3 \mathtt{x}_4 (\neg X(\mathtt{x}_1,\mathtt{x}_2,\mathtt{x}_3, \mathtt{x}_4)) \wedge \forall \mathtt{x}_1 x_2 \mathtt{x}_3 \mathtt{x}_4(\neg X(\mathtt{x}_1,x_2,\mathtt{x}_3, \mathtt{x}_4))\big) &
\end{flalign*}
\item Axiom $\mathbf{\ddot{U}3}$ states that $X$ is an update multiset yielded by the rule \textbf{forall} $x$ \textbf{with} $\varphi$ \textbf{do} $r$ \textbf{enddo} iff
$X$ coincides with $\ddot{U}_{a_1} \uplus \cdots \uplus \ddot{U}_{a_n}$, where $\{a_1, \ldots, a_n\} = \{ a_i \in B_{db} \mid val_{S,\zeta[x \mapsto a_i]}(\varphi) = \mathit{true}\}$ and $\ddot{U}_{a_i}$ (for $1 \leq i \leq n$) is an update multiset yielded by the rule $r$ under the variable assignment $\zeta[x \mapsto a_i]$. Note that the update multisets $\ddot{U}_{a_1}, \ldots, \ddot{U}_{a_n}$ are encoded into the second-order variable $Y$ of arity five. 
We use the informal expression ``$F$ is a bijection from $X$ to $Y$'' to denote that there is a bijection $f$ from $\zeta(X)$ to $\zeta(Y)$ such that $F(a_1, a_2, a_3, a_4, b_1, b_2, b_3, b_4, b_5)$ iff $f((a_1, a_2, a_3, a_4)) = (b_1, b_2, b_3, b_4, b_5)$. It is a well known fact that $F$ can be easily defined in first-order logic (see for instance~\cite{FerrarottiRT14}). \\

\begin{flalign*}
\mathbf{\ddot{U}3}\textbf{: } &\mathrm{upm}(\textbf{forall} \, x \, \textbf{with} \, \varphi \, \textbf{do} \, r \, \textbf{enddo},X) \leftrightarrow \mathrm{is\ddot{U}Set}(X)\wedge&\\
&\exists Y F \big(\text{``$F$ is a bijection from $X$ to $Y$''} \wedge &\\
&\qquad \quad \forall \mathtt{z} y_1 y_2 \mathtt{y}_3 \mathtt{z}' y_1' y_2' \mathtt{y}_3' x (F(\mathtt{z},y_1,y_2,\mathtt{y}_3, \mathtt{z}',y_1',y_2',\mathtt{y}_3',x) \rightarrow&\\
& \hspace*{7cm} (\mathtt{z} = \mathtt{z}' \wedge y_1 = y_1' \wedge  y_2 = y_2')) \wedge&\\
& \qquad \quad \forall \mathtt{z} \mathtt{y}_1 \mathtt{y}_2 \mathtt{y}_3 \mathtt{z}' \mathtt{y}_1' \mathtt{y}_2' \mathtt{y}_3' x (F(\mathtt{z}, \mathtt{y}_1, \mathtt{y}_2, \mathtt{y}_3, \mathtt{z}', \mathtt{y}_1', \mathtt{y}_2', \mathtt{y}_3', x) \rightarrow&\\
& \hspace*{7cm} (\mathtt{z} = \mathtt{z}' \wedge \mathtt{y}_1 = \mathtt{y}_1' \wedge  \mathtt{y}_2 = \mathtt{y}_2')) \wedge&\\
& \qquad \quad \forall \mathtt{z} y_1 \mathtt{y}_2 \mathtt{y}_3 \mathtt{z}' y_1' \mathtt{y}_2' \mathtt{y}_3' x (F(\mathtt{z}, y_1, \mathtt{y}_2, \mathtt{y}_3, \mathtt{z}', y_1', \mathtt{y}_2', \mathtt{y}_3', x) \rightarrow&\\
& \hspace*{7cm} (\mathtt{z} = \mathtt{z}' \wedge y_1 = y_1' \wedge  \mathtt{y}_2 = \mathtt{y}_2')) \wedge&\\
& \qquad \quad \forall x \big( (\varphi \rightarrow \exists Z (\mathrm{upm}(r,Z) \wedge&\\
&\quad\qquad\qquad\qquad\qquad \forall \mathtt{z} y_1 y_2 \mathtt{y}_3 (Z(\mathtt{z},y_1,y_2,\mathtt{y}_3) \leftrightarrow Y(\mathtt{z},y_1,y_2,\mathtt{y}_3,x)) \wedge&\\
&\quad\qquad\qquad\qquad\qquad \forall \mathtt{z} \mathtt{y}_1 \mathtt{y}_2 \mathtt{y}_3 (Z(\mathtt{z},\mathtt{y}_1,\mathtt{y}_2, \mathtt{y}_3) \leftrightarrow Y(\mathtt{z},\mathtt{y}_1,\mathtt{y}_2, \mathtt{y}_3, x)) \wedge&\\
&\quad\qquad\qquad\qquad\qquad \forall \mathtt{z} y_1 \mathtt{y}_2 \mathtt{y}_3 (Z(\mathtt{z},y_1,\mathtt{y}_2,\mathtt{y}_3) \leftrightarrow Y(\mathtt{z},y_1,\mathtt{y}_2,\mathtt{y}_3,x)))) \wedge&\\
& \qquad \qquad \quad (\neg \varphi \rightarrow \forall \mathtt{z} y_1 y_2 \mathtt{y}_3 (\neg Y(\mathtt{z},y_1,y_2,\mathtt{y}_3,x)) \wedge&\\
&\qquad\qquad\qquad\qquad\; \forall \mathtt{z} \mathtt{y}_1 \mathtt{y}_2 \mathtt{y}_3 (\neg Y(\mathtt{z},\mathtt{y}_1,\mathtt{y}_2,\mathtt{y}_3,x)) \wedge&\\
&\qquad\qquad\qquad\qquad\; \forall \mathtt{z} y_1 \mathtt{y}_2 \mathtt{y}_3 (\neg Y(\mathtt{z},y_1,\mathtt{y}_2,\mathtt{y}_3, x)))\big)\big)
\end{flalign*}
\item Axiom $\mathbf{\ddot{U}4}$ states that $X$ represents an update multiset yielded by the rule \textbf{par} $r_1\hspace{0.2cm} r_2$ \textbf{endpar} iff $X$ represents an update multiset $\ddot{U}_1 \uplus \ddot{U}_1$ where $\ddot{U}_1$ is an update multiset yielded by $r_1$ and $\ddot{U}_2$ is an update multiset yielded by $r_2$. 
\begin{flalign*}
\mathbf{\ddot{U}4}\textbf{: } &\mathrm{upm}(\textbf{par} \, r_1 \; r_2 \, \textbf{endpar}, X) \leftrightarrow \mathrm{is\ddot{U}Set}(X) \wedge&\\
&\exists Y_1 Y_2 \big(\mathrm{upm}(r_1,Y_1) \wedge \mathrm{upm}(r_2,Y_2) \wedge&\\
&\quad \qquad\forall \mathtt{z} y_1 y_2 \mathtt{y}_3 \mathtt{z}' y_1' y_2' \mathtt{y}_3' (Y_1(\mathtt{z},y_1,y_2, \mathtt{y}_3) \wedge Y_2(\mathtt{z}',y_1',y_2',\mathtt{y}_3') \rightarrow \mathtt{y}_3 \neq \mathtt{y}_3') \wedge&\\
&\quad \qquad\forall \mathtt{z} \mathtt{y}_1 \mathtt{y}_2 \mathtt{y}_3 \mathtt{z}' \mathtt{y}_1' \mathtt{y}_2' \mathtt{y}_3' (Y_1(\mathtt{z},\mathtt{y}_1,\mathtt{y}_2, \mathtt{y}_3) \wedge Y_2(\mathtt{z}',\mathtt{y}_1',\mathtt{y}_2',\mathtt{y}_3') \rightarrow \mathtt{y}_3 \neq \mathtt{y}_3') \wedge&\\
&\quad \qquad\forall \mathtt{z} y_1 \mathtt{y}_2 \mathtt{y}_3 \mathtt{z}' y_1' \mathtt{y}_2' \mathtt{y}_3' (Y_1(\mathtt{z},y_1,\mathtt{y}_2, \mathtt{y}_3) \wedge Y_2(\mathtt{z}',y_1',\mathtt{y}_2',\mathtt{y}_3') \rightarrow \mathtt{y}_3 \neq \mathtt{y}_3') \wedge&\\
&\qquad \quad \forall \mathtt{z} y_1 y_2 \mathtt{y}_3 (X(\mathtt{z},y_1,y_2, \mathtt{y}_3) \leftrightarrow (Y_1(\mathtt{z},y_1,y_2, \mathtt{y}_3) \vee Y_2(\mathtt{z},y_1,y_2,\mathtt{y}_3))) \wedge&\\
&\qquad \quad \forall \mathtt{z} \mathtt{y}_1 \mathtt{y}_2 \mathtt{y}_3 (X(\mathtt{z},\mathtt{y}_1,\mathtt{y}_2, \mathtt{y}_3) \leftrightarrow (Y_1(\mathtt{z},\mathtt{y}_1,\mathtt{y}_2, \mathtt{y}_3) \vee Y_2(\mathtt{z},\mathtt{y}_1,\mathtt{y}_2, \mathtt{y}_3))) \wedge&\\
&\qquad \quad \forall \mathtt{z} y_1 \mathtt{y}_2 \mathtt{y}_3 (X(\mathtt{z},y_1,\mathtt{y}_2, \mathtt{y}_3) \leftrightarrow (Y_1(\mathtt{z},y_1,\mathtt{y}_2, \mathtt{y}_3) \vee Y_2(\mathtt{z},y_1,\mathtt{y}_2, \mathtt{y}_3))) \big)&
\end{flalign*}
\item Axiom  $\mathbf{\ddot{U}5}$ asserts that $X$ is an update multiset yielded by the rule \textbf{choose} $x$ \textbf{with} $\varphi$
\textbf{do} $r$  \textbf{enddo} iff it is an update multiset yielded by the rule $r$ under a variable assignment $\zeta[x \mapsto a]$ which satisfies $\varphi$.
\begin{flalign*}
\mathbf{\ddot{U}5}\textbf{: }&\mathrm{upm}(\textbf{choose}\, x \, \textbf{with} \, \varphi \, \textbf{do} \, r \, \textbf{enddo}, X) \leftrightarrow \exists x (\varphi \wedge \mathrm{upm}(r,X))&
\end{flalign*}
\item Axiom $\mathbf{\ddot{U}6}$ asserts that $X$ is an update multiset yielded by a sequence rule \textbf{seq} $r_1\hspace{0.2cm} r_2$ \textbf{endseq}
iff it corresponds either to an inconsistent update multiset $Y_1$ yielded by rule $r_1$, or to an update multiset formed by the updates in an update multiset $Y_2$ yielded by rule $r_2$ in a successor state $S+U$, where $U$ is the set of updates which appear in the multiset $Y_1$, plus the updates in $Y_1$ that correspond to locations other than the locations that appear in the updates in $Y_2$.
\begin{flalign*}
\mathbf{\ddot{U}6}\textbf{: }&\mathrm{upm}(\textbf{seq} \, r_1 \; r_2 \, \textbf{endseq}, X) \leftrightarrow \big(\text{upm}(r_1,X) \wedge \neg\mathrm{con\ddot{U}Set}(X)\big) \vee&\\
& \big( \mathrm{is\ddot{U}Set}(X) \wedge &\\
& \exists Y_1 Y_1' Y_2 \big(\mathrm{upm}(r_1,Y_1) \wedge \mathrm{con\ddot{U}Set}(Y_1) \wedge \textrm{isUSet}(Y_1') \wedge [Y_1']\mathrm{upm}(r_2,Y_2) \wedge&\\
& \forall \mathtt{z} y_1 y_2 (Y_1'(\mathtt{z},y_1,y_2) \leftrightarrow \exists \mathtt{y}_3 (Y_1(\mathtt{z},y_1,y_2,\mathtt{y}_3))) \wedge&\\
& \forall \mathtt{z} \mathtt{y}_1 \mathtt{y}_2 (Y_1'(\mathtt{z},\mathtt{y}_1,\mathtt{y}_2) \leftrightarrow \exists \mathtt{y}_3 (Y_1(\mathtt{z},\mathtt{y}_1,\mathtt{y}_2, \mathtt{y}_3))) \wedge&\\
& \forall \mathtt{z} y_1 \mathtt{y}_2 (Y_1'(\mathtt{z},y_1,\mathtt{y}_2) \leftrightarrow \exists \mathtt{y}_3 (Y_1(\mathtt{z},y_1,\mathtt{y}_2, \mathtt{y}_3))) \wedge &\\
& \forall \mathtt{z} y_1 y_2 \mathtt{y}_3(X(\mathtt{z},y_1,y_2,\mathtt{y_3}) \leftrightarrow ( (Y_1(\mathtt{z},y_1,y_2,\mathtt{y}_3) \wedge \forall x_1 \mathtt{x}_2 (\neg Y_2(\mathtt{z}, y_1, x_1, \mathtt{x}_2))) \vee&\\
&\hspace*{4.6cm} Y_2(\mathtt{z},y_1,y_2, \mathtt{y}_3))) \wedge&\\
& \forall \mathtt{z} \mathtt{y}_1 \mathtt{y}_2 \mathtt{y}_3 (X(\mathtt{z},\mathtt{y}_1,\mathtt{y}_2, \mathtt{y}_3) \leftrightarrow ( (Y_1(\mathtt{z},\mathtt{y}_1,\mathtt{y}_2, \mathtt{y}_3) \wedge \forall \mathtt{x}_1 \mathtt{x}_2 (\neg Y_2(\mathtt{z}, \mathtt{y}_1, \mathtt{x}_1, \mathtt{x}_2))) \vee &\\
&\hspace*{4.6cm} Y_2(\mathtt{z},\mathtt{y}_1,\mathtt{y}_2, \mathtt{y}_3))) \wedge&\\
& \forall \mathtt{z} y_1 \mathtt{y}_2 \mathtt{y}_3 (X(\mathtt{z},y_1,\mathtt{y}_2,\mathtt{y}_3) \leftrightarrow ( (Y_1(\mathtt{z},y_1,\mathtt{y}_2,\mathtt{y}_3) \wedge \forall \mathtt{x}_1 \mathtt{x}_2 (\neg Y_2(\mathtt{z}, y_1, \mathtt{x}_1, \mathtt{x}_2))) \vee&\\
& \hspace*{4.6cm} Y_2(\mathtt{z},y_1,\mathtt{y}_2, \mathtt{y}_3))) \big)\big)
\end{flalign*}
\item Our next axioms assert that $X$ is  an update multiset yielded by a let rule $\textbf{let} \, (f, t) \!\rightharpoonup\! \rho \, \textbf{in} \, r \, \textbf{endlet}$ iff it corresponds to an update multiset $Y$ yielded by the rule $r$ except for the updates to the location $(f,t)$ which are collapsed into a unique update in $X$ by aggregating their values using the operator $\rho$. Again notice that a $\rho$-term is an algorithmic term and thus we have to consider only two cases.
\begin{align*}
\mathbf{\ddot{U}7.1}\textbf{: } & \text{If} \; f \; \text{is an algorithmic function symbol in} \; \Upsilon_a \; \text{then}&\\  
& \mathrm{upm}(\textbf{let} \, (f,t)\!\rightharpoonup\!\rho \, \textbf{in} \, r \,\textbf{endlet},X) \leftrightarrow \mathrm{is\ddot{U}Set}(X) \wedge \exists Y \mathtt{z} \big(\mathrm{upm}(r,Y) \wedge&\\
&\forall \mathtt{x}_1 \mathtt{x}_2 \mathtt{x}_3 \mathtt{x}_4 \big(X(\mathtt{x}_1,\mathtt{x}_2,\mathtt{x}_3,\mathtt{x}_4) \leftrightarrow \big(((\mathtt{x}_1 \neq c_f \vee t \neq \mathtt{x}_2) \wedge Y(\mathtt{x}_1,\mathtt{x}_2,\mathtt{x}_3,\mathtt{x}_4)) \vee &\\
&\hspace*{1.6cm}(\mathtt{x}_1 = c_f \wedge \mathtt{x}_2 = t \wedge \mathtt{x}_3=\rho_{\mathtt{y}}(\mathtt{y}|\exists \mathtt{x}_0 (Y(\mathtt{x}_1,\mathtt{x}_2,\mathtt{y},\mathtt{x}_0))) \wedge \mathtt{x}_4 = \mathtt{z})\big)\big) \wedge&\\
&\forall \mathtt{x}_1 x_2 \mathtt{x}_3 \mathtt{x}_4 \big(X(\mathtt{x}_1,x_2,\mathtt{x}_3, \mathtt{x}_4) \leftrightarrow Y(\mathtt{x}_1,x_2,\mathtt{x}_3,\mathtt{x}_4)\big) \wedge &\\
&\forall \mathtt{x}_1 x_2 x_3 \mathtt{x}_4 \big(X(\mathtt{x}_1,x_2,x_3, \mathtt{x}_4) \leftrightarrow Y(\mathtt{x}_1,x_2,x_3,\mathtt{x}_4)\big)\big) &
\end{align*}
\begin{align*}
\mathbf{\ddot{U}7.2} \textbf{: } & \text{If} \; f \; \text{is a bridge function symbol in} \; {\cal F}_b \; \text{then}&\\
& \mathrm{upm}(\textbf{let} \, (f,t)\!\rightharpoonup\!\rho \, \textbf{in} \, r \,\textbf{endlet},X) \leftrightarrow \mathrm{is\ddot{U}Set}(X) \wedge \exists Y \mathtt{z} \big(\mathrm{upm}(r,Y) \wedge&\\
&\forall \mathtt{x}_1 x_2 \mathtt{x}_3 \mathtt{x}_4 \big(X(\mathtt{x}_1,x_2,\mathtt{x}_3,\mathtt{x}_4) \leftrightarrow \big(((\mathtt{x}_1 \neq c_f \vee t \neq x_2) \wedge Y(\mathtt{x}_1,x_2,\mathtt{x}_3,\mathtt{x}_4)) \vee &\\
&\hspace*{1.6cm}(\mathtt{x}_1 = c_f \wedge x_2 = t \wedge \mathtt{x}_3=\rho_{\mathtt{y}}(\mathtt{y}|\exists \mathtt{x}_0 (Y(\mathtt{x}_1,x_2,\mathtt{y},\mathtt{x}_0))) \wedge \mathtt{x}_4 = \mathtt{z})\big)\big) \wedge&\\
&\forall \mathtt{x}_1 \mathtt{x}_2 \mathtt{x}_3 \mathtt{x}_4 \big(X(\mathtt{x}_1,\mathtt{x}_2,\mathtt{x}_3, \mathtt{x}_4) \leftrightarrow Y(\mathtt{x}_1,\mathtt{x}_2,\mathtt{x}_3,\mathtt{x}_4)\big) \wedge &\\
&\forall \mathtt{x}_1 x_2 x_3 \mathtt{x}_4 \big(X(\mathtt{x}_1,x_2,x_3, \mathtt{x}_4) \leftrightarrow Y(\mathtt{x}_1,x_2,x_3,\mathtt{x}_4)\big)\big) &
\end{align*}
\end{itemize}

Analogous to Lemma \ref{lem-upd}, the following result is a straightforward consequence of Axioms~$\mathbf{\ddot{U}1}$--$\mathbf{\ddot{U}7}$.

\begin{lemma}\label{lem-upm}

Each formula in the DB-ASM logic ${\cal L}^{db}$ can be replaced by an equivalent formula not containing any subformulae of the form $\mathrm{upm}(r,X)$.

\end{lemma}

\subsection{Axioms and Inference Rules}\label{sub:AxiomsRules}

We present a set of axioms and inference rules which constitute a proof system for the logic ${\cal L}^{db}$ for DB-ASMs. A good starting point is the natural formalism $L_2$ as defined in~\cite{Leivant94} for the relational variant of second-order logic on which ${\cal L}^{db}$ is based. $L_2$ uses the usual axioms and rules for first-order logic, with quantifier rules applying to second-order variables as well as first-order variables, and with the stipulation that the range of the second-order variables includes \emph{at least} all the relations definable by the formulae of the language. 

A deductive calculus for $L_2$ is obtained by augmenting the inference rules and axioms of first-order logic with the comprehension axiom schema~\textbf{SO-C} (which is a form of the Comprehension Principle of Set Theory), and with the axiom schema of universal instantiation \textbf{SO-UI} and the inference rule of universal generalization \textbf{SO-UG} for second-order variables. 

\begin{description}

  \item[\textbf{SO-C}] $\exists X \forall v_1, \ldots, v_k ( X(v_1, \ldots, v_k) \leftrightarrow \varphi)$, where $k \geq 1$, $v_1, \ldots, v_k$ are first-order variables from ${\cal X}_{db} \cup {\cal X}_a$, and $X$ is a $k$-ary second-order variable which does not occur free in the formula $\varphi$. \smallskip

  \item[\textbf{SO-UI}] $\forall X(\varphi) \rightarrow \varphi[Y/X]$, provided the arity of $X$ and $Y$ coincides. \smallskip

  \item[\textbf{SO-UG}] $\frac{\psi \rightarrow \varphi[Y/X]}{\psi \rightarrow \forall X (\varphi)}$, provided $Y$ is not free in $\psi$. \smallskip

\end{description}

The axioms and rules of $L_2$ together with the axioms for update sets and multisets form the basis of the proposed proof system for the logic ${\cal L}^{db}$ for DB-ASMs. The complete list of axioms and rules is composed by: 

\begin{itemize}

\item The Axioms~\textbf{SO-C}, \textbf{SO-UI} and \textbf{SO-UG} of the deductive calculus $L_2$.

\smallskip

\item The Axioms~\textbf{U1}--\textbf{U7} in Section~\ref{sub:UpdateSets} which assert the properties of upd$(r, X)$.

\smallskip

\item The axioms $\mathbf{\ddot{U}1}$--$\mathbf{\ddot{U}7}$ in Section~\ref{sub:UpdateMultisets} which assert the properties of upm$(r, X)$.

\smallskip

\item Axiom \textbf{M1} and Rules \textbf{M2-M3} from the axiom system K of modal logic, which is the weakest
normal modal logic system \cite{hughes:modallogic1996}. Axiom~\textbf{M1} is called \emph{Distribution Axiom} of K,
Rule~\textbf{M2} is called \emph{Necessitation Rule} of K and Rule~\textbf{M3} is the inference rule called \emph{Modus Ponens} in the
classical logic. By using these axioms and rules together, we are able to derive all modal properties that are valid in Kripke frames.

\begin{description}

  \item[\textbf{M1}] $[X](\varphi\rightarrow\psi)\rightarrow ([X]\varphi\rightarrow[X]\psi)$\smallskip

  \item[\textbf{M2}] $\frac{\varphi}{[X]\varphi}$ \smallskip

  \item[\textbf{M3}] $\frac{\varphi, \varphi\rightarrow\psi}{\psi}$\smallskip
\end{description}

\item Axiom \textbf{M4} asserts that, if an update set $X$ is not consistent, then
there is no successor state obtained after applying $X$ over
the current state and thus $[X]\varphi$ is interpreted as true
for any formula $\varphi$. As applying a consistent update set $X$
over the current state is deterministic, Axiom \textbf{M5} describes
the deterministic accessibility relation in terms of $[X]$.

\begin{description}

  \item[\textbf{M4}] $\neg \mathrm{conUSet}(X)\rightarrow[X]\varphi $ \smallskip

  \item[\textbf{M5}] $\neg[X]\varphi\rightarrow [X]\neg\varphi$ \smallskip

\end{description}

\item Axiom \textbf{M6} is called \emph{Barcan Axiom}. It originates from the fact that all
states in a run of a DB-ASM have the same base set, and thus the
quantifiers in all states always range over the same set of
elements. 

\begin{description}

\item[\textbf{M6}] $\forall v ([X]\varphi)\rightarrow [X]\forall
v(\varphi)$, where $v$ stands for any first-order variable in ${\cal X}_{db} \cup {\cal X}_{a}$ or any second-order variable. \smallskip

\end{description}

\item Axioms \textbf{M7} and \textbf{M8} assert that the interpretation of static and pure formulae is the same in all states of a DB-ASM, which is not affected by the execution of any DB-ASM rule $r$. 

\begin{description}
  \item[\textbf{M7}] $\varphi  \wedge \mathrm{upd}(r, X) \rightarrow  [X]\varphi$, for
  static and pure $\varphi$\smallskip

  \item[\textbf{M8}]  con$(r,X)\wedge[X]\varphi\rightarrow
  \varphi$, for static and pure $\varphi$\smallskip

\end{description}

\item Axioms \textbf{A1.1}--\textbf{A1.3} assert that, if a consistent update set $X$ does
not contain any update to a given location, then the
content of that location in a successor state obtained after
applying $X$ remains unchanged. 
Axioms~\textbf{A2.1}--\textbf{A2.3} assert that, if a consistent update set $X$
contains an update $(f,a,b)$, then the content of the location $(f,a)$ in
the successor state obtained after applying $X$ is equal to
$b$. Axiom \textbf{A3} says that, if a DB-ASM
rule $r$ yields an update multiset, then the rule $r$ also yields an
update set.

\begin{description}

  \item[\textbf{A1.1}] If $f$ is a database function symbol, \\
$\mathrm{conUSet}(X) \wedge \forall z (\neg X(c_f,x,z)) \wedge f(x)=y \rightarrow [X]f(x) = y$  \smallskip

  \item[\textbf{A1.2}] If $f$ is an algorithmic function symbol, \\
$\mathrm{conUSet}(X) \wedge \forall \mathtt{z} (\neg X(c_f,\mathtt{x},\mathtt{z})) \wedge f(\mathtt{x})=\mathtt{y} \rightarrow [X]f(\mathtt{x}) = \mathtt{y}$  \smallskip

  \item[\textbf{A1.3}] If $f$ is a bridge function symbol, \\
$\mathrm{conUSet}(X) \wedge \forall \mathtt{z} (\neg X(c_f,x,\mathtt{z})) \wedge f(x)=\mathtt{y} \rightarrow [X]f(x) = \mathtt{y}$  \smallskip

  \item[\textbf{A2.1}] If $f$ is a database function symbol, \\
$\mathrm{conUSet}(X) \wedge X(c_f,x,y) \rightarrow [X]f(x)=y$\smallskip

  \item[\textbf{A2.2}] If $f$ is an algorithmic function symbol, \\
$\mathrm{conUSet}(X) \wedge X(c_f,\mathtt{x},\mathtt{y}) \rightarrow [X]f(\mathtt{x})=\mathtt{y}$\smallskip

  \item[\textbf{A2.3}] If $f$ is a bridge function symbol, \\
$\mathrm{conUSet}(X) \wedge X(c_f,x,\mathtt{y}) \rightarrow [X]f(x)=\mathtt{y}$\smallskip


  \item[\textbf{A3}] $\mathrm{upm}(r,X) \rightarrow \exists  Y (\mathrm{upd}(r,Y))$  \smallskip

\end{description}

\item The following are axiom schemes from first-order logic.

\begin{description}
\item[\textbf{\emph{P1}}] $\varphi\rightarrow(\psi\rightarrow\varphi)$\smallskip

\item[\textbf{\emph{P2}}] $(\varphi\rightarrow(\psi\rightarrow\chi))\rightarrow((\varphi\rightarrow\psi)\rightarrow (\varphi\rightarrow\chi))$\smallskip

\item[\textbf{\emph{P3}}]
$(\neg\varphi\rightarrow\neg\psi)\rightarrow(\psi\rightarrow\varphi)$\smallskip

\end{description}

\item The standard axiom of universal instantiation~\textbf{UI} of the classical first-order calculus needs to be restricted to static terms (which do not contain dynamic function names). Otherwise, if we substitute a term $t$ for a variable $x$, then $t$ can be evaluated in different states due to sequential composition of transition rules. 
The rule of universal generalization $\textbf{UG}$ is the same as in the classical first-order calculus and applies to both types of first-order variables. An analogous axiom and inference rule can be added for $\exists$. This however is not necessary since in this paper $\exists$ is viewed as an abbreviation of $\neg \forall \neg$.

\begin{description}

\item[\textbf{UI}] $\forall v (\varphi(v)) \rightarrow \varphi[t/v]$ if $\varphi$ is pure or $t$ is static, $t$ is a database term or an algorithmic term depending on whether $v$ is a first-order variable in ${\cal X}_{db}$ or ${\cal X}_{a}$, respectively, and $t$ is free for $v$ in $\varphi(v)$. \smallskip

\item[\textbf{UG}] $\frac{\psi \rightarrow \varphi[v'/v]}{\psi \rightarrow \forall v (\varphi)}$ if $v$ and $v'$ are first-order variables of a same type, i.e., both belong to ${\cal X}_{db}$ or both belong to ${\cal X}_{a}$, and $v'$ is not free in $\psi$. \smallskip

\end{description}

\item  The following are the equality axioms adapted from first-order logic with equality.
Axiom~\textbf{EQ1} asserts the reflexivity property,
Axiom~\textbf{EQ2} asserts the substitutions for functions,
Axiom \textbf{EQ3} asserts the substitutions for second-order variables, 
and Axiom~\textbf{EQ4} asserts the substitutions for $\rho$-terms. Again,
terms occurring in the axioms are restricted to be static, which do
not contain any dynamic function symbols.

\begin{description}

\item[\textbf{EQ1}] $t=t$ for static term $t$. \smallskip

\item[\textbf{EQ2}] $t_1=t_{n+1}\wedge...\wedge t_n=t_{2n}\rightarrow f(t_1,...,t_n) = f(t_{n+1},...,t_{2n})$ for any
function $f$ and static terms $t_i$ $(i=1,...,2n)$.\smallskip

\item[\textbf{EQ3}] $t_1=t_{n+1}\wedge...\wedge t_n=t_{2n}\rightarrow (X(t_1,...,t_n) \leftrightarrow X(t_{n+1},...,t_{2n}))$ for any second-order variable $X$ and static terms $t_i$ $(i=1,...,2n)$.\smallskip

\item[\textbf{EQ4}] $t_1=t_2\wedge(\varphi_1 \leftrightarrow \varphi_2)\rightarrow \rho_{v}(t_1|\varphi_1)=\rho_{v}(t_2|\varphi_2)$ for pure formulae $\varphi_1$ and $\varphi_2$, static terms $t_1$ and $t_2$, and $v$ a first-order variable. \smallskip

\end{description}

\smallskip

\item The following axiom is taken from dynamic logic. It asserts that that executing a sequence rule is equivalent to executing its sub-rules sequentially.

\begin{description}

\item[\textbf{\emph{DY1}}] \ $\exists X (\text{upd}(\textbf{seq}\; r_1\text{ }r_2 \;\textbf{endseq}, X) \wedge [X]\varphi) \leftrightarrow$\\
\hspace*{4.4cm} $\exists X_1 (\text{upd}(r_1,X_1) \wedge [X_1]\exists X_2(\text{upd}(r_2,X_2) \wedge [X_2] \varphi))$
\end{description}

\item Axiom \textbf{E} is the extensionality axiom. Recall that $r_1\equiv r_2$ if for every Henkin meta-finite structure $S$ it holds that $S \models \forall X (\mathrm{upd}(r_1,X)\leftrightarrow \mathrm{upd}(r_2,X))$ (see Definition~\ref{def-equivalent-rules}).

\begin{description}
 \item[\textbf{\emph{E}}] $r_1\equiv r_2\rightarrow (\exists X_1 ($upd$(r_1,X_1)\wedge [X_1]\varphi)\leftrightarrow \exists X_2 ($upd$(r_2,X_2)\wedge [X_2]\varphi))$

\end{description}
\smallskip
\end{itemize}

The following soundness theorem for the proof system is relatively straightforward, since the non-standard axioms and rules are just a formalisation of the definitions of the semantics of rules, update sets and update multisets.

\begin{theorem}\label{c5-theoremsoundness}

Let $\varphi$ be a formula and let $\Phi$ be a set of formulae in the logic
${\cal L}^{db}$  for DB-ASMs. If $\Phi\vdash_{L_2}\varphi$, then $\Phi\models\varphi$.

\end{theorem}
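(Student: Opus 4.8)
The plan is to proceed by induction on the length of a derivation witnessing $\Phi\vdash\varphi$. It suffices to establish two facts: (i) every axiom instance in $\mathfrak{R}$ is valid, i.e.\ $[\![\psi]\!]_{S,\zeta}=\mathit{true}$ for every state $S$ and every variable assignment $\zeta$; and (ii) every inference rule of $\mathfrak{R}$ preserves the property ``modelled by $S$'', that is, if $S$ models all premises of an instance of the rule then $S$ models its conclusion. Granting (i) and (ii), an easy induction over the derivation shows that whenever $S\models\Phi$, the state $S$ models every formula occurring in the derivation, in particular $S\models\varphi$; since $S$ was arbitrary, $\Phi\models\varphi$. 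The technical workhorse throughout is a \emph{substitution lemma}: if $\varphi$ is pure or $t$ is static, then $[\![\varphi[t/x]]\!]_{S,\zeta}=[\![\varphi]\!]_{S,\zeta[x\mapsto val_{S,\zeta}(t)]}$ (with the no-capture proviso built into the definition of $\varphi[t/x]$).

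The routine part of (i) and (ii) concerns the ``classical'' items. The propositional schemes \textbf{P1}--\textbf{P3} together with Modus Ponens \textbf{M3} are sound because $\neg$, $\wedge$ and the defined $\rightarrow$ are interpreted truth-functionally. The equality axioms \textbf{EQ1}--\textbf{EQ2} follow from the definition of $val_{S,\zeta}$ on terms, and \textbf{EQ3} additionally from the observation that $val_{S,\zeta}(\rho_{x}(t\mid\varphi(x,\bar y)))$ depends only on the function $a\mapsto val_{S,\zeta[x\mapsto a]}(t)$ and on the set $\{a\in B_{db}\mid[\![\varphi(x,\bar y)]\!]_{S,\zeta[x\mapsto a]}=\mathit{true}\}$, so equal arguments and pure formulae equivalent over the finite set $B_{db}$ yield identical multisets and hence identical aggregates. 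The quantifier rules \textbf{UI}, \textbf{EG}, \textbf{UG}, \textbf{EI} are then immediate from the substitution lemma and the clauses for $\forall$; their side conditions (``$\varphi$ pure or $t$ static'') are \emph{exactly} what is needed, since for a non-pure $\varphi$ containing a box $[Y]\psi$ and a dynamic term $t$ one may have $val_{S,\zeta}(t)\neq val_{S+\Delta,\zeta}(t)$, which would break the substitution lemma and with it soundness. One also notes that these rules are read as preserving ``valid in $S$'' rather than ``true under the current $\zeta$'', so that e.g.\ \textbf{UG} is applied only when its premise holds simultaneously for all the witnessing terms.

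The modal items are likewise essentially routine once the Kripke frame is set up correctly: relative to the rule $r$, the accessibility relation sends $(S,\zeta)$ to $S+\zeta(X)$ when $\zeta(X)$ is a consistent update set in $\Delta(r,S,\zeta)$, and to nothing otherwise. Then \textbf{M1} and Necessitation \textbf{M2} hold because $[X]$ is a normal box over this frame; \textbf{M4} is precisely the convention that $[X]\varphi$ is $\mathit{true}$ when $\zeta(X)$ is inconsistent; \textbf{M5} holds because applying a fixed consistent update set is deterministic, so $[X]$ and its dual coincide; \textbf{M6} (Barcan) holds because all states of a DB-ASM share the same base set $B_{db}\cup B_a$ and the second-order ranges are likewise state-independent; \textbf{M7}--\textbf{M8} hold because static and pure formulae contain no dynamic function symbol, hence have the same interpretation in $S$ and in $S+\Delta$. \textbf{A1}--\textbf{A2} are literally the defining clauses of $val_{S+\Delta}(\ell)$; \textbf{A3} follows because the collapse from update multisets to update sets is total, so $\ddot\Delta(r,S,\zeta)\neq\emptyset$ forces $\Delta(r,S,\zeta)\neq\emptyset$; \textbf{DY1} is read off the definition of $\Delta(\textbf{seq}\,r_1\,r_2\,\textbf{endseq},S,\zeta)$; and \textbf{E} from the definition of $r_1\equiv r_2$.

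The genuine work — and the expected main obstacle — is the validity of the ``semantic'' axioms \textbf{U1}--\textbf{U7} and \textbf{UM1}--\textbf{UM7}. These must be verified by structural induction on the rule $r$, matching each biconditional clause-by-clause against Figures~\ref{fig:set} and~\ref{fig:multiset}: the right-hand side of \textbf{U}$i$ must hold under $(S,\zeta)$ iff $\zeta(X)\in\Delta(r,S,\zeta)$, and similarly for \textbf{UM}$i$ with the multiset encoded by $\zeta(\ddot X)$ and $\ddot\Delta(r,S,\zeta)$. The delicate cases are the forall and par axioms \textbf{U3}/\textbf{UM3} and \textbf{U4}/\textbf{UM4}: for update sets one must check that the existential second-order variable $\mathcal X$ faithfully codes the indexed family $\{\Delta_{a_i}\}$, and for update multisets that the abbreviation ``$F_f$ (resp.\ $G_f$) is a bijection'' really expresses equality of the multisets $\ddot X|_f$ and $\ddot{\mathcal X}|_f$ (resp.\ $(\ddot Y_1\uplus\ddot Y_2)|_f$) under the multiplicity-encoding convention (an occurrence of $(f,t,t_0)$ is tagged by an auxiliary element of $B_a$), taking care that distinct occurrences get distinct tags and that the disjointness inherent in $\uplus$ is correctly captured. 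The seq axioms \textbf{U6}/\textbf{UM6} additionally require the semantics of $[Y_1]$ evaluated at the intermediate state $S+Y_1$ together with the consistency guard, and the let axioms \textbf{U7}/\textbf{UM7} require that $val_{S,\zeta}(\rho_{y}(y\mid\exists\mathtt z\,(\ddot X(f,t,y,\mathtt z))))$ equals $\rho$ applied to the multiset of values at $(f,t)$ recorded in $\zeta(\ddot X)$ — here Lemma~\ref{lem-finiteness} (finiteness of $B_{db}$, hence of all update sets and multisets) is exactly what guarantees the aggregation is well defined. Once these biconditionals are established as valid, the induction on derivations closes and soundness follows.
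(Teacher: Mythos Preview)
Your proposal is correct and follows essentially the same approach as the paper, which does not give a detailed proof but merely remarks that soundness is ``relatively straightforward, since the non-standard axioms and rules are just a formalisation of the definitions of the semantics of rules, update sets and update multisets.'' Your plan spells out precisely this: induction on derivations, with the substantive work being the verification that \textbf{U1}--\textbf{U7} and \textbf{UM1}--\textbf{UM7} mirror Figures~\ref{fig:set} and~\ref{fig:multiset}, which is exactly what the paper's one-line justification gestures at.
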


\section{Derivation}\label{sec:soundness}

In this section we present some properties of the logic for DB-ASMs which are implied by the axioms and rules from the previous section. This includes some properties known for the logic for ASMs~\cite{RobertLogicASM}. In particular, the logic for ASMs uses the modal expressions $[r]\varphi$ and
$\langle r\rangle\varphi$ with the following semantics:

\begin{itemize}

\item \hspace{0.2cm} [\![$[r]\varphi]\!]_{S,\zeta}=\begin{cases}
true &\text{if } [\![\varphi]\!]_{S+U,\zeta}=true \text{ for
all consistent }U\in \Delta(r,S,\zeta), \\
false &\text{otherwise} \end{cases}$

\item \hspace{0.2cm} [\![$\langle r\rangle\varphi]\!]_{S,\zeta}=\begin{cases}
true &\text{if } [\![\varphi]\!]_{S+U,\zeta}=true \text{ for
at least one consistent } \\
&U\in \Delta(r,S,\zeta),\\ false &\text{otherwise} \end{cases}$

\end{itemize}

Instead of introducing modal operators $[\hspace{0.1cm}]$ and
$\langle\hspace{0.1cm}\rangle$ for a DB-ASM rule $r$, we use the
modal expression $[X]\varphi$ for an update set yielded by a
possibly non-deterministic rule. The modal expressions $[r]\varphi$
and $\langle r\rangle\varphi$ in the logic for ASMs can be treated
as the shortcuts for the following formulae in our logic.
\begin{equation}\label{ASM1}
[r]\varphi \equiv \forall X (\text{upd}(r,X)\rightarrow[X]\varphi).
\end{equation}
\begin{equation}\label{ASM2}
\langle r\rangle\varphi \equiv\exists
X (\text{upd}(r,X)\wedge[X]\varphi).
\end{equation}

\begin{lemma}\label{lem-modalAxiomsASMs}

The following axioms and rules used in the logic for ASMs are derivable in the logic for DB-ASMs, where the rule $r$ in Axioms (c) and (d) is assumed to be defined and deterministic.

\begin{description}

  \item[(a)] $([r](\varphi\rightarrow\psi)\wedge [r]\varphi)\rightarrow[r]\psi$\smallskip

  \item[(b)] $\varphi\rightarrow[r]\varphi$, for static and pure $\varphi$. \smallskip

  \item[(c)] $\neg$wcon$(r)\rightarrow [r]\varphi$\smallskip

  \item[(d)] $[r]\varphi\leftrightarrow \neg[r]\neg\varphi$\smallskip

\end{description}
\end{lemma}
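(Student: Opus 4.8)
The plan is to unfold the modal expressions $[r]\varphi$ and $\langle r\rangle\varphi$ via their definitions (\ref{ASM1}) and (\ref{ASM2}), turning each of (a)--(d) into a statement about $[X]\psi$ inside the scope of a quantifier $\forall X$ or $\exists X$ ranging over update-set variables, and then to discharge it using the modal axioms \textbf{M1}--\textbf{M8}, the propositional schemes \textbf{P1}--\textbf{P3}, and the quantifier rules \textbf{UI}, \textbf{EG}, \textbf{UG}, \textbf{EI}. For (a) (the distribution axiom for $[r]$), it suffices by \textbf{UG} to derive, for a free variable $X$, the formula $\mathrm{upd}(r,X) \to ([X](\varphi \to \psi) \to ([X]\varphi \to [X]\psi))$; under the antecedent $\mathrm{upd}(r,X)$ this is exactly Axiom \textbf{M1} for $[X]$, so \textbf{M3} together with propositional reasoning closes the case, and generalising over $X$ gives (a). For (b), read as the Necessitation rule for $[r]$ (from $\vdash\varphi$ infer $\vdash[r]\varphi$): \textbf{M2} gives $\vdash [X]\varphi$ for a free $X$, hence $\vdash \mathrm{upd}(r,X) \to [X]\varphi$ by \textbf{P1} and \textbf{M3}, and \textbf{UG} over $X$ gives $\vdash[r]\varphi$; if (b) is instead intended as the axiom $\varphi \to [r]\varphi$ restricted to static and pure $\varphi$, the same argument works with \textbf{M7} in place of \textbf{M2}. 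For (c), unfolding $\neg\mathrm{wcon}(r)$ with (\ref{wcon}) and (\ref{conr}) yields $\forall X(\mathrm{upd}(r,X) \to \neg\mathrm{conUSet}(X))$; fixing a free $X$, assuming $\mathrm{upd}(r,X)$, applying \textbf{UI} and then Axiom \textbf{M4} gives $[X]\varphi$, and \textbf{UG} over $X$ yields $[r]\varphi$ (the hypothesis that $r$ be defined and deterministic is not actually needed for this direction and is merely inherited from the deterministic ASM setting).

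For (d), write $X \approx X_0$ for the formula stating that $X$ and $X_0$ encode the same update set. The direction $\neg[r]\neg\varphi \to [r]\varphi$ runs as follows: from $\neg[r]\neg\varphi$, i.e. $\exists X(\mathrm{upd}(r,X) \wedge \neg[X]\neg\varphi)$, choose a witness $X_0$ by \textbf{EI}; Axiom \textbf{M5} gives $[X_0]\neg\neg\varphi$, and Necessitation applied to the tautology $\neg\neg\varphi \to \varphi$ together with \textbf{M1} gives $[X_0]\varphi$; since $r$ is deterministic, every $X$ with $\mathrm{upd}(r,X)$ satisfies $X \approx X_0$, whence $[X]\varphi$, and \textbf{UG} over $X$ yields $[r]\varphi$. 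For the converse $[r]\varphi \to \neg[r]\neg\varphi$, I would read ``defined and deterministic'' together with the tacit (and genuinely necessary) assumption $\mathrm{wcon}(r)$, since an inconsistent deterministic rule makes both $[r]\varphi$ and $[r]\neg\varphi$ vacuously true via \textbf{M4}. Granting $\mathrm{wcon}(r)$, use \textbf{EI} to obtain $X_0$ with $\mathrm{con}(r,X_0)$ and \textbf{UI} on $[r]\varphi$ to obtain $[X_0]\varphi$; taking a static pure unsatisfiable formula $\bot$ (e.g. $\neg(c = c)$ for a static constant symbol $c$), Axiom \textbf{M8} gives $\mathrm{con}(r,X_0) \wedge [X_0]\bot \to \bot$, hence $\neg[X_0]\bot$, whereas $[X_0]\varphi$ and $[X_0]\neg\varphi$ together would force $[X_0]\bot$ via Necessitation on $\varphi \to (\neg\varphi \to \bot)$ and \textbf{M1}; therefore $\neg[X_0]\neg\varphi$, and \textbf{EG} gives $\neg[r]\neg\varphi$.

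The step I expect to be the main obstacle is the bookkeeping around the second-order update-set variable crossing the modal operator $[X]$: legitimising universal and existential generalisation over $X$ in formulas of the shape $\mathrm{upd}(r,X) \to [X]\psi$, and, in (d), propagating the extensional equality $X \approx X_0$ through $[X]$ in the deterministic case --- the semantically obvious but syntactically-in-need-of-justification fact that the interpretation of $[X]$ depends only on the set denoted by $X$ --- together with the easily missed point that the direction $[r]\varphi \to \neg[r]\neg\varphi$ really does require the update set of $r$ to be consistent. Everything else reduces to routine modal and propositional derivations from \textbf{M1}--\textbf{M8}, \textbf{P1}--\textbf{P3} and the quantifier rules, mirroring the corresponding derivations in the ASM logic of Nanchen and St\"ark.
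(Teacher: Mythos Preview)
Your approach is essentially the paper's: unfold $[r]$ via (\ref{ASM1}), push the reasoning inside the quantifier over $X$, invoke \textbf{M1}, \textbf{M2}, \textbf{M4}, \textbf{M5}, and then re-generalise. For (a) and (b) your derivations match the paper's almost line for line.

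Two points where you diverge are worth noting. For (c), the paper actually \emph{uses} determinism, collapsing $\neg\mathrm{wcon}(r)$ to $\neg\mathrm{conUSet}(X)$ for the unique $X$ and then applying \textbf{M4} directly; your argument via $\forall X(\mathrm{upd}(r,X)\to\neg\mathrm{conUSet}(X))$ plus \textbf{M4} plus \textbf{UG} is cleaner and, as you observe, does not require determinism at all. For (d), the paper simply argues that $\neg[r]\neg\varphi$ unfolds (via \textbf{M5}) to $\exists X(\mathrm{upd}(r,X)\wedge[X]\varphi)$ and then appeals to the coincidence of the $\forall$- and $\exists$-forms under determinism, without isolating the consistency issue; your treatment is more careful in singling out that the direction $[r]\varphi\to\neg[r]\neg\varphi$ genuinely needs $\mathrm{wcon}(r)$ (otherwise an inconsistent unique update set makes both $[r]\varphi$ and $[r]\neg\varphi$ vacuously true via \textbf{M4}). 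The paper's phrase ``defined and deterministic'' is tacitly carrying this load; your explicit use of \textbf{M8} with a static pure $\bot$ to rule out $[X_0]\varphi\wedge[X_0]\neg\varphi$ makes the dependence visible. The extensionality step you flag (propagating $X\approx X_0$ through $[X]$) is indeed glossed over in the paper as well; it is semantically immediate but not derived syntactically there either.
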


\begin{proof} We can prove them as follows:
\begin{itemize}
  \item (a): By Equation~\ref{ASM1}, we have that $[r](\varphi\rightarrow\psi)\wedge [r]\varphi \equiv \forall X (\mathrm{upd}(r,X)\rightarrow [X](\varphi\rightarrow\psi)) \wedge \forall X (\mathrm{upd}(r,X)\rightarrow [X]\varphi)$. By the axioms from classical logic, this is in turn equivalent to $\forall X (\mathrm{upd}(r,X)\rightarrow([X](\varphi\rightarrow\psi)\wedge[X]\varphi))$.
Then by Axiom~\textbf{M1}, we get $\forall X (\mathrm{upd}(r,X)\rightarrow(([X]\varphi \rightarrow [X]\psi) \wedge [X]\varphi))$. Finally, by rule~\textbf{M3} (Modus Ponens), we derive $\forall X (\mathrm{upd}(r,X)\rightarrow [X]\psi)$, which by Equation~\ref{ASM1} is equivalent to $[r]\psi$ in the logic for ASMs.
  \item (b): By Rule~\textbf{M7}, we have that $\varphi\rightarrow \mathrm{upd}(r, X) \wedge [X]\varphi$, for
  static and pure $\varphi$. By the universal generalization rule for second-order variables (\textbf{SO-UG}), we obtain $\varphi\rightarrow \forall X(\mathrm{upd}(r, X) \wedge [X]\varphi)$. Finally, Equation~\ref{ASM1} gives us $\varphi\rightarrow [r]\varphi$ for static and pure $\varphi$.
  \item (c): By Equation~\ref{wcon}, we have $\neg \mathrm{wcon}(r)\leftrightarrow \neg\exists
  X (\mathrm{con}(r,X))$. In turn, by Equation~\ref{conr}, we get $\neg \mathrm{wcon}(r) \leftrightarrow \neg \exists X (\mathrm{upd}(r,X)\wedge \mathrm{conUSet}(X))$.
  Since a rule $r$ in the logic for ASMs is deterministic, we get $\neg \mathrm{wcon}(r)\leftrightarrow \neg \mathrm{conUSet}(X)$. By Axiom \textbf{M4}, we get $\neg \mathrm{wcon}(r)\rightarrow
  [r]\varphi$.
  \item (d): By Equation~\ref{ASM1}, we have $\neg[r]\neg\varphi \equiv \exists X(\mathrm{upd}(r,X)\wedge \neg[X]\neg\varphi)$. By applying Axiom~\textbf{M5} to $\neg[X]\neg\varphi$, we get
$\neg[r]\neg\varphi \equiv\exists X(\mathrm{upd}(r,X)\wedge[X]\varphi)$. When the rule $r$ is deterministic, the interpretation of $\forall X (\mathrm{upd}(r,X)\rightarrow [X]\varphi)$ coincides the interpretation of $\exists X (\mathrm{upd}(r,X)\wedge [X]\varphi)$ and therefore $[r]\varphi\leftrightarrow \neg[r]\neg\varphi$.
\end{itemize}
\end{proof}

The logic for ASMs introduced in \cite{RobertLogicASM} is deterministic, i.e., it excludes nondeterministic choice rules.
In contrast, our logic for DB-ASMs includes a nondeterministic choice rule. Note that
the formula Con$(R)$ in \textbf{Axiom 5} in \cite{RobertLogicASM} (i.e.,
in $\neg$Con$(R)\rightarrow [R]\varphi$) corresponds to the weak version of consistency (i.e., wcon$(r)$)
in the context of our logic for DB-ASMs.

\begin{lemma}\label{lem-soundness-modaloperator}
The following properties are derivable in the logic for DB-ASMs.
\begin{description}
  \item[(e)] $\mathrm{con}(r,X) \wedge [X]f(v_1)=v_2\rightarrow X(c_f,v_1,v_2)\vee(\forall v_3 (\neg X(c_f,v_1,v_3)) \wedge f(v_1)=v_2)$, where $v_1, v_2, v_3 \in {\cal X}_{db}$ if $f$ is a database function symbol, $v_1, v_2, v_3 \in {\cal X}_{a}$ if $f$ is an algorithmic function symbol, and $v_1 \in {\cal X}_{db}$ and $v_2, v_3 \in {\cal X}_{a}$ if $f$ is a bridge function symbol\smallskip
  \item[(f)] $\mathrm{con}(r,X) \wedge [X]\varphi \rightarrow \neg[X]\neg\varphi$\smallskip
  \item[(g)] $[X]\exists v (\varphi) \rightarrow \exists v ([X]\varphi)$, where $v \in {\cal X}_{db} \cup {\cal X}_{a}$ is a first-order variable.\smallskip
  \item[(h)] $[X]\varphi_1 \wedge [X]\varphi_2 \rightarrow [X] (\varphi_1 \wedge \varphi_2)$\smallskip
\end{description}

\end{lemma}

\begin{proof}
(e) is derivable by applying Axioms~\textbf{A1} and~\textbf{A2}. (f)
is a straightforward result of Axiom~\textbf{M5}. (g) can be derived
by applying Axioms~\textbf{M5} and~\textbf{M6}. Regarding (h), it is
derivable by using Axioms \textbf{M1}-\textbf{M3}.
\end{proof}

\begin{lemma}For arbitrary terms $t, s$ and first-order variables $v_1, v_2$ of the appropriate type (depending on whether $f$ is a database, algorithmic or bridge function symbol), the following properties in \cite{GroenboomFLEA95} are derivable in the logic for DB-ASMs.
\begin{itemize}
  \item $v_1=t \rightarrow (v_2=s \leftrightarrow [f(t):=s] f(v_1)=v_2)$
  \item $v_1 \neq t \rightarrow (v_2 = f(v_1) \leftrightarrow[f(t):=s]f(v_1)=v_2)$
\end{itemize}
\end{lemma}

In DB-ASMs, two parallel computations may produce an update multiset,
in which there are identical updates to a location assigned with a
location operator. Without an outer \textbf{let} rule, the rule \textbf{par} $ r \; r $ \textbf{endpar}
could be simplified to $r$. This however is no longer the case if we consider update multisets.

\begin{example}

In the DB-ASM rule below, \textsc{sum} is a location operator assigned to the location (\textsc{tnum},()). Two identical updates (i.e., (\textsc{tnum},(),1) and (\textsc{tnum},(),1)) are first
generated in an update multiset, and then aggregated into one update (\textsc{tnum},(),2) in an update set.

\vspace{3.5cm}
 \hspace{2.5cm}\textbf{let} $(\textsc{tnum},())
\!\rightharpoonup\! \textsc{sum}$ \textbf{in}

\hspace{3cm}\textbf{par}

\hspace{3.5cm} $\textsc{tnum}:=1$

\hspace{3.5cm} $\textsc{tnum}:= 1$

\hspace{3cm}\textbf{endpar}

\hspace{2.5cm}\textbf{endlet}\\[0.2cm]
The update multiset is collapsed into the update set $\{ (\textsc{tnum},(),2) \}$, whereas without the \textbf{let} rule we would obtain $\{ (\textsc{tnum},(),1) \}$.

\end{example}

Following the approach of defining the predicate joinable in
\cite{RobertLogicASM}, we define the predicate joinable over two
DB-ASM rules. As DB-ASM rules are allowed to be nondeterministic,
the predicate joinable$(r_1,r_2)$ means that there exists a pair of
update sets without conflicting updates, which are yielded by rules
$r_1$ and $r_2$, respectively. Then, based on the use of predicate
joinable, the properties in Lemma \ref{lem-soundness-consistency}
are all derivable.

\begin{equation}\label{joinable}
\begin{split}
\text{joinable}(r_1,r_2) \equiv &  \exists X_1 X_2 \big(\mathrm{upd}(r_1,X_1)\wedge\mathrm{upd}(r_2,X_2)\wedge\hspace{2cm}  \\
  &  \bigwedge\limits_{c_f\in\mathcal{F}_{dyn} \wedge f \in \Upsilon_{db}}\forall x y z (X_1(c_f,x,y)\wedge X_2(c_f,x,z)\rightarrow y=z ) \wedge\\
  &  \bigwedge\limits_{c_f\in\mathcal{F}_{dyn} \wedge f \in \Upsilon_{a}}\forall \mathtt{x} \mathtt{y} \mathtt{z} (X_1(c_f,\mathtt{x},\mathtt{y})\wedge X_2(c_f,\mathtt{x},\mathtt{z})\rightarrow \mathtt{y}=\mathtt{z} ) \wedge\\
  &  \bigwedge\limits_{c_f\in\mathcal{F}_{dyn} \wedge f \in {\cal F}_b}\forall x \mathtt{y} \mathtt{z} (X_1(c_f,x,\mathtt{y})\wedge X_2(c_f,x,\mathtt{z})\rightarrow \mathtt{y}=\mathtt{z} ) \big)
\end{split}
\end{equation}

\begin{lemma}\label{lem-soundness-consistency}
The following properties for weak consistency are derivable in the
logic of DB-ASMs.
\begin{description}
  \item[(i)] $\mathrm{wcon}(f(t):=s)$

  \item[(j)] $\mathrm{wcon}(\textbf{if}\, \varphi\, \textbf{then}\, r \,\textbf{endif}) \leftrightarrow \neg\varphi\vee(\varphi\wedge \mathrm{wcon}(r))$

  \item[(k)] $\mathrm{wcon}(\textbf{forall} \, x \, \textbf{with} \, \varphi \, \textbf{do} \, r \, \textbf{enddo}) \leftrightarrow$\\
    \hspace*{4.5cm}$\forall x (\varphi\rightarrow \mathrm{wcon}(r) \wedge \forall y (\varphi[y/x]\rightarrow \text{joinable}(r,r[y/x])))$
\item[(l)] $\mathrm{wcon}(\textbf{par} \, r_1 \, r_2 \, \textbf{endpar}) \leftrightarrow \mathrm{wcon}(r_1) \wedge \mathrm{wcon}(r_2) \wedge joinable(r_1,r_2)$
  \item[(m)] $\mathrm{wcon}(\textbf{choose} \, x \, \textbf{with} \, \varphi \, \textbf{do} \, r \, \textbf{enddo}) \leftrightarrow \exists x (\varphi\wedge \mathrm{wcon}(r))$
  \item[(n)] $\mathrm{wcon}(\textbf{seq} \, r_1 \, r_2 \, \textbf{endseq}) \leftrightarrow \exists X (\mathrm{con}(r_1,X) \wedge [X]\mathrm{wcon}(r_2))$
  \item[(o)] If $f$ is a bridge function symbol: \\
$\mathrm{wcon}(\textbf{let} \, (f,t) \!\rightharpoonup\! \rho \, \textbf{in} \, r \,\textbf{endlet}) \leftrightarrow$\\
\hspace*{1.3cm}$\exists X Y (\mathrm{upd}(r,X) \wedge \mathrm{conUSet}(Y) \wedge \forall x \mathtt{y} (Y(c_f,x,\mathtt{y}) \leftrightarrow (t = x \vee X(c_f,x,\mathtt{y}))) \wedge $\\
\hspace*{2cm}   $\forall \mathtt{z} x y (X(\mathtt{z},x,y) \leftrightarrow Y(\mathtt{z},x,y)) \wedge \forall \mathtt{z} \mathtt{x} \mathtt{y} (X(\mathtt{z},\mathtt{x},\mathtt{y}) \leftrightarrow Y(\mathtt{z},\mathtt{x},\mathtt{y})))$
\end{description}
\end{lemma}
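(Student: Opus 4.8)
The plan is to derive the seven biconditionals (i)--(o) one at a time, each directly from the corresponding axiom $\textbf{U}i$ for $\mathrm{upd}$ in Figure~\ref{Fig-AxiomsUpdateSets} (together with $\textbf{UM7}$ from Figure~\ref{Fig-AxiomsUpdateMultisets} for the $\textbf{let}$ case), by unfolding the abbreviations $\mathrm{wcon}$, $\mathrm{con}$, $\mathrm{conUSet}$ and $\mathrm{joinable}$ from Equations~(\ref{wcon}), (\ref{conr}), (\ref{con}) and (\ref{joinable}), and then doing purely propositional and quantifier-level rearrangements plus a few applications of the modal axioms. No structural induction over $r$ is needed: the $\mathrm{wcon}$ of the immediate sub-rules simply reappears on the right-hand side of each clause. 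The bookkeeping facts used repeatedly are: a one-element update relation trivially satisfies $\mathrm{conUSet}$; $\mathrm{conUSet}(Y_1\cup Y_2)$ is equivalent to the conjunction of $\mathrm{conUSet}(Y_1)$, $\mathrm{conUSet}(Y_2)$ and the cross-conflict clause occurring inside $\mathrm{joinable}$; $\mathrm{conUSet}(Y_1\oslash Y_2)\leftrightarrow\mathrm{conUSet}(Y_2)$ whenever $\mathrm{conUSet}(Y_1)$ holds, since $\oslash$ retains $Y_1$-updates only on locations untouched by $Y_2$; and $\mathrm{conUSet}(X)$ is state-independent, so by \textbf{M7}/\textbf{M8} it may be moved through a modal operator $[Y]$ whenever $Y$ is consistent.

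Cases (i), (j) and (m) are then essentially immediate. For (i), expand $\mathrm{wcon}(f(t):=s)=\exists X(\mathrm{upd}(f(t):=s,X)\wedge\mathrm{conUSet}(X))$, instantiate $X$ by the singleton update set forced by \textbf{U1}, and discharge $\exists X$ since that set is consistent. For (j) use \textbf{U2}: the update set of the conditional is either one of $r$ (when $\varphi$) or the empty set (when $\neg\varphi$); pushing $\mathrm{conUSet}$ through this case split turns the first alternative into $\mathrm{wcon}(r)$ and leaves the consistent empty set as the disjunct $\neg\varphi$. For (m) use \textbf{U5}: the update set of the choice rule is some update set of $r$ under an admissible valuation of $x$, and $\exists X(\exists x(\varphi\wedge\mathrm{upd}(r,X))\wedge\mathrm{conUSet}(X))$ is simply rearranged to $\exists x(\varphi\wedge\mathrm{wcon}(r))$ (no update sets are combined, so consistency does not interact with the quantifier). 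Case (n) is slightly longer: expanding via \textbf{U6}, the first disjunct carries $\neg\mathrm{conUSet}(X)$ and is killed by the $\mathrm{conUSet}(X)$ coming from $\mathrm{wcon}$, so only the $\oslash$-disjunct survives; then use $\mathrm{conUSet}(Y_1\oslash Y_2)\leftrightarrow\mathrm{conUSet}(Y_2)$ given $\mathrm{conUSet}(Y_1)$, the state-independence of $\mathrm{conUSet}$, and the (converse) Barcan direction together with \textbf{M5} to move $\exists Y_2$ across $[Y_1]$, so that the right-hand side reshapes into $\exists X(\mathrm{con}(r_1,X)\wedge[X]\mathrm{wcon}(r_2))$.

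The delicate clauses are (k) for $\textbf{forall}$, (l) for $\textbf{par}$ and (o) for $\textbf{let}$, and I expect (k) to be the main obstacle. For (k), expanding via \textbf{U3} exhibits an update set of the $\textbf{forall}$ rule as a union $\bigcup_{a}\Delta_a$ indexed by the finitely many $a\in B_{db}$ satisfying $\varphi$ (finiteness by Lemma~\ref{lem-finiteness}), with the family $(\Delta_a)_a$ coded into the relation ${\cal X}$; the key algebraic step is that $\mathrm{conUSet}(\bigcup_a\Delta_a)$ is equivalent to ``each $\Delta_a$ is consistent and every pair $\Delta_a,\Delta_b$ has no cross-conflict'', since any conflict in a union lies inside one member or across two of them. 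Re-indexing the per-element clause as $\forall x(\varphi\rightarrow\mathrm{wcon}(r))$ and the pairwise clause as $\forall x(\varphi\rightarrow\forall y(\varphi[y/x]\rightarrow\mathrm{joinable}(r,r[y/x])))$ then yields the stated form. The subtle point --- and where the proof has to work hardest --- is matching, at the formula level, the single second-order witness ${\cal X}$ of \textbf{U3}, which commits to one choice of update set per valuation, against the separate existential witnesses hidden inside $\mathrm{wcon}(r)$ and inside each $\mathrm{joinable}(r,r[y/x])$ on the right, i.e.\ arguing these choices can be taken uniformly; this is precisely where non-determinism interacts with the $\cup$-semantics and needs careful handling. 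Clause (l) is the $\textbf{par}$ specialisation of the same phenomenon: expand via \textbf{U4}, decompose $\mathrm{conUSet}(Y_1\cup Y_2)$, and re-assemble the second-order quantifier block into $\mathrm{wcon}(r_1)\wedge\mathrm{wcon}(r_2)\wedge\mathrm{joinable}(r_1,r_2)$. Clause (o) uses \textbf{U7} and \textbf{UM7}: the aggregated value placed at the designated location $(f,t)$ is always a single value, so consistency there is automatic, and consistency of the $\textbf{let}$-rule's update set reduces to consistency off that location, which is exactly what the auxiliary relation $Y$ in (o) isolates; the equality axioms \textbf{EQ1}--\textbf{EQ2} handle the location term $t$. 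Soundness of all these derivations is guaranteed a posteriori by Theorem~\ref{c5-theoremsoundness}.
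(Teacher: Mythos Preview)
The paper does not prove this lemma at all: immediately after stating it the authors write ``We omit the proof of the previous lemma as well as the proof of the remaining lemmas in this section, since they are lengthy but relatively easy exercises.'' So there is nothing in the paper to compare your proposal against. Your overall strategy --- unfold $\mathrm{wcon}$, $\mathrm{con}$, $\mathrm{conUSet}$ and $\mathrm{joinable}$ via Equations~(\ref{con})--(\ref{joinable}), apply the matching axiom \textbf{U}$i$ (with \textbf{UM7} for the \textbf{let} case), and then do propositional and quantifier bookkeeping together with the modal axioms --- is exactly the natural route and is plainly what the authors have in mind; your treatment of (i), (j), (m), (n) and (o) is fine.

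Your instinct that (k) and (l) are ``where the proof has to work hardest'' is well placed, and the difficulty you flag there is not just a matter of careful handling but a genuine obstruction. In the right-to-left direction of (l) the three conjuncts $\mathrm{wcon}(r_1)$, $\mathrm{wcon}(r_2)$ and $\mathrm{joinable}(r_1,r_2)$ each carry their own, a priori unrelated, existential witnesses, and nothing forces them to agree. Concretely, let $r_1$ non-deterministically yield either $\{(f,a,1)\}$ or the inconsistent set $\{(g,b,1),(g,b,2)\}$, and let $r_2$ yield either $\{(f,a,2)\}$ or the inconsistent set $\{(h,c,1),(h,c,2)\}$. Then $\mathrm{wcon}(r_1)$ and $\mathrm{wcon}(r_2)$ hold (first alternative each), and $\mathrm{joinable}(r_1,r_2)$ holds (pair the first alternative of $r_1$ with the second of $r_2$: disjoint locations, hence no cross-conflict), yet every union of an $r_1$-update-set with an $r_2$-update-set is inconsistent, so $\mathrm{wcon}(\textbf{par}\,r_1\;r_2\,\textbf{endpar})$ fails. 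The same phenomenon breaks the right-to-left direction of (k). These clauses are the direct carry-over of the deterministic versions from Nanchen and St\"ark's ASM logic, where the witness-uniformity issue cannot arise; as written they do not survive the extension to non-deterministic sub-rules, so the step you were worried about cannot be completed.
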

We omit the proof of the previous lemma as well as the proof of the remaining lemmas in this section, since they are lengthy but relatively easy exercises. Furthermore, most of them are similar to the proofs of the analogous results in Nanchen's thesis \cite{Nanchen07}. 

\begin{lemma}\label{lem-soundness-rule}The following properties for the formula $[r]\varphi$ are derivable in the logic for DB-ASMs.
\begin{description}
  \item[(p)] $[\textbf{if} \, \varphi \, \textbf{then} \, r \, \textbf{endif}]\psi \leftrightarrow (\varphi\wedge[r]\psi) \vee (\neg \varphi \wedge \psi)$\smallskip
  \item[(q)] $[\textbf{choose} \, x \, \textbf{with} \, \varphi \, \textbf{do} \, r \, \textbf{enddo}]\psi \leftrightarrow \forall x (\varphi\rightarrow [r]\psi)$\smallskip
\end{description}
\end{lemma}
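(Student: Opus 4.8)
The plan is to treat $[r]\psi$ as the abbreviation $\forall X(\mathrm{upd}(r,X)\rightarrow[X]\psi)$ supplied by Equation~\ref{ASM1}, expand the modal expression on the left of each equivalence accordingly, rewrite the $\mathrm{upd}$-atom by the matching axiom from Figure~\ref{Fig-AxiomsUpdateSets}, and then rearrange with propositional and quantifier reasoning. For part~(q) I would first turn $[\textbf{choose}\,x\,\textbf{with}\,\varphi\,\textbf{do}\,r\,\textbf{enddo}]\psi$ into $\forall X(\mathrm{upd}(\textbf{choose}\,x\,\textbf{with}\,\varphi\,\textbf{do}\,r\,\textbf{enddo},X)\rightarrow[X]\psi)$ and apply Axiom~\textbf{U5} to replace the antecedent by $\exists x(\varphi\wedge\mathrm{upd}(r,X))$. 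Since $x$ does not occur free in $\psi$ (hence not in $[X]\psi$) and $X$ does not occur free in $\varphi$, the equivalences $(\exists x\,A\rightarrow B)\leftrightarrow\forall x(A\rightarrow B)$ and $\forall X(A\wedge B\rightarrow C)\leftrightarrow(A\rightarrow\forall X(B\rightarrow C))$ — both derivable from \textbf{P1}--\textbf{P3}, \textbf{M3} and the quantifier rules \textbf{UI}, \textbf{EG}, \textbf{UG} (which by the paper's convention apply to second-order variables as well) — reshape the formula into $\forall x\big(\varphi\rightarrow\forall X(\mathrm{upd}(r,X)\rightarrow[X]\psi)\big)$, whose inner part is $[r]\psi$ by Equation~\ref{ASM1}. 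This already settles~(q) and uses no modal axiom beyond \textbf{M3}.

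For part~(p) the same opening moves, now with Axiom~\textbf{U2}, rewrite $[\textbf{if}\,\varphi\,\textbf{then}\,r\,\textbf{endif}]\psi$ as $\forall X\big(((\varphi\wedge\mathrm{upd}(r,X))\vee(\neg\varphi\wedge\bigwedge_{f}\forall x y\,\neg X(f,x,y)))\rightarrow[X]\psi\big)$. Distributing $\forall X$ over the disjunction in the antecedent splits this into the conjunction of $\varphi\rightarrow\forall X(\mathrm{upd}(r,X)\rightarrow[X]\psi)$, which is $\varphi\rightarrow[r]\psi$, and $\neg\varphi\rightarrow\forall X(\bigwedge_{f}\forall x y\,\neg X(f,x,y)\rightarrow[X]\psi)$. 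The remaining task is to show that the second conjunct equals $\neg\varphi\rightarrow\psi$, i.e.\ that $[X]\psi\leftrightarrow\psi$ whenever $X$ is forced to denote the empty update set, and finally to observe that $(\varphi\rightarrow[r]\psi)\wedge(\neg\varphi\rightarrow\psi)$ is propositionally equivalent to $(\varphi\wedge[r]\psi)\vee(\neg\varphi\wedge\psi)$, which is immediate from \textbf{P1}--\textbf{P3} and \textbf{M3}.

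The transparency of the empty update set is the only non-routine point, and I expect it to be the main obstacle. The cleanest route uses a ``skip'' rule $r_0=\textbf{if}\,\neg(x=x)\,\textbf{then}\,r\,\textbf{endif}$: by Axiom~\textbf{U2} its only associated update set is the empty one, so $[r_0]\psi$ is exactly $\forall X(\bigwedge_{f}\forall x y\,\neg X(f,x,y)\rightarrow[X]\psi)$; moreover $r_0$ is defined and deterministic, so Lemma~\ref{lem-modalAxiomsASMs}(b) yields $\psi\rightarrow[r_0]\psi$ while Lemma~\ref{lem-modalAxiomsASMs}(d), used together with Lemma~\ref{lem-modalAxiomsASMs}(b) applied to $\neg\psi$ and contraposed, yields $[r_0]\psi\rightarrow\psi$. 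Alternatively one proves $[X]\psi\leftrightarrow\psi$ for empty $X$ directly by structural induction on $\psi$, using Axiom~\textbf{A1} and its converse (derivable from \textbf{A1}, \textbf{A2} and \textbf{M5}) for atoms built from dynamic function symbols, \textbf{M5} for negation, \textbf{M1}--\textbf{M3} for conjunction, and the Barcan axiom~\textbf{M6} together with the converse Barcan direction (derivable from \textbf{M1}, \textbf{M2}, \textbf{UG}) for first- and second-order quantifiers, with the $\mathrm{upd}$/$\mathrm{upm}$-atoms and nested $[X']$-modalities handled by the preceding cases since $S+\emptyset=S$. The converse direction on atoms and the bookkeeping for nested modal subformulas are where care is needed; everything else is routine manipulation with the axioms already in $\mathfrak{R}$.
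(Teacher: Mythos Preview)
Your proposal is correct and follows essentially the same route as the paper's (suppressed) proof: expand $[r]\psi$ via Equation~\ref{ASM1}, replace the $\mathrm{upd}$-atom by Axiom~\textbf{U2} (resp.\ \textbf{U5}), and rearrange by propositional and quantifier reasoning. The one point on which you go beyond the paper is the empty-update-set step in~(p): the paper simply jumps from $\forall X\big(\bigwedge_{f}\forall x y\,\neg X(f,x,y)\rightarrow[X]\psi\big)$ to $\psi$, whereas you isolate this as the only non-routine step and supply two arguments for it (via a deterministic ``skip'' rule and Lemma~\ref{lem-modalAxiomsASMs}(b),(d), or by structural induction using \textbf{A1}, \textbf{A2}, \textbf{M5}, \textbf{M6}). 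That extra care is warranted and both of your arguments go through.
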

Lemma \ref{lem-soundness-composition} states that a parallel
composition is commutative and associative while a sequential
composition is associative.

\begin{lemma}\label{lem-soundness-composition}The following properties for parallel and sequential compositions are derivable in the logic for DB-ASMs.
\begin{description}
  \item[(r)] \textbf{par} $r_1\hspace{0.2cm} r_2$ \textbf{endpar} $\equiv$ \textbf{par} $r_2\hspace{0.2cm} r_1$ \textbf{endpar}\smallskip
  \item[(s)] \textbf{par} (\textbf{par} $r_1\hspace{0.2cm} r_2$ \textbf{endpar}) $r_3$ \textbf{endpar} $\equiv$ \textbf{par} $r_1$ (\textbf{par} $r_2\hspace{0.2cm} r_3$
  \textbf{endpar}) \textbf{endpar}\smallskip
  \item[(t)] \textbf{seq} (\textbf{seq} $r_1\hspace{0.2cm} r_2$ \textbf{endseq}) $r_3$ \textbf{endseq} $\equiv$ \textbf{seq} $r_1$ (\textbf{seq} $r_2\hspace{0.2cm} r_3$
  \textbf{endseq}) \textbf{endseq}\smallskip
\end{description}
\end{lemma}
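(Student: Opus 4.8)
The plan is to reduce each of (r), (s) and (t) to the definition of rule equivalence (Definition~\ref{def-equivalent-rules}), i.e.\ to deriving $\forall X (\mathrm{upd}(r_1,X) \leftrightarrow \mathrm{upd}(r_2,X))$, and then to unfold the predicate $\mathrm{upd}$ on both sides by means of the bi-implicational axioms of Fig.~\ref{Fig-AxiomsUpdateSets}. Once $\mathrm{upd}(r_1,X)$ and $\mathrm{upd}(r_2,X)$ have been rewritten, what remains is in each case a first-order tautology (augmented, for the sequence case, by modal reasoning), which is discharged with the classical axioms \textbf{P1}--\textbf{P3}, the quantifier rules, and --- only for (t) --- the modal axioms; a final application of \textbf{UG} over the free variable $X$ then yields the universally quantified equivalence.

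For (r), applying Axiom~\textbf{U4} to both $\textbf{par}\ r_1\ r_2\ \textbf{endpar}$ and $\textbf{par}\ r_2\ r_1\ \textbf{endpar}$ gives two formulae that differ only in the names of the existentially quantified second-order variables $Y_1,Y_2$ and in the order of the two disjuncts inside $X(f,x,y) \leftrightarrow (Y_1(f,x,y) \vee Y_2(f,x,y))$. Renaming the bound variables and invoking commutativity of $\vee$ shows the two right-hand sides are provably equivalent. For (s), apply Axiom~\textbf{U4} twice to each side. The left-hand rule unfolds to $\exists Y_1 Y_2 Y_3 \big( \mathrm{upd}(r_1,Y_1) \wedge \mathrm{upd}(r_2,Y_2) \wedge \mathrm{upd}(r_3,Y_3) \wedge \bigwedge_{f \in \mathcal{F}_{dyn}} \forall x y\, (X(f,x,y) \leftrightarrow ((Y_1(f,x,y) \vee Y_2(f,x,y)) \vee Y_3(f,x,y))) \big)$, while the right-hand rule unfolds to the same formula with the association $Y_1(f,x,y) \vee (Y_2(f,x,y) \vee Y_3(f,x,y))$; associativity of $\vee$ closes the gap. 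Neither (r) nor (s) involves the modal operator, so no modal reasoning is required.

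For (t) the argument follows the same pattern but is more delicate, since Axiom~\textbf{U6} splits into a consistent and an inconsistent branch and introduces the modal subformula $[Y_1]\mathrm{upd}(r_2,Y_2)$. Unfolding $\textbf{seq}\ (\textbf{seq}\ r_1\ r_2\ \textbf{endseq})\ r_3\ \textbf{endseq}$ with \textbf{U6} twice, and likewise $\textbf{seq}\ r_1\ (\textbf{seq}\ r_2\ r_3\ \textbf{endseq})$, produces on each side a disjunction over three genuinely distinct cases: (i) $r_1$ yields an inconsistent update set; (ii) $r_1$ yields a consistent $Y_1$ but $r_2$ yields, in the successor state accessed via $[Y_1]$, an update set $Y_2$ such that $Y_1 \oslash Y_2$ is inconsistent; (iii) $Y_1$ and $Y_1 \oslash Y_2$ are consistent and $r_3$ yields $Y_3$ in the doubly updated state. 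In case (i) both sides reduce to $\mathrm{upd}(r_1,X) \wedge \neg\mathrm{conUSet}(X)$; in case (ii), using Axiom~\textbf{M4} to dispose of the now-vacuous inner modality, both reduce to the assertion $X = Y_1 \oslash Y_2$; in case (iii) the left side reduces to $X = (Y_1 \oslash Y_2) \oslash Y_3$ and the right side to $X = Y_1 \oslash (Y_2 \oslash Y_3)$. The algebraic facts to be checked --- each a routine first-order tautology from the definition $\Delta_1 \oslash \Delta_2 = \Delta_2 \cup \{ (\ell,a) \in \Delta_1 \mid \ell \neq \ell' \text{ for all } (\ell',a') \in \Delta_2 \}$ --- are that $\oslash$ is associative and that, for consistent $Y_1$, the set $Y_1 \oslash Y_2$ is consistent iff $Y_2$ is. The matching of the nested modalities, i.e.\ that reasoning ``in $S + (Y_1 \oslash Y_2)$'' coincides with reasoning ``in $(S + Y_1) + Y_2$'', is supplied precisely by Axiom~\textbf{DY1} (together with \textbf{A1}, \textbf{A2} for the atomic case); and it is here, in the bookkeeping of the consistency side-conditions across the two ways of bracketing and in threading \textbf{DY1} through the three cases, that the bulk of the work lies. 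No new conceptual difficulty arises beyond this case-by-case matching, which is why the authors classify it as lengthy but routine.
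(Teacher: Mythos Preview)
The paper does not actually prove this lemma: immediately before it, the authors write ``We omit the proof of the previous lemma as well as the proof of the remaining lemmas in this section, since they are lengthy but relatively easy exercises.'' Your proposal is therefore not being compared against an existing argument but is supplying one where the paper gives none.

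That said, your approach is exactly the natural one and matches what the paper's remark implicitly suggests: unfold $\mathrm{upd}$ via Axioms~\textbf{U4} and~\textbf{U6}, then discharge the residual equivalences as propositional/first-order tautologies (commutativity and associativity of $\vee$ for (r) and (s); associativity of $\oslash$ plus the consistency-preservation fact for (t)), invoking \textbf{DY1} and the modal axioms to align the nested $[Y_1][Y_2]$ structure with $[Y_1 \oslash Y_2]$ in the sequence case. Your three-case decomposition for (t) and the two algebraic lemmas you isolate --- that $\oslash$ is associative on locations, and that for consistent $Y_1$ the composite $Y_1 \oslash Y_2$ is consistent iff $Y_2$ is --- are precisely the ingredients needed, and both are straightforward from the explicit definition of $\oslash$. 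This is a faithful and correct expansion of what the authors dismiss as routine.
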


\begin{lemma}The extensionality axiom for transition rules in the logic for ASMs~\cite{RobertLogicASM} is derivable in the logic for DB-ASMs.
\begin{description}

\item[(u)] $r_1\equiv r_2\rightarrow([r_1]\varphi\leftrightarrow [r_2]\varphi)$
\end{description}
\end{lemma}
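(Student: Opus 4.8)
The plan is to reduce the claim to the definition of rule equivalence (Definition~\ref{def-equivalent-rules}) together with the abbreviation \eqref{ASM1} for $[r]\varphi$, after which only a routine propositional manipulation remains.

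First I would unfold both sides of the target biconditional using \eqref{ASM1}: the formula $[r_1]\varphi$ abbreviates $\forall X(\mathrm{upd}(r_1,X)\rightarrow[X]\varphi)$ and $[r_2]\varphi$ abbreviates $\forall X(\mathrm{upd}(r_2,X)\rightarrow[X]\varphi)$, where without loss of generality the bound second-order variable $X$ is chosen so as not to occur free in $\varphi$. Then I would invoke Definition~\ref{def-equivalent-rules}: from the hypothesis $r_1\equiv r_2$ we may work under the assumption $\forall X(\mathrm{upd}(r_1,X)\leftrightarrow\mathrm{upd}(r_2,X))$.

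Next, assuming $r_1\equiv r_2$ and $[r_1]\varphi$, I would apply Rule~\textbf{UI} (which, by the conventions fixed in Subsection~\ref{sub:AxiomsRules}, also applies to second-order variables, and whose side condition is met here since the instantiating term is just $X$ and is thus static) to both universally quantified hypotheses, obtaining at an arbitrary $X$ the formulae $\mathrm{upd}(r_1,X)\rightarrow[X]\varphi$ and $\mathrm{upd}(r_2,X)\leftrightarrow\mathrm{upd}(r_1,X)$. Using the propositional axioms \textbf{P1}--\textbf{P3} and Rule~\textbf{M3} these yield $\mathrm{upd}(r_2,X)\rightarrow[X]\varphi$; since $X$ is arbitrary, Rule~\textbf{UG} gives $\forall X(\mathrm{upd}(r_2,X)\rightarrow[X]\varphi)$, i.e.\ $[r_2]\varphi$. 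The converse implication $[r_2]\varphi\rightarrow[r_1]\varphi$ follows by the symmetric argument, using the other direction of the biconditional, $\mathrm{upd}(r_1,X)\rightarrow\mathrm{upd}(r_2,X)$. Combining the two implications and discharging the assumption $r_1\equiv r_2$ yields $r_1\equiv r_2\rightarrow([r_1]\varphi\leftrightarrow[r_2]\varphi)$, which is statement (u).

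There is essentially no hard step; the only point needing a little attention is the legitimacy of applying \textbf{UI} and \textbf{UG} to the second-order variable $X$, which is sanctioned by the proof system of Subsection~\ref{sub:AxiomsRules}, and the observation that no purity side condition is triggered because the substituted term in each instantiation is a variable and hence static. As an alternative, one could instead derive (u) from the extensionality Axiom~\textbf{E} together with \eqref{ASM1} and \eqref{ASM2}, but the direct unfolding above is the shorter route.
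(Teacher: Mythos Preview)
Your argument is correct. The paper does not actually supply a proof for this lemma: immediately after Lemma~\ref{lem-soundness-consistency} it states that the proofs of the remaining lemmas in Section~\ref{sec:soundness} are omitted as ``lengthy but relatively easy exercises,'' and (u) falls under that blanket omission. Your direct route via Definition~\ref{def-equivalent-rules} and the abbreviation \eqref{ASM1} is precisely the natural derivation one would expect; it mirrors the style the paper uses in the proof of Lemma~\ref{lem-modalAxiomsASMs}, including the application of \textbf{UG} to the second-order variable $X$. The alternative you mention through Axiom~\textbf{E} is less convenient here, since \textbf{E} as stated produces an existential statement aligned with $\langle r\rangle\varphi$ via \eqref{ASM2} rather than the universal form needed for $[r]\varphi$.
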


\section{Completeness}\label{sec:completeness}

In this section we prove the completeness of the proof system of the logic ${\cal L}^{db}$ for DB-ASMs which we introduced in the previous section.

In the following, $S$ denotes an arbitrary Henkin meta-finite structure of signature (of meta-finite states) $\Upsilon = \Upsilon_{db} \cup \Upsilon_a \cup {\cal F}_b$ (recall Definition~\ref{HenkinStructure}. As before, $B = B_{db} \cup B_a$ denotes the base set (domain) of individual of $S$, where $B_{db}$ and $B_a$ are the base sets of the database and algorithmic parts, respectively, and $D_n$ the universe of $n$-ary relations. 

Clearly, we cannot axiomatize an arbitrary set $\Lambda = \{\rho^1, \ldots, \rho^m\}$ of location operators. Note that even if we just take a simple location operator such as $\mathrm{PRODUCT}$ and axiomatize it, that leads us outside linear arithmetic and thus to an incomplete theory. 
As a compromise solution for this problem, we treat location (multiset) operators as standard non-axiomatized functions as follows. 

\begin{definition}\label{multisetOpAsFunc}
We assume that $\Upsilon$ further includes a subset $\Upsilon_\Lambda = \{f_{\rho^1}, \ldots, f_{\rho^m}\}$ of static function symbols, where each $f_{\rho^i}$ is interpreted in $S$ by a corresponding function $f^{S}_{\rho^i}: D_2 \rightarrow B_a$ defined as follows:  \[f^{S}_{\rho^i}(A) = \begin{cases} \rho^i(\{\!\!\{a \mid (a,b) \in A\}\!\!\}) & \text{if } \{\!\!\{a \mid (a,b) \in A\}\!\!\} \in \textit{dom}(\rho_i)  \\ \textit{undef} & \text{otherwise.} \end{cases}\] 
\end{definition}

We then assume that the formulae of ${\cal L}^{db}$ do not include any $\rho$-term of the form $\rho_{v}(t|\varphi)$. This does not affect the expressive power of ${\cal L}^{db}$ since every formula $\varphi$ can be translated (under the assumption made in Definition~\ref{multisetOpAsFunc}) into an equivalent formula $\varphi'$ which does not use any $\rho$-term. We can proceed as follows. Let $\rho^1_{v_1}(t_1|\psi_1), \ldots, \rho^n_{v_n}(t_n|\psi_n)$ be the $\rho$-terms which appear in an atomic sub-formula $\alpha$ of $\varphi$. Let $\alpha'$ be the following formula:
\[\forall X_1 \ldots X_n v_1 \ldots v_n \mathtt{z}_1 \ldots \mathtt{z}_n \Big( \bigwedge_{1 \leq i \leq n} \big(X_i(\mathtt{z}_i, v_i) \leftrightarrow (\psi_i' \wedge (t_i = \texttt{z}_i)')\big) \rightarrow \alpha'' \Big),\]
where $X_1, \ldots, X_n, v_1, \ldots, v_n, \mathtt{z}_1, \ldots, \mathtt{z}_n$ are variables which do not appear free in $\alpha$, $\psi_i'$ and $(t_i = \texttt{z}_i)'$ are obtained by recursively applying this procedure to every atomic sub-formula of $\psi_i$ and to $t_i = \mathtt{z}_i$, respectively, and $\alpha''$ is obtained by replacing $\rho^1_{v_1}(t_1|\psi_1), \ldots, \rho^n_{v_n}(t_n|\psi_n)$ in $\alpha$ by $f_{\rho^1}(X_1), \ldots,f_{\rho^n}(X_n)$, respectively.  Then the formula $\varphi'$ can be defined as the formula obtained by replacing every atomic sub-formula $\alpha$ of $\varphi$ by $\alpha'$.

We can now proceed with proving the completeness of ${\cal L}^{db}$. The strategy is to show that ${\cal L}^{db}$ (with $\rho$-terms conveniently replaced by functions as explained above) is a syntactic variant of a complete first-order theory of types. 

Let $\Upsilon = \Upsilon_{db} \cup \Upsilon_a \cup {\cal F}_b \cup \Upsilon_{\Lambda}$ be a signature of Henkin meta-finite structures. Assume w.l.o.g. that $\Upsilon_{db}$, $\Upsilon_a$, ${\cal F}_b$ and $\Upsilon_\Lambda$ are pairwise disjoint. Let $\Upsilon^T$ be the signature formed by:
\begin{itemize}
\item The function symbols of $\Upsilon$. 
\item Unary relations $T_{db}$ and $T_a$.
\item For each $n \geq 1$ a $1$-ary relation symbol $T_n$. 
\item For each $n \geq 1$ a $(n+1)$-ary relation symbol $E_n$. 
\end{itemize} 
$T_{db}(x)$ and $T_a(x)$ are intended to state that $x$ is an individual belonging to the database part and to the algorithmic part, respectively. Likewise, $T_n(x)$ is intended to state that $x$ is a relation of arity $n$. Finally, $T_n(y_1, \ldots, y_n, x)$ is intended to state that the tuple $(y_1, \ldots, y_n)$ belongs to the relation $x$.  

A Henkin meta-finite structure $S$ of signature $\Upsilon$ \emph{determines} a unique first-order structure $S'$ of vocabulary $\Upsilon^T$ as follows:

\begin{itemize}
\item The domain of $S'$ is $\mathit{dom}(S') = B_{db} \cup B_{a} \cup \bigcup_{n \geq 1} D_n$, where $B_{db}$ and $B_a$ denote the base sets of the database and algorithmic parts of $S$, respectively, and $D_n$ denotes the universe of $n$-ary relations of $S$. 
\item The interpretation in $S'$ of the function symbols in $\Upsilon^T \cap (\Upsilon_{db} \cup {\cal F}_b)$ is the same as their interpretation in $S$ for arguments in $\mathit{dom}(S') \cap B_{db}$ and it is extended arbitrarily to arguments in $\mathit{dom}(S') \setminus B_{db}$. Likewise, the interpretation in $S'$ of the function symbols in $\Upsilon^T \cap \Upsilon_{a}$ is the same as their interpretation in $S$ for arguments in $\mathit{dom}(S') \cap B_{a}$ and it is extended arbitrarily to arguments in $\mathit{dom}(S') \setminus B_{a}$.
\item The interpretation in $S'$ of function symbols in $\Upsilon^T \cap \Upsilon_\Lambda$ is as per Definition~\ref{multisetOpAsFunc} for arguments in $D_2$ and it is extended arbitrarily to arguments in $\mathit{dom}(S') \setminus D_2$. 
\item $T_{db}$ is interpreted as $B_{db}$ and $T_a$ as $B_{a}$.
\item For every $n \geq 1$, $T_n$ is interpreted as $D_n$ and $E_n$ as set membership restricted to $n$-tuples.
\end{itemize}

Each ${\cal L}^{db}$-formula $\varphi$ of signature $\Upsilon$ can be rewritten as a first-order formula $\varphi^*$ of signature $\Upsilon^T$, where $\varphi^*$ is obtained from $\varphi$ by applying the following steps:
\begin{enumerate}
\item Replace each atomic formula of the form $\mathrm{upd}(r, X)$ and $\mathrm{upm}(r, X)$ by their corresponding definitions using the Axioms~$\mathbf{U1}$--$\mathbf{U7}$ and~$\mathbf{\ddot{U}1}$--$\mathbf{\ddot{U}7}$, respectively. 
\item Bring all remaining atomic formulae into the form $v_1 = v_2$, $f(v_2) = v_1$ or $X(v_1, \ldots, v_n)$ (where each $v_i$ denotes an appropriate first-order variable $x_i$ or $\mathtt{x}_i$ depending on the context) by applying the following equivalences:
\begin{align*}
s=t & \quad \leftrightarrow \quad \exists v_1 (s = v_1 \wedge t = v_1)\\
X(t_1, \ldots, t_n) & \quad \leftrightarrow \quad \exists v_1 \ldots v_n (t_1 = v_1 \wedge \cdots \wedge t_n = v_n \wedge X(v_1, \ldots, v_n))\\
f(s) = v_1 & \quad \leftrightarrow \quad \exists v_2 (s = v_2 \wedge f(v_2) = v_1) 
\end{align*}
\item Eliminate all modal operators by applying the following equivalences (again where each $v_i$ denotes an appropriate first-order variable $x_i$ or $\mathtt{x}_i$ depending on the context):
\begin{align*}
[X]v_1=v_2 & \quad \leftrightarrow \quad (\mathrm{IsUSet}(X) \wedge \mathrm{conUSet}(X)\rightarrow v_1=v_2) \\ 
[X]Y(v_1, \ldots, v_n) & \quad \leftrightarrow \quad (\mathrm{IsUSet}(X) \wedge \mathrm{conUSet}(X)\rightarrow Y(v_1, \ldots v_n)) \\
[X]f(v_2) = v_1 & \quad \leftrightarrow \quad (\mathrm{IsUSet}(X) \wedge \mathrm{conUSet}(X)\rightarrow \\
& \qquad \qquad X(c_f, v_2, v_1) \vee \forall v_3 (\neg X(c_f, v_2, v_3) \wedge f(v_2) = v_1))\\
[X]\neg \varphi & \quad \leftrightarrow \quad (\mathrm{IsUSet}(X) \wedge \mathrm{conUSet}(X)\rightarrow \neg [X]\varphi) \\
[X](\varphi \vee \psi) & \quad \leftrightarrow \quad ([X]\varphi \vee [X]\psi) \\
[X]\forall v (\varphi)& \quad \leftrightarrow \quad \forall v ([X] \varphi)\\
[X]\forall Y (\varphi)& \quad \leftrightarrow \quad \forall Y ([X] \varphi)
\end{align*}
\item Replace each atomic formula of the form $X(t_1, \ldots, t_n)$ by $E(t_1, \ldots, t_n, X)$, and relativise quantifiers over individuals in $B_{db}$ to $T_{db}$, quantifiers over individuals in $B_a$ to $T_a$, and quantifiers over $n$-ary relations in $D_n$ for some $n \geq 1$ to $T_n$. More precisely, $\varphi^*$ is obtained by the recurrent application of the following rules to the formula $\varphi$ obtained after applying steps $1$--$3$. 
\begin{align*}
(v_1=v_2)^* & \quad = \quad v_1=v_2 \\ 
(Y(v_1, \ldots, v_n))^* & \quad = \quad E(v_1, \ldots, v_n, Y) \\
(f(v_2) = v_1)^* & \quad = \quad f(v_2) = v_1 \\
(\neg \varphi)^* & \quad = \quad \neg (\varphi^*) \\
(\varphi \vee \psi)^* & \quad = \quad (\varphi^* \vee \psi^*) \\
(\forall x (\varphi))^*& \quad = \quad \forall x (T_{db}(x) \rightarrow  \varphi^*)\\
(\forall \mathtt{x} (\varphi))^*& \quad = \quad \forall \mathtt{x} (T_{a}(\mathtt{x}) \rightarrow  \varphi^*)\\
(\forall X (\varphi))^*& \quad = \quad \forall X (T_{n}(X) \rightarrow  \varphi^*) \qquad \text{(if $X$ has arity $n$)}\\
\end{align*}
\end{enumerate}  

It is then relatively easy to prove:

\begin{lemma}\label{translationLemma}
A ${\cal L}^{db}$-formula $\varphi$ is true in $S$ iff $\varphi^*$ is true in $S'$.
\end{lemma}

Thus, if $\varphi^*$ is valid, then $\varphi$ is true in all Henkin meta-finite structures. Note that the converse does not always holds. For example, $\exists \mathtt{x} (\mathtt{x} = \mathtt{x})$ is true in all Henkin meta-finite structure (note that by the Background Postulate $B_a$ is not empty), but $\exists x (T_a(x) \wedge (x = x))$ is not valid. In general, \emph{not every} $\Upsilon^T$-structure is an $S'$ structure for some Henkin meta-finite $\Upsilon$-structure $S$.  Indeed, each $\Upsilon^T$-structure $S'$ which \emph{does} correspond to some Henkin meta-finite structure $S$ satisfies the following properties (cf.~\cite{Leivant94}):
\begin{enumerate}
\item $\Upsilon$-correctness: 
\begin{itemize}
\item $f(x_1) = x_2 \rightarrow T_{db}(x_1) \wedge T_{db}(x_2)$ for every $f \in \Upsilon_{db}$, 
\item $f(x_1) = x_2 \rightarrow T_{db}(x_1) \wedge T_{a}(x_2)$ for every $f \in {\cal F}_b$,
\item $f(x_1) = x_2 \rightarrow T_{a}(x_1) \wedge T_{a}(x_2)$ for every $f \in \Upsilon_{a}$,
\item $f(x_1) = x_2 \rightarrow T_2(x_1) \wedge T_{a}(x_2)$ for every $f \in \Upsilon_{\Lambda}$. 
\end{itemize}
\item Non-emptiness: $\exists x (T_{a}(x))$.
\item Disjointness: $T_i(x) \rightarrow \neg T_j(x)$ for every $i, j \in \mathbb{N} \cup \{a, db\}$ such that  $i \neq j$.
\item Elementhood: $E_n(x_1, \ldots, x_n, y) \rightarrow T_n(y) \wedge (T_{db}(x_1) \vee T_a(x_1)) \wedge \cdots \wedge (T_{db}(x_n) \vee T_a(x_n))$ for every $n \geq 1$.
\item Extensionality: $T_n(x) \wedge T_n(y) \wedge \forall \bar{z}(E_n(\bar{z}, x) \leftrightarrow  E_n(\bar{z}, y)) \rightarrow x = y$ for every $n \geq 1$.
\item Comprehension: $\exists y \forall \bar{x} (E_n(\bar{x}, y) \leftrightarrow \psi)$ for every $n \geq 1$ and $y$ non-free in $\psi$.  
\end{enumerate}
  
\begin{lemma}\label{hasHenkinS}
If $A$ is a first-order structure of signature $\Upsilon^T$ which satisfies properties~1--6 above and $\mathit{sub}(A)$ is the sub-structure of $A$ generated by the elements of $(T_{db})^A \cup (T_a)^A \cup \bigcup_{n \geq 1} (T_n)^A$, then $\mathit{sub}(A) = S'$ for some Henkin meta-finite structure $S$ of signature $\Upsilon$ and corresponding first-order structure $S'$ of signature $\Upsilon^T$ determined by $S$. 
\end{lemma}

\begin{proof}
Given $A$ with domain $\mathit{dom}(A)$, we define $S$ as follows:
\begin{itemize}
\item $B_{db} = (T_{db})^A$ is the base set of the database part of $S$.
\item $B_a = (T_a)^A$ is the base set of the algorithmic part of $S$.
\item For each $n \geq 1$, the universe $D_n$ of $n$-ary relations consists of the sets $\{\bar{a} \in (B_{db} \cup B_a)^n \mid (E_n)^A(\bar{a}, s)\}$ for all $s \in (T_n)^A$. 
\item The interpretation of each function symbol $f \in \Upsilon$ is the same as in $A$ but restricted to arguments from $B_{db}$, $B_a$ or $D_2$ depending on whether $f$ belongs to $\Upsilon_{db}$, $\Upsilon_a \cup {\cal F}_b$ or $\Upsilon_\Lambda$, respectively.
\end{itemize}    
By the $\Upsilon$-correctness, non-emptiness and comprehension properties of $A$, we get that $S$ is a Henkin meta-finite structure. 

We claim that $\mathit{sub}(A)$ is isomorphic to $S'$ via $g: \mathit{dom}(S') \rightarrow \mathit{dom}(\mathit{sub}(A))$ where 
\[g(x) = \begin{cases} x & \text{if } x \in (T_{db})^{S'} \cup (T_{a})^{S'}\\
\{\bar{a} \in ((T_{db})^{S'} \cup (T_a)^{S'})^n \mid (E_n)^{S'}(\bar{a}, x)\} & \text{if } x \in (T_n)^{S'}\end{cases}\]
First, we get that $g$ is well defined by the disjointness property and by the fact that, by definition of $S$ and $S'$, every element $x$ in $\mathit{dom}(S')$ is in $(T_{db})^{S'} \cup (T_{a})^{S'} \cup \bigcup_{n \geq 1} (T_n)^{S'}$. That $g$ is surjective follows from the definition of $S'$ from $A$ and the fact that $\mathit{dom}(\mathit{sub}(A))$ is the restriction of $\mathit{dom}(A)$ to $\mathit{dom}(S')$. By the extensionality property, we get that $g$ is injective. By definition we get that $g$ preserves the function symbols in $\Upsilon$ as well as the relation symbols $T_{db}$, $T_a$ and $T_n$ for every $n \geq 1$. Finally, for every $n \geq 1$, we get that $g$ preserves $E_n$ by the elementhood property.  
\end{proof}

Let $\Psi$ be the set of formulae listed under properties 1--6 above. We then get the following Henkin style completeness theorem.

\begin{theorem}\label{HenkinCompleteness}
A ${\cal L}^{db}$-formula $\varphi$ is true in all Henkin meta-finite structures iff $\varphi^*$ is derivable in first-order logic from $\Psi$ (i.e., iff $\Psi \vdash \varphi^*$).
\end{theorem}

\begin{proof}
Assume that $\Psi \vdash \varphi^*$, and let $S$ be a Henkin meta-finite structure. Then $S' \models \Psi$ and therefore $S' \models \varphi^*$. By Lemma~\ref{translationLemma}, we get that $S \models \varphi$. 

Conversely, assume that $\varphi$ is true in all Henkin meta-finite structures. Towards showing $\Psi \models \varphi^*$, let us assume that $A \models \Psi$, and let $\mathit{sub}(A)$ be its substructure generated by the elements of $(T_{db})^A \cup (T_a)^A \cup \bigcup_{n \geq 1} (T_n)^A$. Then by Lemma~\ref{hasHenkinS}, $\mathit{sub}(A) = S'$ for some first-order structure $S'$ determined by a Henkin meta-finite structure $S$. Since by assumption we have that $S \models \varphi$, it follows from Lemma~\ref{translationLemma} that $S' \models \varphi^*$ and therefore $\mathit{sub}(A) \models \varphi^*$. But each quantifier in $\varphi^*$ is relativised to $(T_{db})^A$, $(T_a)^A$ or $(T_n)^A$ for some $n\geq 1$, and then we also have that $A \models \varphi^*$.  We have shown that $\Psi \models \varphi^*$, and then, by the completeness theorem of first-order logic, we get that $\Psi \vdash \varphi^*$.
\end{proof}

We know from Theorem~\ref{c5-theoremsoundness} that the deductive calculus $L_2$ introduced in Section~\ref{sub:AxiomsRules} is sound. Thus, if $\varphi$ is a ${\cal L}^{db}$-formula derivable in $L_2$, then $\varphi$ is true in all Henkin meta-finite structures. It is then immediate from Theorem~\ref{HenkinCompleteness} that $\varphi^*$ is derivable in first-order logic from $\Psi$. On the other hand, it can be proven by an easy but lengthy induction on the length of the derivations that if $\varphi^*$ is derivable in first-order from $\Psi$, then $\varphi$ is derivable in $L_2$. 

\begin{lemma}\label{derivabilityLemma}
$\varphi^*$ is derivable in first-order from $\Psi$ iff $\varphi$ is derivable in $L_2$.
\end{lemma}

Finally, Theorem~\ref{HenkinCompleteness} and Lemma~\ref{derivabilityLemma} immediately imply that the logic ${\cal L}^{db}$ is complete to reason about DB-ASMs.  

\begin{theorem} \label{thm-adtm-completeness-FOL}
Let $\varphi$ be a ${\cal L}^{db}$-formula and $\Phi$ be a set of ${\cal L}^{db}$-formulae. If $\Phi\models\varphi$, then $\Phi\vdash_{L_2}\varphi$.
\end{theorem}

\section{Conclusions}\label{sec:conclusion}

This article presents a logic for DB-ASMs. In accordance with the
result that DB-ASMs and database transformations are behaviourally
equivalent, it thus represents a logical characterisation for
database transformations in general.

The logic for DB-ASMs is built upon the logic of
  meta-finite structures. The formalisation of
  multiset operations is captured by the notion of $\rho$-term.
The use of $\rho$-terms greatly enhances the
  expressive power of the logic for DB-ASMs since aggregate computing
  in database applications can be easily expressed by using $\rho$-terms.
On the other hand, $\rho$-terms can easily lead to incompleteness if we try to axiomatize them in the proof system. We avoid this problem by considering them as non-interpreted functions. In this way, we cannot reason about properties of $\rho$-terms themselves, but we can still use them in the formulae of our complete proof system to express meaningful properties of DB-ASMs.

As discussed in~\cite{RobertLogicASM} and~\cite{[BS03]}, the non-determinism accompanied with the use of
  choice rules poses a further challenging problem. In this work, we realized that the update sets produced by non-deterministic DB-ASMs rules are 
  definable in a variant of second-order logic in which the second-order quantifiers are interpreted using a Henkin semantics, thus becoming part of the specification of a model rather than an invariant through all models as in the case of the classical second-order semantics. 
Base on these definitions, we use the  modal operator $[X]$ where $X$ is a second-order variable that represents an update set $U$ generated by a
 (possibly non-deterministic) DB-ASM rule $r$. By introducing $[X]$ into the logic for DB-ASMs, it is shown that nondeterministic database
  transformations can also be captured.

The use of a Henkin semantics in the definition of the logic ${\cal L}^{db}$ for DB-ASMs allowed us to show that ${\cal L}^{db}$ is actually a syntactic variant of a complete first-order theory of types. In turn, this allowed us to establish a sound and complete proof system for the logic for DB-ASMs, which can
be turned into a tool for reasoning about database transformations. However, this is restricted to reasoning about steps, not full runs, but no complete logic for reasoning about runs can be expected.
In the future we will continue to investigate how the logic for
DB-ASMs can be tailored towards different classes of database
transformations such as XML data transformations and used for
verifying the properties of database transformations in practice.

\bibliographystyle{plain}
\bibliography{DBTsLogic}

\begin{thebibliography}{10}

\bibitem{AbiteboulIQL89}
Serge Abiteboul and Paris~C. Kanellakis.
\newblock Object identity as a query language primitive.
\newblock In {\em Proceedings of the 1989 ACM SIGMOD international conference
  on Management of data}, pages 159--173. ACM Press, 1989.

\bibitem{AbiteboulUpdate87}
Serge Abiteboul and Victor Vianu.
\newblock A transcation language complete for database update and
  specification.
\newblock In Moshe~Y. Vardi, editor, {\em Proceedings of the Sixth ACM
  SIGACT-SIGMOD-SIGART Symposium on Principles of Database Systems}, pages
  260--268. ACM, 1987.

\bibitem{AbiteboulDatalogExtension}
Serge Abiteboul and Victor Vianu.
\newblock Datalog extensions for database queries and updates.
\newblock {\em J. Comput. Syst. Sci.}, 43(1):62--124, 1991.

\bibitem{beeri:fomlado1998}
Catriel Beeri and Bernhard Thalheim.
\newblock Identification as a primitive of data models.
\newblock In Torsten Polle, Torsten Ripke, and Klaus-Dieter Schewe, editors,
  {\em Fundamentals of Information Systems}, pages 19--36. Kluwer Academic
  Publishers, Boston Dordrecht London, 1999.

\bibitem{blass:tocl2003}
Andreas Blass and Yuri Gurevich.
\newblock Abstract state machines capture parallel algorithms.
\newblock {\em ACM Transactions on Computational Logic}, 4(4):578--651, October
  2003.

\bibitem{GurevichParallelCorrection08}
Andreas Blass and Yuri Gurevich.
\newblock Abstract state machines capture parallel algorithms: Correction and
  extension.
\newblock {\em ACM Transactions on Computation Logic}, 9(3):1--32, 06 2008.

\bibitem{Yuri02onpolynomial}
Andreas Blass, Yuri Gurevich, and Saharon Shelah.
\newblock On polynomial time computation over unordered structures.
\newblock {\em The Journal of Symbolic Logic}, 67(3):1093--1125, September
  2002.

\bibitem{Survey98}
Anthony~J. Bonner and Michael Kifer.
\newblock The state of change: A survey.
\newblock In {\em International Seminar on Logic Databases and the Meaning of
  Change, Transactions and Change in Logic Databases}, pages 1--36.
  Springer-Verlag, 1998.

\bibitem{boerger:2003}
E.~B{\"o}rger and Robert~F. St{\"a}rk.
\newblock {\em Abstract State Machines: A Method for High-Level System Design
  and Analysis}.
\newblock Springer-Verlag New York, Inc., 2003.

\bibitem{[BS03]}
Egon B{\"{o}}rger and Robert~F. St{\"{a}}rk.
\newblock {\em Abstract State Machines. {A} Method for High-Level System Design
  and Analysis}.
\newblock Springer, 2003.

\bibitem{CaiIFP+C89}
J.-Y. Cai, M.~F\"urer, and N.~Immerman.
\newblock An optimal lower bound on the number of variables for graph
  identification.
\newblock In {\em Proceedings of the 30th Annual Symposium on Foundations of
  Computer Science}, pages 612--617. IEEE Computer Society, 1989.

\bibitem{ChandraCompleteness}
Ashok~K. Chandra and David Harel.
\newblock Computable queries for relational data bases.
\newblock {\em Journal of Computer and System Sciences}, 21(2):156--178, 1980.

\bibitem{FenselMLPM96}
D~Fensel and R~Groenboom.
\newblock {MLPM}: Defining a semantics and axiomatization for specifying the
  reasoning process of knowledge-based systems.
\newblock In {\em Proceedings of the 12th European Conference on Artificial
  Intelligence (ECAI-96)}, Budapest, Hungary, 1996.

\bibitem{FerrarottiRT14}
Flavio Ferrarotti, Wei Ren, and Jose Maria~Turull Torres.
\newblock Expressing properties in second- and third-order logic: hypercube
  graphs and {SATQBF}.
\newblock {\em Logic Journal of the {IGPL}}, 22(2):355--386, 2014.

\bibitem{FerrarottiSTW16}
Flavio Ferrarotti, Klaus{-}Dieter Schewe, Loredana Tec, and Qing Wang.
\newblock A new thesis concerning synchronised parallel computing - simplified
  parallel {ASM} thesis.
\newblock {\em Theor. Comput. Sci.}, 649:25--53, 2016.

\bibitem{graedel:infcomp1998}
E.~Gr{\"a}del and Y.~Gurevich.
\newblock {Metafinite model theory}.
\newblock {\em Information and Computation}, 140(1):26--81, 1998.

\bibitem{GrCounting}
Erich Gr\"{a}del and Martin Otto.
\newblock Inductive definability with counting on finite structures.
\newblock In {\em Selected Papers from the Workshop on Computer Science Logic},
  pages 231--247. Springer-Verlag, 1993.

\bibitem{GroenboomMLCM94}
R.~Groenboom and G.~{Renardel de Lavalette}.
\newblock Reasoning about dynamic features in specification languages - a modal
  view on creation and modification.
\newblock In {\em Proceedings of the International Workshop on Semantics of
  Specification Languages (SoSL)}, pages 340--355. Springer-Verlag, 1994.

\bibitem{GroenboomFLEA95}
R.~Groenboom and G.~{Renardel de Lavalette}.
\newblock A formalization of evolving algebras.
\newblock In {\em Proceedings of Accolade95}. Dutch Research School in Logic,
  1995.

\bibitem{Gurevich-New-Thesis}
Yuri Gurevich.
\newblock A new thesis (abstracts).
\newblock {\em American Mathematical Society}, 6(4):317, August 1985.

\bibitem{gurevich:tocl2000}
Yuri Gurevich.
\newblock Sequential abstract state machines capture sequential algorithms.
\newblock {\em ACM Transactions on Computational Logic}, 1(1):77--111, July
  2000.

\bibitem{gurevich2004abstract}
Yuri Gurevich.
\newblock Abstract state machines: An overview of the project.
\newblock In {\em International Symposium on Foundations of Information and
  Knowledge Systems}, pages 6--13, 2004.

\bibitem{HellaAggregateLogics}
Lauri Hella, Leonid Libkin, Juha Nurmonen, and Limsoon Wong.
\newblock Logics with aggregate operators.
\newblock {\em Journal of the ACM}, 48(4):880--907, 2001.

\bibitem{henkin1950}
Leon Henkin.
\newblock Completeness in the theory of types.
\newblock {\em J. Symbolic Logic}, 15(2):81--91, 06 1950.

\bibitem{hughes:modallogic1996}
G.E. Hughes and MJ~Cresswell.
\newblock {\em {A new introduction to modal logic}}.
\newblock Burns \& Oates, 1996.

\bibitem{Immerman87}
N.~Immerman.
\newblock Expressibility as a complexity measure: Results and directions.
\newblock In {\em Proceedings of Second Conference on Structure in Complexity
  Theory}, pages 194--202, 1987.

\bibitem{Leivant94}
Daniel Leivant.
\newblock Higher order logic.
\newblock In Dov~M. Gabbay, Christopher~J. Hogger, J.~A. Robinson, and
  J{\"{o}}rg~H. Siekmann, editors, {\em Handbook of Logic in Artificial
  Intelligence and Logic Programming, Volume2, Deduction Methodologies}, pages
  229--322. Oxford University Press, 1994.

\bibitem{Nanchen07}
Stanislas Nanchen.
\newblock {\em Verifying abstract state machines}.
\newblock PhD thesis, ETH Z\"urich, 2007.

\bibitem{Otto95b}
M.~Otto.
\newblock {\em Bounded variable logics and counting -- {A} study in finite
  models}, volume~9.
\newblock Springer-Verlag, 1997.

\bibitem{Otto96}
Martin Otto.
\newblock The expressive power of fixed-point logic with counting.
\newblock {\em Journal of Symbolic Logic}, 61:147--176, 1996.

\bibitem{LavalettelogicMCL01}
G.~{Renardel de Lavalette}.
\newblock A logic of modification and creation.
\newblock In {\em Logical Perspectives on Language and Information. CSLI
  publications}, 2001.

\bibitem{schewe:actacyb1993}
Klaus-Dieter Schewe and Bernhard Thalheim.
\newblock Fundamental concepts of object oriented databases.
\newblock {\em Acta Cybernetica}, 11(1-2):49--84, 1993.

\bibitem{schewe:Axiomatization}
Klaus-Dieter Schewe and Qing Wang.
\newblock A customised {ASM} thesis for database transformations.
\newblock {\em Acta Cybernetica}, 19(4):765--805, 2010.

\bibitem{Schoenegge95}
A.~Sch\"onegge.
\newblock {Extending Dynamic Logic for Reasoning about Evolving Algebras}.
\newblock Technical Report 49/95, Universit{\"a}t Karlsruhe, Fakult{\"a}t
  f{\"u}r Informatik, 1995.

\bibitem{SpruitPhDThesis94}
P.A. Spruit.
\newblock {\em Logics of Database Updates}.
\newblock PhD thesis, Faculty of Mathematics and Computer Science, Vrije
  Universiteit, Amsterdam, 1994.

\bibitem{SpruitDDL92}
P.A. Spruit, R.J. Wieringa, and J.-J.Ch. Meyer.
\newblock Dynamic database logic: The first-order case.
\newblock In U.W. Lipeck and B.~Thalheim, editors, {\em Modelling Database
  Dynamics}, pages 103--120. Springer, 1993.

\bibitem{SpruitPDDL95}
Paul Spruit, Roel Wieringa, and John-Jules Meijer.
\newblock Axiomatization, declarative semantics and operational semantics of
  passive and active updates in logic databases.
\newblock {\em Journal of Logic and Computation}, 5:27--50, 1995.

\bibitem{SpruitFUL01}
Paul Spruit, Roel Wieringa, and John-Jules Meyer.
\newblock Regular database update logics.
\newblock {\em Theoretical Computer Science}, 254(1-2):591--661, 2001.

\bibitem{RobertLogicASM}
Robert St\"ark and Stanislas Nanchen.
\newblock A logic for abstract state machines.
\newblock {\em Journal of Universal Computer Science}, 7(11), 2001.

\bibitem{Torres01}
Jose~Maria {Turull Torres}.
\newblock On the expressibility and the computability of untyped queries.
\newblock {\em Annals of Pure and Applied Logic}, 108(1-3):345--371, 2001.

\bibitem{Torres02}
Jose~Maria {Turull Torres}.
\newblock Relational databases and homogeneity in logics with counting.
\newblock {\em Acta Cybernetica}, 17(3):485--511, 2006.

\bibitem{VandenBusscheThesis}
J.~{Van den Bussche}.
\newblock {\em Formal Aspects of Object Identity in Database Manipulation}.
\newblock PhD thesis, University of Antwerp, 1993.

\bibitem{VandenBusscheNondeterministic}
Jan {Van den Bussche} and Dirk {Van Gucht}.
\newblock Non-deterministic aspects of object-creating database
  transformations.
\newblock In {\em Selected Papers from the Fourth International Workshop on
  Foundations of Models and Languages for Data and Objects}, pages 3--16.
  Springer-Verlag, 1993.

\bibitem{VandenBusscheSemideterminism}
Jan {van den Bussche} and Dirk {van Gucht}.
\newblock A semideterministic approach to object creation and nondeterminism in
  database queries.
\newblock {\em J. Comput. Syst. Sci.}, 54(1):34--47, 1997.

\bibitem{VandenBusscheCompleteness}
Jan {Van Den Bussche}, Dirk {Van Gucht}, Marc Andries, and Marc Gyssens.
\newblock On the completeness of object-creating database transformation
  languages.
\newblock {\em Journal of the ACM}, 44(2):272--319, 1997.

\bibitem{LogicSurvey01}
Pascal van Eck, Joeri Engelfriet, Dieter Fensel, Frank van Harmelen, Yde
  Venema, and Mark Willems.
\newblock A survey of languages for specifying dynamics: A knowledge
  engineering perspective.
\newblock {\em IEEE Transactions on Knowledge and Data Engineering},
  13(3):462--496, 2001.

\bibitem{DBLP:wangphdbook}
Qing Wang.
\newblock {\em Logical Foundations of Database Transformations for
  Complex-Value Databases}.
\newblock Berlin, Germany: Logos-Verlag, 2010.

\end{thebibliography}

\end{document}